\let\siamproof\proof
\renewcommand{\proof}{\@ifnextchar[{\myproof}{\siamproof}}
\def\myproof[#1]{\par{\color{black}{\mbox{\color{header1}{\it #1}.$\,$}\color{black}}} \ignorespaces\color{black}}
  \newtheorem{lem}{\protect\lemmaname}
  \newtheorem{thm}{\protect\theoremname}
  \newtheorem{cor}{\protect\corollaryname}
  \newtheorem{rem}{\protect\remarkname}
  \newtheorem{defn}{\protect\definitionname}
  \newtheorem{prop}{\protect\propositionname}
  \providecommand{\definitionname}{Definition}
  \providecommand{\lemmaname}{Lemma}
  \providecommand{\propositionname}{Proposition}
  \providecommand{\remarkname}{Remark}
  \providecommand{\corollaryname}{Corollary}
  \providecommand{\theoremname}{Theorem}
\title{Lifting for Blind Deconvolution in Random Mask Imaging: Identifiability
and Convex Relaxation\thanks{ This work was supported by ONR grant N00014-11-1-0459, and NSF grants CCF-1415498 and CCF-1422540.}}
\author{Sohail~Bahmani\thanks{School of Electrical and Computer Engineering, Georgia Institute of Technology, Atlanta, GA, 30332-0250 USA. (\email{sohail.bahmani@ece.gatech.edu, jrom@ece.gatech.edu})} \and Justin~Romberg\footnotemark[2]}
\begin{document}
\global\long\def\mb#1{\boldsymbol{#1}}
\global\long\def\mbb#1{\mathbb{#1}}
\global\long\def\mc#1{\mathcal{#1}}
\global\long\def\mcc#1{\mathscr{#1}}
\global\long\def\mr#1{\mathrm{#1}}
\global\long\def\msf#1{\mathsf{#1}}

\maketitle

\begin{abstract}
In this paper we analyze the blind deconvolution of an image and
an unknown blur in a coded imaging system.  The measurements
consist of subsampled convolution of an unknown blurring kernel
with multiple random binary modulations (coded masks) of the image.
To perform the deconvolution, we consider a standard lifting  of the
image and the blurring kernel that transforms the measurements into
a set of linear equations of the matrix formed by their outer product.
Any rank-one solution to this system of equation provides a valid pair
of an image and a blur.

We first express the necessary and sufficient conditions for the
uniqueness of a rank-one solution under some additional assumptions
(uniform subsampling and no limit on the number of coded masks).
These conditions are special case of a previously established result
regarding identifiability in the matrix completion problem.  We also
characterize a low-dimensional subspace model for the blur kernel
that is sufficient to guarantee identifiability, including the
interesting instance of ``bandpass'' blur kernels.

Next, assuming the bandpass model for the blur kernel, we show that the
image and the blur kernel can be found using nuclear norm minimization.
Our main results show that recovery is achieved (with high probability)
when the number of masks is on the order of $\mu\log^{2}L\,\log\frac{Le}{\mu}\,\log\log\left(N+1\right)$
where $\mu$ is the \emph{coherence} of the blur, $L$ is the dimension
of the image, and $N$ is the number of measured samples per mask.
\end{abstract}
\begin{keywords}
blind deconvolution, coded mask imaging, lifting, nuclear norm minimization
\end{keywords}
\section{\label{sec:Introduction}Introduction}

The blind deconvolution problem has been encountered in many fields
including astronomical, microscopic, and medical imaging, computational
photography, and wireless communications. Many blind deconvolution
techniques, that are mostly tailored for particular applications,
have been proposed in these communities. These techniques can be divided
into two categories based on their general formulation of the problem.
The methods of the first category typically reduce the blind deconvolution
problem to a regularized least squares problem without imposing stochastic
models on either of the convolved signals. High computational cost
and sensitivity to noise are the main challenges for these methods.
The second category of blind deconvolution methods follow a Bayesian
approach and consider prior distributions for either or both of the
signals. An extensive review of the classic blind deconvolution methods
in imaging can be found in \cite{campisi_blind_2007}. A survey of
the multichannel blind deconvolution methods used in communications
can be found in \cite{tong_multichannel_1998} as well.

\subsection{Contributions}

In recent years, various imaging architectures have been proposed
that are based on randomly coded apertures. For example, in \cite{raskar_coded_2006},
rapid switching of exposure of cameras in random patterns is leveraged for motion deblurring. 
Furthermore, in the ``single-pixel-camera''
\cite{duarte_single-pixel_2008} and compressive hyperspectral imaging
systems \cite{sun_compressive_2009,kittle_multiframe_2010,li_compresive_2012,rajwade_coded_2013},
randomly coded masks are utilized to radically reduce the cost associated
with spatial or spectral sampling in traditional imaging systems.
In this paper, we consider the blind deconvolution problem in a similar
imaging architecture that relies on randomly coded masks. As illustrated
in Figure \ref{fig:MaskedImaging}, the considered imaging system
first generates different modulated copies of the target image, each
of which is then blurred by a fixed unknown filter (i.e., the blurring
kernel), and finally subsampled by a relatively small number of sensors.
 Throughout the paper, we only consider a
uniform subsampling operator in our model. This idealized
subsampling operator corresponds to an array of single pixel sensors
with uniform spacing. However, a broader class of subsampling schemes
can be treated in a similar fashion. For example, a more realistic
model for a single sensor is a weighted integrator of a few neighboring
pixels. Spatially uniform subsampling with an array of these types
of sensors can be reduced to the pointwise subsampling by absorbing
the weight window into the blur filter. We believe that our analysis
can be extended to even broader class of subsampling schemes, but
we do not pursue these extensions in this paper. 
 
As will be discussed in Section \ref{sec:MainResults}, in the absence
of any model for the blurring kernel the subsampling operation renders
the unique recovery of the image impossible, regardless of the number
of measurements acquired. Intuitively, there are two competing requirements for the blurring kernel in the considered imaging system. First, it is necessary that the image is blurred and spread to the extent that the subsampling sensors do not miss any pixel of the image. As will be seen, this requirement can be satisfied by considering a subspace model for the blurring kernel. Second, the blurring should not be excessive, otherwise the sampling becomes inefficient as different sensors collect (nearly) identical measurements. As will be seen in Section \ref{ssec:ConvexBD}, the severity of the blur can be quantified by the \emph{coherence} of the blurring kernel. Specifically, our results show that the coherence critically affects the sufficient number of masks for successful reconstruction through convex optimization.

We first study identifiability
of the problem without restricting the number of measurements. In
Section \ref{ssec:Identifiability}, using the results of \cite{kiraly_combinatorial_2012}
for matrix completion we express a necessary and sufficient condition
for identifiability which has a combinatorial nature.  Often, optical models can provide us with some crude approximations of the blurring kernel that can be leveraged as prior information. For example, we can conveniently incorporate these prior information in a subspace model where the linear span of the available approximations is considered as the set of feasible blurring kernels. Our second and more concrete identifiability result is obtained by considering a low-dimensional subspace model for the blurring kernel. We show that the described blind deconvolution problem is identifiable, if the mentioned low-dimensional subspace obeys certain conditions. In particular, our results show that if the blurring kernel has a sufficiently narrow ``bandwidth'' then the desired condition holds and thus we can uniquely identify the image and the blurring kernel. 

In the second part of our work, we show that, under a ``bandpass''
blur model, we can perform the blind deconvolution through lifting
and nuclear norm minimization. This systematic approach applies not only to our blind deconvolution problem, but also to a variety of other \emph{bilinear inverse problems} that involve unknown linear operators. The theoretical guarantees, which are explained in Section \ref{ssec:ConvexBD},
rely on construction of a \emph{dual certificate} for the nuclear norm minimization problem via the \emph{golfing scheme} \cite{gross-recovering-2011}.
Furthermore, the concentration inequalities recently developed in
the field of random matrix theory are frequently used throughout the
derivations. Finally, while we state our results under the bandpass
modelling of the filter, with some effort similar results can be established
for the more general subspaces described by the identifiability sufficient
conditions.

\subsection{Related work}
In \cite{candes_phase_2013}, the \emph{PhaseLift} method \cite{candes-phaselift-2013}
is extended to address the \emph{phase retrieval} problem in coded
diffraction imaging, where the measurements have a more intricate
structure. It is shown that the trace minimization (i.e., PhaseLift)
can solve the phase retrieval problem, if the randomly coded masks
follow certain ``admissible'' distributions and their number is
poly-logarithmic in the ambient dimension. The use of coded masks
in the phase retrieval problem of \cite{candes_phase_2013} is effectively
similar to that in the blind deconvolution problem we address in this
paper. However, the measurement model in this paper is different from
that of \cite{candes_phase_2013}.

In \cite{ahmed_blind_2014}, a convex programming technique
is proposed for blind deconvolution, where by \emph{lifting} the signal
and the filter to their outer product, the problem is cast as reconstruction
of a rank-one matrix from a set of linear measurements. It is shown
in \cite{ahmed_blind_2014} that the \emph{nuclear norm minimization}
can robustly and accurately recover the rank-one solution to the convolution
equations. This blind deconvolution technique imposes certain low-dimensional
subspace structures on the input and channel to reach a well-posed
problem. 

More recently, \cite{tang_convex_2014} has examined the problem of
blind deconvolution in an imaging system similar to what considered
in this paper. It is shown in \cite{tang_convex_2014} that one can
recover the image and the blurring kernel through lifting and nuclear norm
minimization, provided that the number of applied masks is greater
that the \emph{coherence} of the blurring kernel by a poly-logarithmic
factor of the image length. The fact that we consider the effect of
subsampling makes the imaging model considered
in this paper more general than that of \cite{tang_convex_2014}.
In the special case that subsampling is not applied, our problem reduces
to that of \cite{tang_convex_2014}. In this regime, the sufficient number of masks obtained here and in
\cite{tang_convex_2014} have similar growth order. Only for blurring
kernels that have more uniform spectrum (i.e., have low coherence),
our bounds can be slightly worse up to a double logarithmic factor
of the image length. We believe that our bounds can be further improved
using sharper inequalities throughout the derivations, but we do not
attempt to find the optimal logarithmic factors.

\subsection{Notation}

Throughout this paper we use the following notation. Matrices and
vectors are denoted by bold capital and small letters, respectively. A superscipt asterisk denotes the Hermitian transpose of matrices
and vectors (e.g., $\mb X^{*}$, $\mb x^{*}$). More generally, the
same symbol denotes the adjoint of linear operators (e.g., $\mc A^{*}$).
Restriction of a matrix $\mb X$ to the columns enumerated by an index
set $\msf J$ is denoted by $\mb X_{\msf J}$. For a vector $\mb x$,
we write $\left.\mb x\right|_{\msf J}$ to denote its restriction to the entries
indicated by the index set $\msf J$. Real and imaginary parts of
complex variables are denoted by preceding symbols $\Re$ and $\Im$,
respectively. Scalar functions applied to vectors or matrices act entrywise. Nullspace and range of linear operators are denoted
by $\mr{null}\left(\cdot\right)$ and $\mr{range}\left(\cdot\right)$,
respectively. Hadamard product (i.e., entrywise product) operation
on two matrices or vectors is denoted by $\odot$ symbol. Entrywise
conjugate of a matrix (or vector) is denoted by putting a bar above
the variable (e.g., $\overline{\mb X}$ is the entrywise conjugate
of $\mb X$). We frequently use the normalized Discrete Fourier Transform
(DFT) matrix which is denoted by $\mb F$ whose size should be clear
from the context. Furthermore, $\mb F_{n}$ is used to denote for
the restriction of $\mb F$ to its first $n$ rows. Moreover, the
$l$-th column of $\mb F^{*}$ is denoted by $\mb f_{l}$. The DFT
of a vector is denoted by the same name with a hat sign atop (e.g.,
for $\mb x\in\mbb C^{L}$, the DFT of $\mb x$ is denoted by $\widehat{\mb x}=\sqrt{L}\mb F\mb x$).The
diagonal matrix whose diagonal entries form a vector $\mb x$ is denoted
by $\mb D_{\mb x}$. Furthermore, the matrix of diagonal entries of
a square matrix $\mb X$ is denoted by $\mr{diag}\left(\mb X\right)$.
The vector norms $\left\Vert \cdot\right\Vert _{p}$ for $p\geq1$
are the standard $\ell_{p}$-norms. The spectral norm, the Frobenius
norm, and the nuclear norm are denoted by $\left\Vert \cdot\right\Vert $,
$\left\Vert \cdot\right\Vert _{F}$, and $\left\Vert \cdot\right\Vert _{*}$,
respectively. We find it convenient to use the expression $f\overset{\beta}{\gtrsim}g$
(or $f\overset{\beta}{\lesssim}g$) as a shorthand for inequalities
of the form $f\geq c_{\beta}g$ (or $c_{\beta}f\leq g$), where $c_{\beta}>0$
is some absolute constant that depends only a parameter $\beta$.
We drop the superscript in this notation whenever the constant factors
do not depend on any parameter.

\section{\label{sec:ProblemSetup}Problem setup}

\begin{figure} 
\noindent 
\centering
\includegraphics[height=0.33\textheight]{./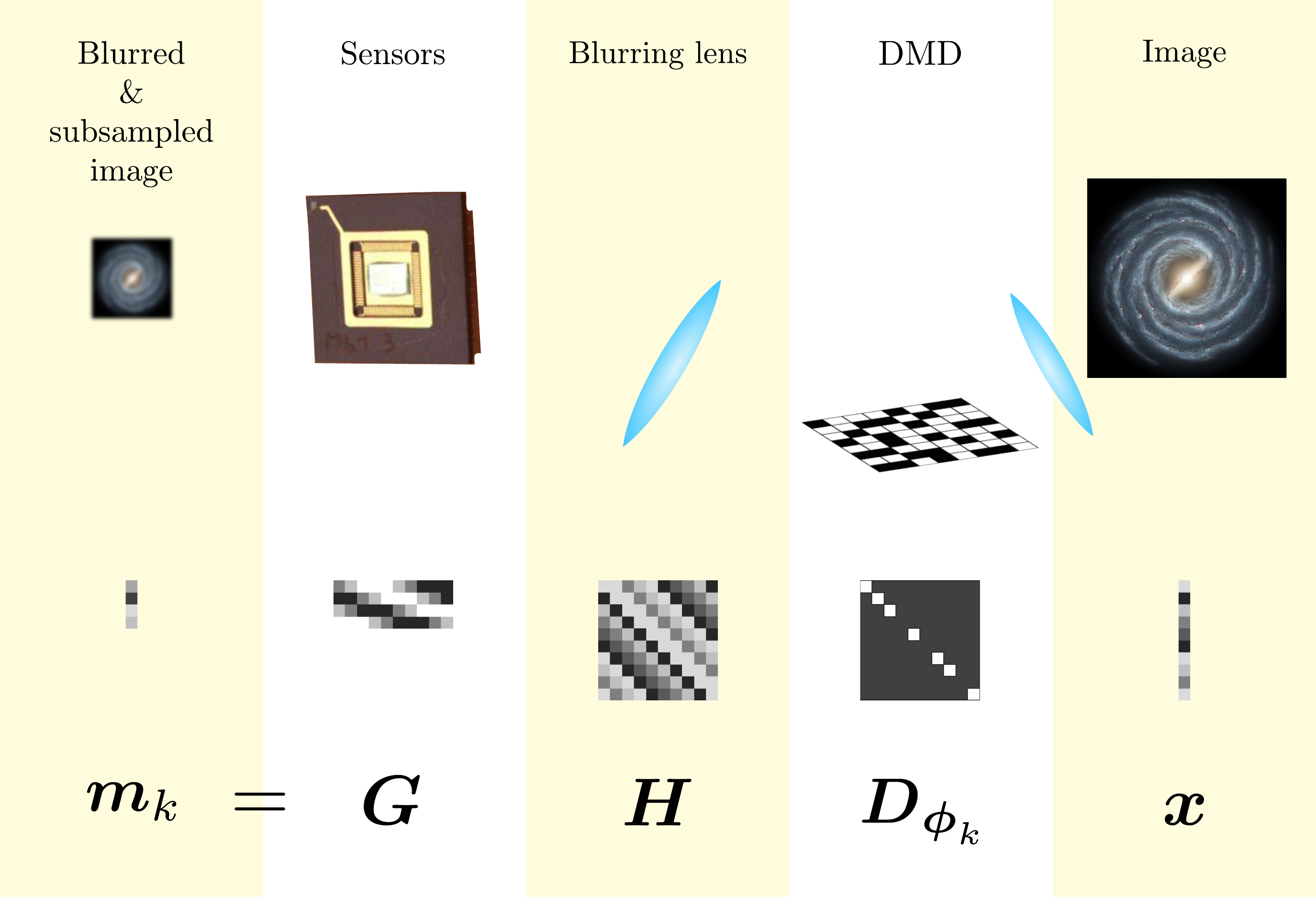}
\protect\caption{Schematic of the masked imaging system. The reflection of the target image from a DMD with a random pattern is blurred by a lens and then subsampled by a few sensors. The result is one set of measurements that correspond to the chosen DMD pattern.}
\label{fig:MaskedImaging}
\end{figure}

We consider a blind deconvolution problem in an imaging system depicted by Figure \ref{fig:MaskedImaging} that involves subsampling.
To simplify the exposition, through out the paper we consider 1D blind deconvolution, but generalization to 2D models is straightforward.
In our model, multiple binary masks $\mb{\phi}_{k}$ ($k=1,2,\dotsc,K$)
are applied to an image represented by $\mb x\in\mbb R^{L}$ one at
a time by the means of a Digital Micromirror Device (DMD). The masked reflection of the image
 from the DMD is then blurred through a secondary lens
represented by a filter $\mb h\in\mbb R^{L}$. We model the action
of the filter on its input by a circular convolution. The masked and
blurred image is then subsampled using $N\leq L$ sensors. Mathematically, the
described system can be represented by the equations 
\begin{align*}
\mb m_{k} & =\mb G\mb H\mb D_{\mb{\phi}_{k}}\mb x,
\end{align*}
 where $\mb H$ is a circulant matrix whose first column is $\mb h$
which models the blurring lens, $\mb G$ is an $N\times L$ matrix
representing the (linear) subsampling operator, and $\mb m_{k}$ is
the $k$-th measurement, corresponding to the $k$-th mask $\mb{\phi}_{k}$.
In this paper, we exclusively consider uniform pointwise subsampling as our
subsampling operator $\mb G$. Note that the above equation for all
of the measurements can be written compactly as 
\begin{align}
\mb M & =\mb G\mb H\mb D_{\mb x}\mb{\Phi},\label{eq:BlindDeconvolution}
\end{align}
where $\mb{\Phi}=\left[\arraycolsep=0.5ex\begin{array}{cccc}
\mb{\phi}_{1} & \mb{\phi}_{2} & \dotsm & \mb{\phi}_{K}\end{array}\right]$ is the matrix of masks and $\mb M=\left[\arraycolsep=0.5ex\begin{array}{cccc}
\mb m_{1} & \mb m_{2} & \dotsm & \mb m_{K}\end{array}\right]$ is the matrix of measurements. 

Accurate estimates of the blurring kernel $\mb h$ might not be available
in practice. For example, random vibrations of the lens and the subject relative to each other, turbulence of the propagation medium, or errors in measuring the focal length of the lens can preclude accurate estimation of the blur. Therefore, it is highly desirable to perform a blind deconvolution for reconstruction of both the image and the blurring
kernel from the measurements of the form (\ref{eq:BlindDeconvolution})
up to the global scaling ambiguity. 

Because $\mb G$ is in general a wide matrix, recovery of $\mb h$
and $\mb x$ (up to a scaling factor) can be ill-posed even with an
unlimited number of masks. Therefore, it is worthwhile to study the
identifiability of our inverse problem under the assumption $\mb{\Phi=\mb I}$.
In Section \ref{ssec:Identifiability}, we elaborate on the conditions
under which we can guarantee identifiability.

In Section \ref{ssec:ConvexBD} we introduce a convex program as a
systematic method for the blind deconvolution. To analyze this method,
we assume that the blurring kernel follows a ``bandpass'' model
that was suggested by the sufficient identifiability conditions. In
particular, in Section \ref{ssec:ConvexBD} we assume that $\mb h=\frac{1}{\sqrt{L}}\mb F_{N}\widehat{\mb h}$
is the blurring kernel for some $N$-dimensional vector $\widehat{\mb h}$.
Furthermore, to have a realistic model of the system, we assume that
the number of masks is limited and should be relatively small. While
ideal binary masks are $\left\{ 0,1\right\} $-valued, for technical
reasons we consider the elements of $\mb{\Phi}$ to be iid Rademacher
random variables that take values in $\left\{ \pm1\right\} $ with
equal probability. Note that this assumption is not unrealistic as
the $\left\{ 0,1\right\} $-valued masks can be converted to $\left\{ \pm1\right\} $-valued
masks by using an extra all-one mask.

\section{\label{sec:MainResults}Main results}

\subsection{\label{ssec:Identifiability}Identifiability without measurement
limitations}

In this section we analyze the identifiability of the image and blurring
kernel when arbitrarily large number of measurements are available.
Therefore, we can assume that $\mb{\Phi}$ is full-rank and has at
least as many columns as rows. This assumption implies that we can
reduce our observation model to 
\begin{align}
\mb M & =\mb G\mb H\mb D_{\mb x},\label{eq:Identifiability}
\end{align}
which is equivalent to \eqref{eq:BlindDeconvolution} for $\mb{\Phi=\mb I}$.
As mentioned in Section \ref{sec:ProblemSetup}, the subsampling matrix
$\mb G$ is assumed to model a uniform pointwise subsampling.
Therefore, each row of $\mb G$ is zero except at one entry where
it is one. This implies that each entry of the matrix of observations,
$\mb M$, can be expressed as $x_{i}h_{j}$ for certain indices $i$
and $j$. It is necessary to assume that the columns of $\mb G\mb H$
are all non-zero to ensure the information of every pixel of the image
is retained. Moreover, the observations can also be written as 
\begin{align}
\mr{vec}\left(\mb M\right) & =\left[\arraycolsep=0.5ex\begin{array}{cccc}
x_{1}\mb G_{1}^{\mr T} & x_{2}\mb G_{2}^{\mr T} & \dotsc & x_{L}\mb G_{L}^{\mr T}\end{array}\right]^{\mr T}\mb h,\label{eq:SwapImageBlur}
\end{align}
 where $\mr{vec}\left(\mb M\right)$ is the columnwise vectorization
of $\mb M$, $\mb G_{1}=\mb G$, and for $i>1$, $\mb G_{i}$ is obtained
by circularly shifting the columns of $\mb G_{i-1}$ to the left.
If any of the columns of the matrix on the right-hand side of (\ref{eq:SwapImageBlur})
is zero, the corresponding entry of $\mb h$ cannot be recovered from
the observations. Therefore, it is necessary to assume that the columns of 
the mentioned matrix are all non-zero. For the special choice of $\mb G$
that we consider, these two assumptions imply that the measurements
$x_{i}h_{j}$ for any particular $i$ and similarly for any particular
$j$ cannot be simultaneously zero.

\begin{figure} 
\noindent 
\centering
\includegraphics[height=0.25\textheight]{./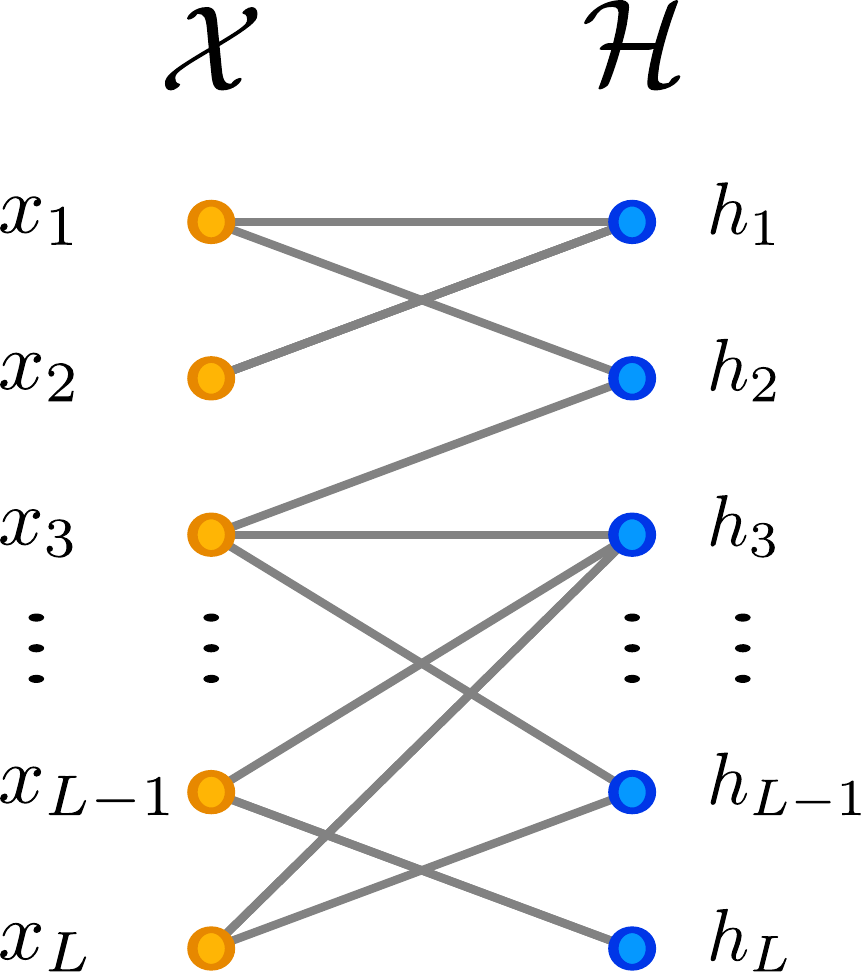}
\protect\caption{The bipartite graph constructed based on the observations. Only pairs of vertices are connected whose product is observed and non-zero.}
\label{fig:BipartiteGraph}
\end{figure}

The necessary and sufficient condition for identifiability of our
blind deconvolution problem is a simple special case of the combinatorial
identifiability conditions presented in \cite{kiraly_combinatorial_2012}
for the well-known \emph{low-rank matrix completion} problem. For
completeness, we state the identifiability condition in Lemma \ref{lem:IdentifiabilityGraph}
whose proof is subsumed in the appendix. Let $\mc G=\left(\mc X,\mc H,\mc E\right)$
be an undirected bipartite graph. The vertex partitions $\mc X=\left\lbrace x_{1},x_{2},\dotsc,x_{L}\right\rbrace $
and $\mc H=\left\lbrace h_{1},h_{2},\dotsc,h_{L}\right\rbrace $ correspond
to the entries of $\mb x$ and $\mb h$, respectively. Furthermore,
$\mc G$ is constructed such that $\left\lbrace x_{i},h_{j}\right\rbrace \in\mc E$
iff the value $x_{i}\cdot h_{j}$ is observed and is non-zero. An
example of such graphs is shown in Figure \ref{fig:BipartiteGraph}.
\begin{lem}
\label{lem:IdentifiabilityGraph}The rank-one matrix $\mb h\mb x^{\mr T}$
is uniquely recoverable from its subsampled entries iff the corresponding
bipartite graph has only one connected component of order greater
than one.
\end{lem}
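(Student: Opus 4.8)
The plan is to argue both directions by exploiting the multiplicative structure of the observed entries together with the standing non-degeneracy assumptions, which I claim guarantee that a vertex of $\mc G$ is isolated precisely when its associated entry of $\mb x$ or $\mb h$ vanishes. Indeed, the assumption that each column of $\mb G\mb H$ is non-zero ensures that for every $i$ there is an observed product $x_ih_j$ with $h_j\neq 0$, so that $x_i\neq 0$ forces the edge $\left\{x_i,h_j\right\}$; the companion assumption on the matrix in \eqref{eq:SwapImageBlur} gives the symmetric statement for the $h_j$. Hence $x_i$ (resp.\ $h_j$) is incident to an edge if and only if $x_i\neq 0$ (resp.\ $h_j\neq 0$), and the components of order greater than one are exactly those collecting the non-vanishing coordinates.

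For the ``only if'' direction I would prove the contrapositive. Since a genuine rank-one matrix has $\mb x\neq\mb 0$ and $\mb h\neq\mb 0$, there is always at least one component of order greater than one, so the negation of the hypothesis means there exist two such components $C_{1}$ and $C_{2}$. Each of them, being connected with at least two vertices in a bipartite graph, contains a vertex from $\mc X$ and a vertex from $\mc H$ whose coordinates are non-zero. I would then rescale by a factor $t\neq 1$: multiply every $x_i$ with $x_i\in C_{1}$ by $t$ and every $h_j$ with $h_j\in C_{1}$ by $t^{-1}$, leaving all other coordinates untouched. Because every edge lies inside a single component, this operation preserves the product on each edge of $C_{1}$, leaves every other observed product unchanged, and does not alter the zero pattern; thus every observed entry is reproduced. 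Yet a ``cross'' entry $h_jx_i$ with $x_i\in C_{1}$ and $h_j\in C_{2}$ (both non-zero, and unobserved since its endpoints lie in different components) is scaled by $t$, producing a distinct rank-one matrix consistent with the data, so unique recoverability fails.

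For the ``if'' direction, let $C$ be the unique component of order greater than one; by the first paragraph its vertices are exactly the indices with $x_i\neq 0$ and $h_j\neq 0$. I would recover the non-zero block first: connectivity of $C$ lets one transport the known edge values along paths, since two edges sharing an $h_j$ determine the ratio $x_i/x_{i'}$ while two edges sharing an $x_i$ determine $h_j/h_{j'}$, and anchoring these ratios to one observed product fixes $h_jx_i$ for every pair $x_i,h_j\in C$. Equivalently, any competing rank-one solution must agree with $\mb h\mb x^{\mr T}$ on the whole $C\times C$ block. It then remains to pin the vanishing entries: if $x_i=0$, the non-degeneracy assumption furnishes an observed product $x_ih_{j_0}$ with $h_{j_0}\neq 0$, and since $h_{j_0}$ lies on some edge the competing solution must also assign it a non-zero value, whence the observed value $0=h_{j_0}x_i$ forces the competing $x_i$ to vanish as well; the symmetric argument handles the rows with $h_j=0$. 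Thus every entry of $\mb h\mb x^{\mr T}$ is determined.

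The step I expect to be the main obstacle is not the combinatorial ratio computation, which is routine, but the careful bookkeeping around the zeros: one must invoke the two non-degeneracy assumptions in exactly the right places, both to identify isolated vertices with vanishing coordinates and to propagate the observed zeros to any competing rank-one solution, so that the ``order greater than one'' phrasing genuinely captures recoverability of the full matrix rather than merely of its non-zero block.
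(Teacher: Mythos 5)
Your proof is correct and follows essentially the same route as the paper's: you identify isolated vertices with zero coordinates via the two standing non-degeneracy assumptions, propagate values along edges within a component up to a single scalar, and your per-component $\left(t,t^{-1}\right)$ rescaling is exactly the paper's observation that distinct components of order greater than one can take independent values, destroying uniqueness. If anything, you spell out the sufficiency direction (ratio transport on the unique large component plus propagation of observed zeros to any competing rank-one solution) in more detail than the paper, which leaves that half largely implicit.
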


Suppose that $\mb G$ models a uniform subsampling with period $T<L$
in \eqref{eq:Identifiability}. Then, as illustrated in Figure \ref{fig:MatrixCompletion},
the measurements in \eqref{eq:Identifiability} are identical to
the skew diagonal entries of the rank-one matrix $\mb h\mb x^{\mr T}$
that are $T$ entries apart in each row (or column). Therefore, our
deconvolution problem is basically a special rank-one matrix completion
problem where Lemma \ref{lem:IdentifiabilityGraph} applies. As an
illustrative example, consider the case that neither $\mb x$ nor
$\mb h$ have zero entries. The graph associated with the measurements
\eqref{eq:Identifiability} is then an $N$-regular bipartite graph
where $N=\left\lfloor \tfrac{L-1}{T}\right\rfloor +1$ is the number
of sampling sensors (i.e., the number the rows of $\mb G$). If we
also have $\gcd\left(T,L\right)=1$, then it is straightforward to
verify that the constructed graph is connected and by Lemma \ref{lem:IdentifiabilityGraph}
the matrix $\mb h\mb x^{\mr T}$ can be recovered uniquely.

\begin{figure}
\noindent 
\centering
\def\svgwidth{0.5\columnwidth}
\subimport{figs/}{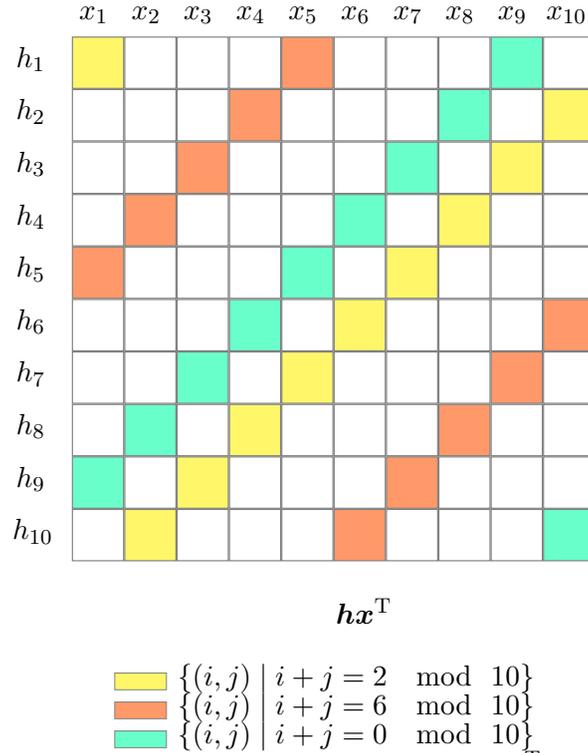}
\protect\caption{Measurements given by \eqref{eq:Identifiability} as the entries of $\mb{h}\mb{x}^\mr{T}$ for $L=10$ and $T=4$.}
\label{fig:MatrixCompletion}
\end{figure}

Although Lemma \ref{lem:IdentifiabilityGraph} establishes the necessary
and sufficient condition for identifiability of our problem, it is
desirable to have alternative guarantees that are not combinatorial
in nature. The following theorem provides a sufficient condition for
identifiability by imposing a subspace structure for the blurring
kernel.
\begin{thm}
\label{thm:IdentifiabilitySubspace}For $N=\left\lfloor \tfrac{L-1}{T}\right\rfloor +1$
let $\mb V\in\mbb C^{L\times N}$ be a given matrix whose restriction
to rows indexed by 
\begin{align*}
\msf J_{i} & :=\left\lbrace j\mid 1\leq j\leq L\text{ and }j=-i+2+kT\mod L\text{ for some }0\leq k\leq N-1\right\rbrace ,
\end{align*}
 is full-rank for all $i=1,2,\dotsc,L$. For any image $\mb x\neq\mb 0$
and any blurring kernel $\mb h\in\mr{range}\left(\mb V\right)$, the
rank-one matrix $\mb h\mb x^{\mr T}$ can be uniquely recovered as
the solution to the blind deconvolution problem \eqref{eq:Identifiability}. \end{thm}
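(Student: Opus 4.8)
The plan is to prove uniqueness directly, by comparing two feasible solutions and showing their lifted rank-one matrices coincide. The key structural observation I would exploit is that, once the image $\mb x$ is fixed, the measurement map $\mb h\mapsto\mb M$ in \eqref{eq:Identifiability} is linear and \emph{decouples across the columns} of $\mb M$. Since the hypothesis makes each $N\times N$ row-submatrix of $\mb V$ invertible, $\mb V$ has full column rank, so every feasible kernel $\mb h\in\mr{range}\left(\mb V\right)$ has a \emph{unique} coordinate vector $\mb g\in\mbb C^{N}$ with $\mb h=\mb V\mb g$, and $\mb g=\mb 0$ exactly when $\mb h=\mb 0$. Recalling the unavoidable global scaling ambiguity, it then suffices to take any second pair $\left(\mb x',\mb h'\right)$ with $\mb x'\neq\mb 0$ and $\mb h'=\mb V\mb g'\in\mr{range}\left(\mb V\right)$ producing the same $\mb M$, and to show $\mb h\mb x^{\mr T}=\mb h'\left(\mb x'\right)^{\mr T}$. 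We may assume $\mb h\neq\mb 0$, since if $\mb h=\mb 0$ then $\mb M=\mb 0$ and the computation below forces $\mb h'=\mb 0$, so both outer products vanish.

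The first substantive step, and the one I expect to require the most care, is to read off the precise column-wise structure of $\mb M$. Because $\mb H$ is circulant with first column $\mb h$, the $i$-th column of $\mb H\mb D_{\mb x}$ is $x_i$ times the $i$-th circular shift of $\mb h$; applying the uniform subsampling $\mb G$, which retains rows $1,T+1,\dotsc,\left(N-1\right)T+1$, then extracts in column $i$ exactly the entries $h_j$ with $j\in\msf J_i$, each multiplied by $x_i$. Here I would check that the $N$ residues $-i+2+kT\pmod{L}$, $0\le k\le N-1$, are distinct, which holds because $\left(N-1\right)T\le L-1<L$; hence $\left|\msf J_i\right|=N$ and the submatrix $\mb V^{(i)}$ of $\mb V$ formed by the rows indexed by $\msf J_i$ is square and therefore invertible by hypothesis. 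Consequently the $i$-th column of $\mb M$ equals $x_i\,\mb V^{(i)}\mb g$. This identification carries the real content of the argument; the remainder is routine linear algebra.

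To finish, I would equate the two expressions for each column of $\mb M$, obtaining $x_i\,\mb V^{(i)}\mb g=x'_i\,\mb V^{(i)}\mb g'$ for every $i$, and left-multiply by $\left(\mb V^{(i)}\right)^{-1}$ to get $x_i\mb g=x'_i\mb g'$ for all $i=1,\dotsc,L$. Choosing $i_0$ with $x_{i_0}\neq0$, the vector $x_{i_0}\mb g$ is nonzero because $\mb g\neq\mb 0$, which forces $x'_{i_0}\neq0$ and $\mb g'=\beta\mb g$ with $\beta=x_{i_0}/x'_{i_0}\neq0$. Substituting back and using $\mb g\neq\mb 0$ yields $x_i=\beta x'_i$ for every $i$, i.e.\ $\mb x=\beta\mb x'$. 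Therefore $\mb h\mb x^{\mr T}=\mb V\mb g\left(\beta\mb x'\right)^{\mr T}=\mb V\left(\beta\mb g\right)\left(\mb x'\right)^{\mr T}=\mb V\mb g'\left(\mb x'\right)^{\mr T}=\mb h'\left(\mb x'\right)^{\mr T}$, establishing that the lifted rank-one matrix is uniquely determined, i.e.\ the image and the blur are recovered up to the global scalar.
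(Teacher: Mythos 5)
Your proof is correct and takes essentially the same route as the paper's: both rest on reading off the $i$-th column of $\mb M$ as $x_{i}$ times the restriction $\left.\mb h\right|_{\msf J_{i}}=\mb V_{\msf J_{i}}\mb g$ and inverting the square full-rank submatrix $\mb V_{\msf J_{i}}$, with your two-solution comparison being just a uniqueness rephrasing of the paper's constructive recovery (pick $i$ with $x_{i}\neq0$, recover $x_{i}\mb h$ by least squares, then obtain the remaining $x_{l}\mb h$). If anything, yours is slightly more self-contained, since you verify $\left|\msf J_{i}\right|=N$ and handle the global scaling explicitly rather than leaning on the standing non-vanishing assumptions of Section \ref{ssec:Identifiability}.
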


As can be inspected in Figure \ref{fig:MatrixCompletion}, for each
$i=1,2,\dotsc,L$ the set $\msf J_{i}$ described in Theorem \ref{thm:IdentifiabilitySubspace}
determines the location of observed entries in the $i$-th column
of the matrix $\mb h\mb x^{\mr T}$. Using this observation, the proof
of Theorem \ref{thm:IdentifiabilitySubspace}, provided in
Appendix \ref{ssec:Identifiability}, is straightforward.
\begin{cor}
\label{cor:LowPassBD}Let $\mb F_{\Omega}$ denote a matrix of some
$N$ (circularly) consecutive rows of the normalized $L$-point DFT
matrix that are indexed by $\Omega$. For any image $\mb x\neq\mb 0$
and blurring kernel $\mb h\in\mr{range}\left(\mb F_{\Omega}^{*}\right)$,
we can recover the rank-one matrix $\mb h\mb x^{\mr T}$ uniquely
from the measurements given by \eqref{eq:Identifiability}. \end{cor}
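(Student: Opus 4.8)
The plan is to obtain this as a direct specialization of Theorem~\ref{thm:IdentifiabilitySubspace}: I would take the subspace matrix to be $\mb V = \mb F_{\Omega}^{*}$, so that $\mr{range}\left(\mb V\right) = \mr{range}\left(\mb F_{\Omega}^{*}\right)$ is exactly the admissible set of blurring kernels in the statement. Since $\mb F_{\Omega}$ consists of $N$ rows of the $L$-point DFT matrix, $\mb F_{\Omega}^{*} \in \mbb{C}^{L\times N}$ already has the dimensions required by the theorem, so the entire task reduces to verifying the one hypothesis: that the row-restriction $(\mb F_{\Omega}^{*})_{\msf{J}_{i}}$ is full-rank for every $i = 1,2,\dotsc,L$.

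First I would pin down the size of each $\msf{J}_{i}$ so that these restrictions are square. With $N = \left\lfloor \tfrac{L-1}{T}\right\rfloor + 1$ one has $(N-1)T \leq L-1 < L$, and hence the $N$ residues $-i+2+kT \bmod L$ for $0 \leq k \leq N-1$ are pairwise distinct: two of them coincide only if $(k-k')T \equiv 0 \pmod{L}$, which is impossible when $0 < \left|k-k'\right| \leq N-1$, since then $0 < \left|k-k'\right|T < L$. Therefore $\left|\msf{J}_{i}\right| = N$ and $(\mb F_{\Omega}^{*})_{\msf{J}_{i}}$ is a square $N\times N$ matrix for each $i$.

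Next I would expose a Vandermonde structure. Writing $\Omega = \left\lbrace a, a+1, \dotsc, a+N-1\right\rbrace$ modulo $L$, the $(j,m)$ entry of $\mb F_{\Omega}^{*}$ equals $\tfrac{1}{\sqrt{L}} e^{2\pi i (a+m-2)(j-1)/L} = d_{j}\,z_{j}^{m-1}$, where $d_{j} = \tfrac{1}{\sqrt{L}} e^{2\pi i (a-1)(j-1)/L} \neq 0$ and $z_{j} = e^{2\pi i (j-1)/L}$. Restricting to the rows $j \in \msf{J}_{i}$ then factors the submatrix as $\mb D\mb W$, where $\mb D = \mr{diag}\left(d_{j}\right)_{j\in\msf{J}_{i}}$ is invertible and $\mb W = \left(z_{j}^{m-1}\right)_{j\in\msf{J}_{i},\,1\leq m\leq N}$ is a square Vandermonde matrix. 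Its nodes $z_{j}$ are distinct $L$-th roots of unity because the indices $j \in \left\lbrace 1,\dotsc,L\right\rbrace$ are distinct modulo $L$, so $\mb W$ is invertible and therefore so is $(\mb F_{\Omega}^{*})_{\msf{J}_{i}} = \mb D\mb W$. The hypothesis of Theorem~\ref{thm:IdentifiabilitySubspace} holds for all $i$, and the asserted unique recovery follows.

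I expect the only mildly delicate step to be the modular counting that yields $\left|\msf{J}_{i}\right| = N$, since the interplay between the circular shifts defining $\msf{J}_{i}$ and the consecutive frequency band $\Omega$ is where off-by-one errors would creep in; once the restrictions are known to be square, the factorization into a nonsingular diagonal times a distinct-node Vandermonde makes invertibility immediate and conceptually trivial.
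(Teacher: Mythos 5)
Your proposal is correct and takes essentially the same route as the paper: both reduce the corollary to Theorem~\ref{thm:IdentifiabilitySubspace} with $\mb V=\mb F_{\Omega}^{*}$ and establish invertibility of each restriction $\left(\mb F_{\Omega}^{*}\right)_{\msf J_{i}}$ via a Vandermonde argument with $N$ distinct $L$-th roots of unity as nodes (the paper phrases this as a degree-less-than-$N$ polynomial vanishing at $N$ distinct points, after reducing WLOG to $\Omega=\left\{ 1,2,\dotsc,N\right\} $). Your only departures are minor refinements: you handle general $\Omega$ explicitly by factoring out an invertible diagonal matrix rather than invoking the WLOG reduction, and you verify $\left|\msf J_{i}\right|=N$ explicitly, a counting step the paper leaves implicit.
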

\begin{proof}
Without loss of generality we assume that $\Omega=\left\{ 1,2,\dotsc,N\right\} $
as the proof is similar for other valid choices of $\Omega$. The
result follows immediately from Theorem \ref{thm:IdentifiabilitySubspace}
should the matrix $\mb V=\mb F_{\Omega}^{*}=\mb F_{N}^{*}$ satisfies
the requirements of the theorem. Namely, it suffices to show that
the restriction of $\mb F_{N}^{*}$ to the rows indexed by $\msf J_{i}$
is full-rank for all $i=1,2,\dotsc,L$. With $\omega:=e^{2\pi\mr i/L}$
the restriction of $\mb F_{N}^{*}$ to the rows in $\msf J_{i}$ can
be written as 
\begin{align*}
\mb F_{\msf J_{i},N}^{*} & :=\frac{1}{\sqrt{L}}\left[\arraycolsep=0.5ex\begin{array}{cccc}
1 & \omega^{-i+1} & \dotsm & \omega^{\left(N-1\right)\left(-i+1\right)}\\
1 & \omega^{T-i+1} & \dotsm & \omega^{\left(N-1\right)\left(T-i+1\right)}\\
\vdots & \vdots & \ddots & \vdots\\
1 & \omega^{\left(N-1\right)T-i+1} & \dotsm & \omega^{\left(N-1\right)\left(\left(N-1\right)T-i+1\right)}
\end{array}\right].
\end{align*}
 Having $\mb F_{\msf J_{i},N}\mb a=\mb 0$ for some $\mb a\in\mbb C^{N}$
is equivalent to having the polynomial $a\left(z\right):=\sum_{i=1}^{N}a_{i}z^{i-1}$
vanishing at $z=\omega^{-i+1},\omega^{T-i+1},\dotsc,\omega^{\left(N-1\right)T-i+1}$
which are $N$ distinct points in $\mbb C$. This is possible only
if $a\left(z\right)\equiv0$ because the degree of $a\left(z\right)$
is less than $N$. Therefore, $\mb F_{\msf{J}_{i,N}}^{*}\mb a=\mb 0$
iff $\mb a=\mb 0$ as desired.
\end{proof}

While Corollary \ref{cor:LowPassBD} shows that unique reconstruction
of the image and the blurring kernel is possible for ``bandpass''
blurring kernels, it does not provide any robust recovery method.
Interestingly, there is also a robust recovery method for the bandpass
model as described below. As in the proof of the corollary, we consider
the case of $\Omega=\left\{ 1,2,\dotsc,N\right\} $ to simplify the
exposition. Note that the measurement can be written as 
\begin{align*}
\mb M & =\mb G\mb H\mb D_{\mb x}+\mb E=\mb G\mb F_{N}^{*}\mb D_{\widehat{\mb h}}\mb F_{N}\mb D_{\mb x}+\mb E,
\end{align*}
 where 
 \begin{quote}
 \textit{$\widehat{\mb h}$, with slight abuse of our notation, denotes
the frequency content of the filter $\mb h$ (i.e., $\mb h=\frac{1}{\sqrt{L}}\mb F_{N}^{*}\widehat{\mb h}$),}
\end{quote}
and $\mb E$ denotes the measurement error. Since $\mb G$ is assumed
to be a uniform subsampling operator, the matrix $\widetilde{\mb G}:=\mb G\mb F_{N}^{*}$
is invertible as shown in the proof of Corollary \ref{cor:LowPassBD}.
Therefore, we can write 
\begin{align*}
\widetilde{\mb G}^{-1}\mb M & =\mb D_{\widehat{\mb h}}\mb F_{N}\mb D_{\mb x}+\widetilde{\mb G}^{-1}\mb E=\mb F_{N}\odot\left(\widehat{\mb h}\overline{\mb x}^{*}\right)+\widetilde{\mb G}^{-1}\mb E.
\end{align*}
 Let $\overline{\mb F}_{N}$ be the entrywise conjugate of $\mb F_{N}$.
Entrywise multiplication of both sides
of the above equation by $\overline{\mb F}_{N}$ yields 
\begin{align*}
\overline{\mb F}_{N}\odot\left(\widetilde{\mb G}^{-1}\mb M\right) & =\frac{1}{L}\widehat{\mb h}\overline{\mb x}^{*}+\overline{\mb F}_{N}\odot\left(\widetilde{\mb G}^{-1}\mb E\right).
\end{align*}
 Therefore, we can estimate $\widehat{\mb h}\overline{\mb x}^{*}$
as the best rank-one approximation to the matrix \linebreak$L\overline{\mb F}_{N}\odot\left(\widetilde{\mb G}^{-1}\mb Y\right)$~with estimation error being less than $2L\left\Vert \overline{\mb F}_{N}\odot\left(\widetilde{\mb G}^{-1}\mb E\right)\right\Vert _{F}$.

\subsection{\label{ssec:ConvexBD}Blind deconvolution via nuclear norm minimization}

In this section we consider a convex programming approach for solving
(\ref{eq:BlindDeconvolution}) under the bandpass model for the
blurring kernel described in \ref{cor:LowPassBD}. Again for simplicity,
we only consider the case that $\mb h=\frac{1}{\sqrt{L}}\mb F^{*}_{N}\widehat{\mb h}$
with $\widehat{\mb h}$ being the (truncated) DFT of $\mb h$. 

Similar to the discussion following (\ref{cor:LowPassBD}) we can
rewrite the measurement equation (\ref{eq:BlindDeconvolution}) as

\begin{align*}
\mb M & =\mb G\mb H\mb D_{\mb x}\mb{\Phi}=\mb G\mb F_{N}^{*}\mb D_{\widehat{\mb h}}\mb F_{N}\mb D_{\mb x}\mb{\Phi}.
\end{align*}
 Since $\widetilde{\mb G}=\mb G\mb F_{N}^{*}$ is invertible (see
proof of Corollary \ref{cor:LowPassBD} above), it suffices to analyze
recoverability of $\widehat{\mb h}$ and $\mb x$ from observations
\begin{align*}
\widetilde{\mb M} & :=\sqrt{\frac{L}{K}}\widetilde{\mb G}^{-1}\mb M=\sqrt{\frac{L}{K}}\mb D_{\widehat{\mb h}}\mb F_{N}\mb D_{\mb x}\mb{\Phi}\\
 & =\sqrt{\frac{L}{K}}\left(\mb F_{N}\odot\left(\widehat{\mb h}\overline{\mb x}^{*}\right)\right)\mb{\Phi}
\end{align*}
Define the linear operator $\mc A:\mbb C^{N\times L}\to\mbb C^{N\times K}$
as 
\begin{align}
\mc A\left(\mb X\right) & :=\sqrt{\frac{L}{K}}\left(\mb F_{N}\odot\mb X\right)\mb{\Phi},\label{eq:MesurementOperator}
\end{align}
whose adjoint is given by 
\begin{align*}
\mc A^{*}\left(\mb Y\right) & =\sqrt{\frac{L}{K}}\overline{\mb F}_{N}\odot\left(\mb Y\mb{\Phi}^{*}\right).
\end{align*}
We have $\widetilde{\mb M}=\mc A\left(\widehat{\mb h}\overline{\mb x}^{*}\right)$.
Without loss of generality we assume that \begin{quote}\textit{
$\mb x$, the target image, and $\widehat{\mb h}$, the DFT of the
blurring kernel, both have unit $\ell_{2}$-norm.} 
\end{quote} Furthermore, we define the coherence of the blurring
kernel as 
\begin{align}
\mu & :=\frac{\left\Vert \widehat{\mb h}\right\Vert _{\infty}^{2}}{\left\Vert \mb h\right\Vert _{2}^{2}}=L\left\Vert \widehat{\mb h}\right\Vert _{\infty}^{2}.\label{eq:Coherence}
\end{align}
 We show that the nuclear norm minimization 

\begin{align}
\arg\min_{\mb X} & \left\Vert \mb X\right\Vert _{*}\label{eq:NucMin}\\
\mr{subject\ to} & \ \mc A\left(\mb X\right)=\widetilde{\mb M},\nonumber 
\end{align}
 can recover the matrix $\widehat{\mb h}\overline{\mb x}^{*}$ with
high probability. 
\begin{thm}
\label{thm:BlindDeconvolution}Let $\mb{\Phi}\in\left\{ \pm1\right\} ^{L\times K}$
be a random matrix with iid Rademacher entries and define the linear
operator $\mc A$ as in (\ref{eq:MesurementOperator}). Then, for
$K\overset{\beta}{\gtrsim}\mu\log^{2}L\,\log\frac{Le}{\mu}\,\log\log\left(N+1\right)$
we can guarantee that (\ref{eq:NucMin}) recovers $\widehat{\mb h}\overline{\mb x}^*$
uniquely, with probability exceeding $1-O\left(NL^{-\beta}\right)$.\end{thm}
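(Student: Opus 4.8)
The plan is to establish exact recovery of $\widehat{\mb h}\overline{\mb x}^*$ via the standard dual certificate framework for nuclear norm minimization, following the golfing scheme of Gross \cite{gross-recovering-2011}. Let $M_0 := \widehat{\mb h}\overline{\mb x}^*$, write its SVD trivially as $M_0 = \widehat{\mb h}\,\overline{\mb x}^*$ (both factors are unit-norm), and let $\mc T$ denote the tangent space to the rank-one matrix manifold at $M_0$, namely $\mc T = \{\widehat{\mb h}\mb v^* + \mb u\overline{\mb x}^* : \mb u \in \mbb C^N,\ \mb v \in \mbb C^L\}$, with $\mc P_{\mc T}$ its orthogonal projection and $\mc P_{\mc T^\perp} = \mc I - \mc P_{\mc T}$. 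The sufficient conditions for $M_0$ to be the unique minimizer of \eqref{eq:NucMin} are the familiar two: first, the injectivity condition that $\mc A$ restricted to $\mc T$ is well-conditioned, i.e. $\mc P_{\mc T}\mc A^*\mc A\mc P_{\mc T} \approx \mc P_{\mc T}$ in operator norm; and second, the existence of an approximate dual certificate $\mb Y = \mc A^*(\mb\Lambda)$ in the range of $\mc A^*$ satisfying $\|\mc P_{\mc T}\mb Y - \widehat{\mb h}\overline{\mb x}^*\|_F$ small and $\|\mc P_{\mc T^\perp}\mb Y\| < 1$. I would state these as a lemma up front and reduce the theorem to verifying them with the claimed probability.

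The first step is the local isometry estimate. Here I would exploit the structure of $\mc A$: since $\mc A^*\mc A(\mb X) = \frac{L}{K}\,\overline{\mb F}_N \odot\big((\mb F_N \odot \mb X)\mb\Phi\mb\Phi^*\big)$ and $\mbb E[\mb\Phi\mb\Phi^*] = K\mb I_L$ for Rademacher entries, one computes $\mbb E[\mc A^*\mc A] = \mc I$ (using that the diagonal of $\overline{\mb F}_N\odot\mb F_N$ has the right normalization). The deviation $\mc P_{\mc T}(\mc A^*\mc A - \mc I)\mc P_{\mc T}$ is then a sum of $K$ independent zero-mean random operators indexed by the columns $\mb\phi_k$, and I would control it with a matrix Bernstein inequality. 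The coherence $\mu = L\|\widehat{\mb h}\|_\infty^2$ enters precisely here: it bounds the per-mask contribution because each summand's norm depends on $\|\widehat{\mb h}\|_\infty^2$ through the Hadamard structure with $\mb F_N$, whose entries have modulus $1/\sqrt L$. This yields $\|\mc P_{\mc T}\mc A^*\mc A\mc P_{\mc T} - \mc P_{\mc T}\| \le 1/2$ once $K \gtrsim \mu \log L \cdot (\text{log factors})$.

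The heart of the proof, and the main obstacle, is constructing the dual certificate by the golfing scheme. I would partition the $K$ masks into $P \approx \log\log(N+1)$ (or $\log L$) disjoint batches, use each batch to define an independent partial measurement operator $\mc A_p$ with its own expectation-one normalization, and build $\mb Y$ iteratively via $\mb Q_p = \mb Q_{p-1} - \mc P_{\mc T}\mc A_p^*\mc A_p \mb Q_{p-1}$ with $\mb Y = \sum_p \mc A_p^*\mc A_p\mb Q_{p-1}$, so that $\|\mb Q_p\|_F$ contracts geometrically and $\mc P_{\mc T}\mb Y$ approaches $M_0$. The delicate part is bounding $\|\mc P_{\mc T^\perp}\mb Y\|$ below $1$: this requires a concentration inequality for $\|\mc P_{\mc T^\perp}\mc A_p^*\mc A_p\mb Q_{p-1}\|$ in \emph{spectral norm} against the residual $\mb Q_{p-1}$, and the quality of this bound depends on controlling intermediate quantities in mixed norms (the $\ell_\infty$-type and row/column norms of the residuals, not just the Frobenius norm). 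I expect the interaction between the deterministic Fourier structure $\mb F_N$ and the random masks, together with the need to keep the coherence $\mu$ and the ambient dimension $L$ appearing only through the stated logarithmic factors, to be where most of the technical effort and the appearance of the $\log\frac{Le}{\mu}$ and $\log\log(N+1)$ terms originate; a careful choice of the number of golfing iterations and per-batch sizes is what produces the final sample complexity.

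Finally, I would assemble the pieces: conditioned on the local isometry event, the existence of the certificate $\mb Y$ implies that any feasible perturbation $\mb X = M_0 + \mb\Delta$ with $\mb\Delta \neq \mb 0$ and $\mc A(\mb\Delta) = \mb 0$ strictly increases the nuclear norm, via the standard inequality $\|M_0 + \mb\Delta\|_* \ge \|M_0\|_* + \langle \widehat{\mb h}\overline{\mb x}^* + \mb W, \mb\Delta\rangle$ for an appropriate subgradient $\mb W \in \mc T^\perp$, combined with $\mb Y \in \mr{range}(\mc A^*)$ so that $\langle \mb Y, \mb\Delta\rangle = 0$. A union bound over the failure probabilities of the isometry step and each of the $O(\log\log(N+1))$ golfing iterations gives the overall success probability $1 - O(NL^{-\beta})$, completing the argument.
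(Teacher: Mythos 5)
Your proposal follows essentially the same route as the paper: the same dual-certificate sufficient conditions, near-isometry of $\mc A$ on the tangent space $\msf T$ via matrix Bernstein with the coherence $\mu$ entering through $\left\Vert \widehat{\mb h}\right\Vert _{\infty}$, a golfing construction over disjoint mask batches, and spectral-norm control of the $\msf T^{\perp}$ component by tracking mixed norms (largest row-wise $\ell_{2}$-norm) of the residuals --- precisely the content of the paper's Lemmas \ref{lem:ANearIsometry}, \ref{lem:Ap*Ap}, and \ref{lem:maxRowWp}. One minor misattribution worth noting: in the paper the number of golfing iterations is $P\lesssim\log\frac{Le}{\mu}$ (forced by needing $2^{-P}\lesssim1/\left\Vert \mc A\right\Vert $ with $\left\Vert \mc A\right\Vert \lesssim1+\sqrt{L/K}$ from Lemma \ref{lem:normA}), while the $\log\log\left(N+1\right)$ factor arises not from the batch count but from the subexponential Orlicz norm of a maximum over the $N$ Fourier rows inside the Bernstein estimates (Proposition \ref{pro:MaxOrlicz}).
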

\begin{rem}
Because $\mb h$ is assumed to have only $N$ active frequency components,
the coherence is bounded from below as 
\begin{align*}
\mu & \geq\frac{L}{N}.
\end{align*}
Therefore, the bound that the theorem imposes on the number of masks
can be simplified to 
\begin{align*}
K & \overset{\beta}{\gtrsim}\mu\log^{2}L\,\log\left(N+1\right)\,\log\log\left(N+1\right).
\end{align*}
 Furthermore, the result of Theorem \ref{thm:BlindDeconvolution}
suggests that $K\gtrsim\frac{L}{N}\log^{2}L\,\log\left(N+1\right)\,\log\log\left(N+1\right)$
random masks are necessary for \ref{eq:NucMin} to successfully recover
the target rank-one matrix. The dependence of this lower bound on
$L$ may seem unsatisfactory. However, if $K<\frac{L}{N}$, for any
fixed $\mb h$ the equation (\ref{eq:BlindDeconvolution}) will be
underdetermined with respect to $\mb x$, thereby $\mb x$ cannot
be recovered uniquely. Therefore, the number of measurements required
by Theorem \ref{thm:BlindDeconvolution} is suboptimal only by some
poly-logarithmic factors of $L$ and $N$.%

\end{rem}

\begin{rem}
To bring robustness to the proposed blind deconvolution approach,
we can modify (\ref{eq:NucMin}) by replacing the linear constraint
with an inequality of the form $\left\Vert \mc A\left(\mb X\right)-\widetilde{\mb M}\right\Vert _{F}\leq\delta$,
where $\widetilde{\mb M}$ denotes the noisy observations and $\delta$
is a constant that depends on the noise energy. Although accuracy
of the described convex program can be analyzed as well, we do not
attempt to derive these accuracy guarantees here and refer the interested
readers to \cite{candes-matrix-2010}, \cite{gross-quantum-2010},
and \cite{ahmed_blind_2014} for similar derivations.
\end{rem}

\section{\label{sec:Experiments}Numerical experiments}

For numerical evaluation of the blind deconvolution via (\ref{eq:NucMin})
we conducted two simulations using synthetic data. To solve
the nuclear norm minimization we used the solver proposed in \cite{burer_nonlinear_2003,burer_local_2005}.
In the first experiment we used an astronomical image of dimension $L=128\times128$ as the test image.\footnote{The image is adapted from NASA's Hubble Ultra Deep Field image that can be found online at:\linebreak \url{http://commons.wikimedia.org/wiki/File:Hubble\_ultra\_deep\_field\_high\_rez\_edit1.jpg}}
To generate the blur kernel we generated an $128\times128$ matrix
of iid standard normal random variables and then suppressed its 2D
DFT content outside a square of size $N=43\times43$ centered at the
origin.%
\footnote{The DFT indices are treated as integers modulo $128$.%
} The subsampling of the blurred image is performed at the rate of
$\frac{1}{3}$ both vertically and horizontally which provides $N$
scalar measurements per applied mask. We computed the subsampled convolution
for $K=300$ random Rademacher masks which yields a total of $K\times N=554700$
scalar measurements. The relative error between the target rank-one
matrix and the estimate obtained by (\ref{eq:NucMin}) is in the order
of $10^{-7}$. Figure \ref{fig:Hubble} also illustrates that the
proposed blind deconvolution method has successfully recovered the
normalized image and the normalized blurring kernel up to the prescribed
tolerance.

\begin{figure}
\begin{subfigure}[t]{1\textwidth}
\centering
\includegraphics[width=1\textwidth]{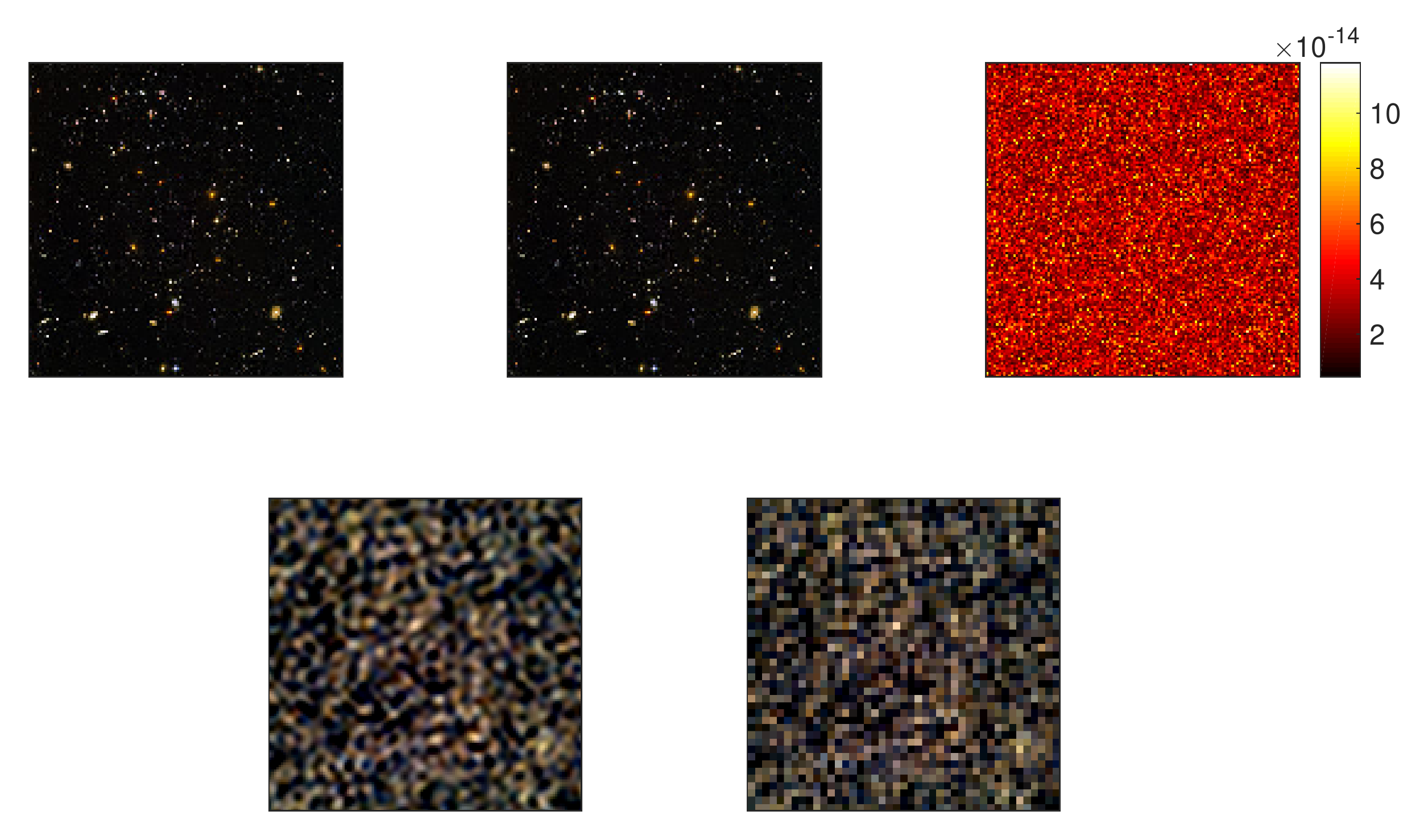}
\protect\caption{Top row: The original $128\times 128$ HUDF image (left), reconstructed image (center), and the difference between normalized images (right),\protect \\
Bottom row: blurred image (left), $3\msf{X}$ magnified subsampled blurred image (right)}
\label{fig:Hubble-a}
\end{subfigure}

\begin{subfigure}[t]{1\textwidth}  
\centering
\includegraphics[width=1\textwidth]{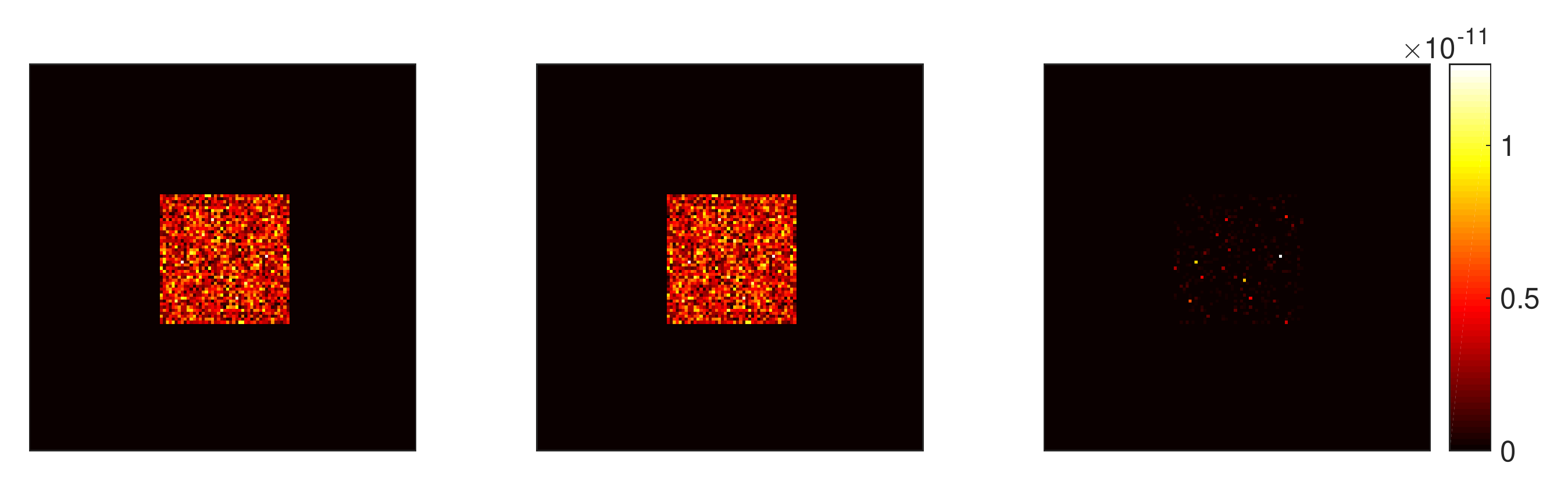} 
\protect\caption{Spectra of the original $128\times 128$ blur kernel (left), the reconstructed blur kernel (center), and the difference between normalized blur kernels (right)}  
\label{fig:Hubble-b}    
\end{subfigure}

\protect\caption{Blind deconvolution of a Hubble Ultra-Deep Field image and a synthetic
blur kernel}
\label{fig:Hubble}
\end{figure}

In the second experiment we considered a more realistic model for
the blur kernel.  We use eight $64\times64$ consecutive slices of
a 3D Point Spread Function (PSF) in the Born \& Wolf optical model
that is generated by the PSFGenerator package \cite{kirshner_psf_2013}
to create a subspace model for the target PSF. We used the default
model parameters set by the package, except for the size of the PSF
array in the XY plane mentioned above. Figure \ref{fig:subspace} depicts an orthonormal basis of the
subspace (in magnitude) that we used in the experiment. We chose one
of the original PSFs as our target PSF. Furthermore,
we use a $128\times128$ fluorescent microscopy image of \emph{endothelial
cells} as the target shown in Figure \ref{fig:image} (top left). For
this experiment, the number of applied Rademacher masks is $K=200$.
The subsampling is uniform in vertical and horizontal directions
at the rate of $\frac{1}{8}$. Therefore, the number of observations
per mask is $N=16\times16$. To have a reference for comparison, the
target image blurred by the target PSF and the $16\times16$ subsampled
version of the blurred image are shown in the bottom row of Figure \ref{fig:image}.

Figure \ref{fig:blur} illustrates the target PSF (left), the estimated
PSF (center), and the error between the normalized target and the
normalized estimated PSFs (right). Similarly, the top row of \ref{fig:image}
illustrates the target image (left), the estimated image (center),
and the error between the normalized target and the normalized estimated
PSFs (right). As can be seen in these figures, the proposed blind
deconvolution method has found accurate reconstructions of the
PSF and the image. The relative error in the lifted domain is also
in the order of  $10^{-6}$.

\begin{figure}
\noindent
\centering

\includegraphics[width=1\textwidth]{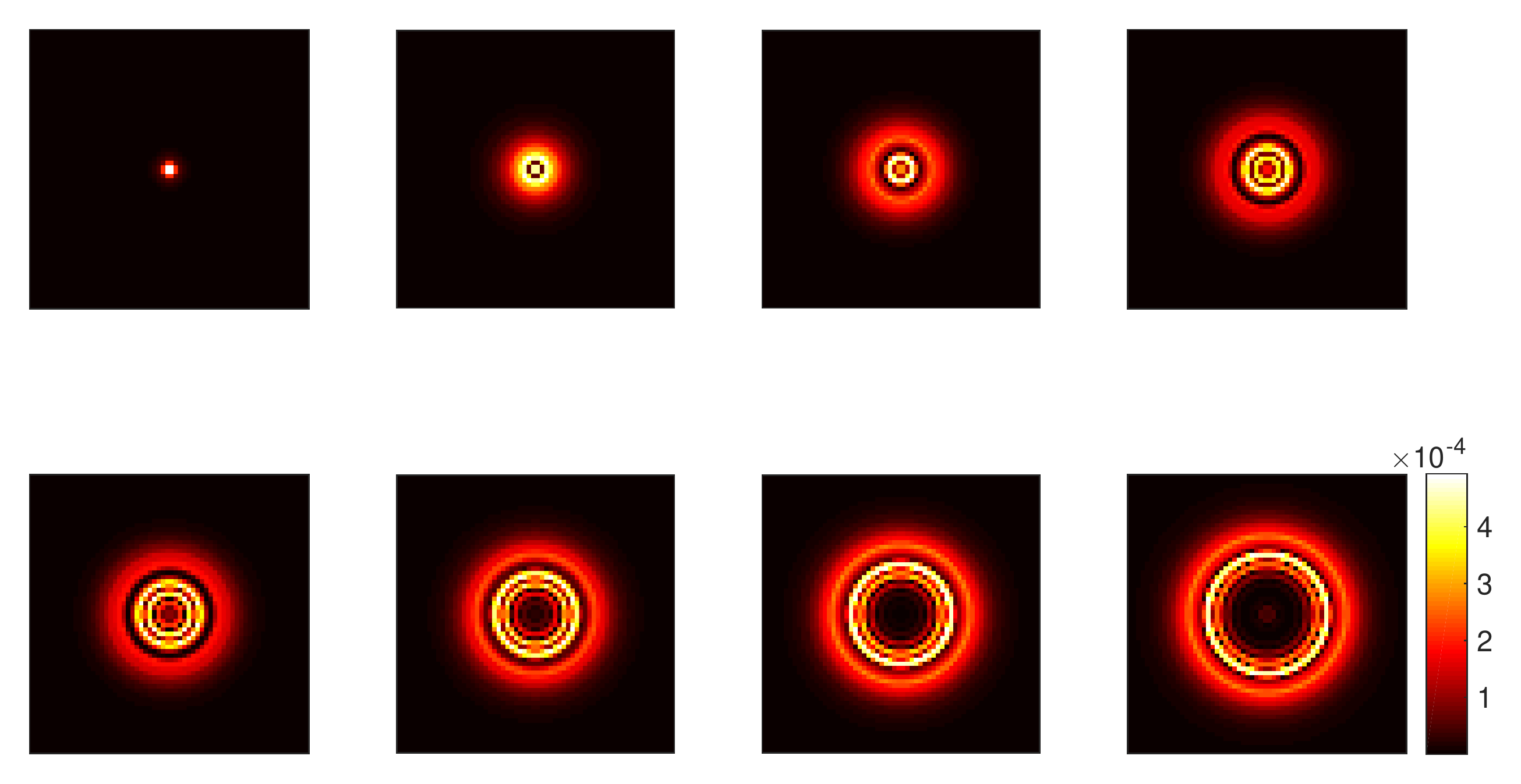}

\protect\caption{The orthonormal basis for the PSF subspace shown in magnitude}

\label{fig:subspace}
\end{figure}
\begin{figure}
\noindent
\centering

\includegraphics[width=1\textwidth]{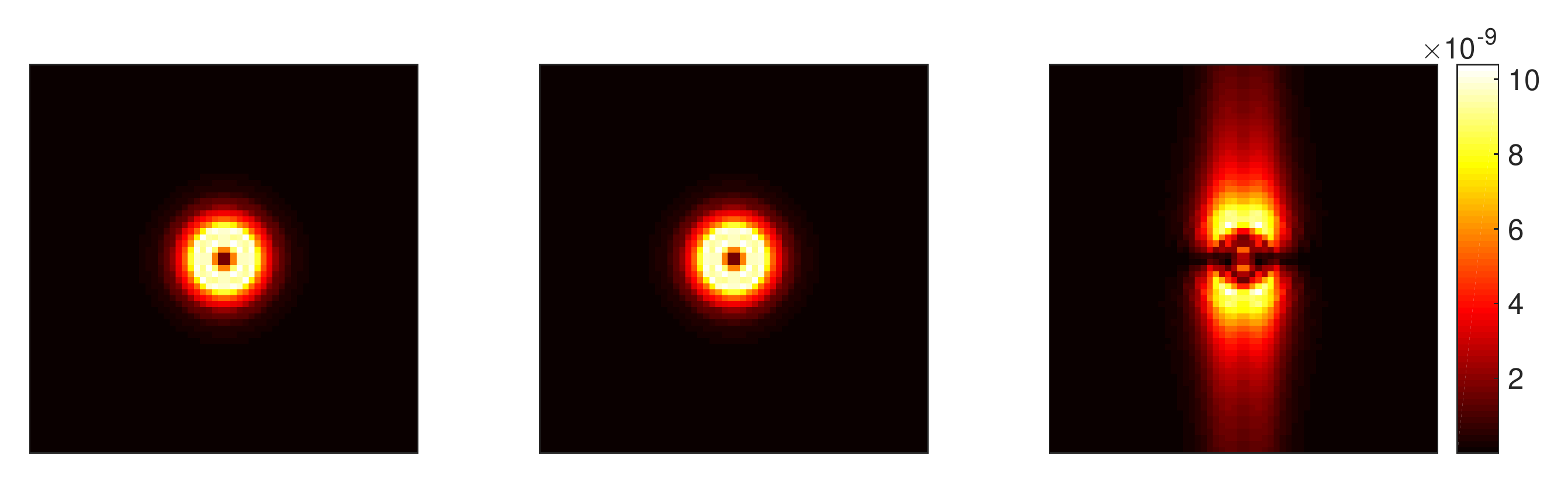}

\protect\caption{The target PSF (left), the reconstructed PSF (center), and the difference
between the normalized PSFs (right)}
\label{fig:blur}
\end{figure}
\begin{figure}
\noindent
\centering\includegraphics[width=1\textwidth]{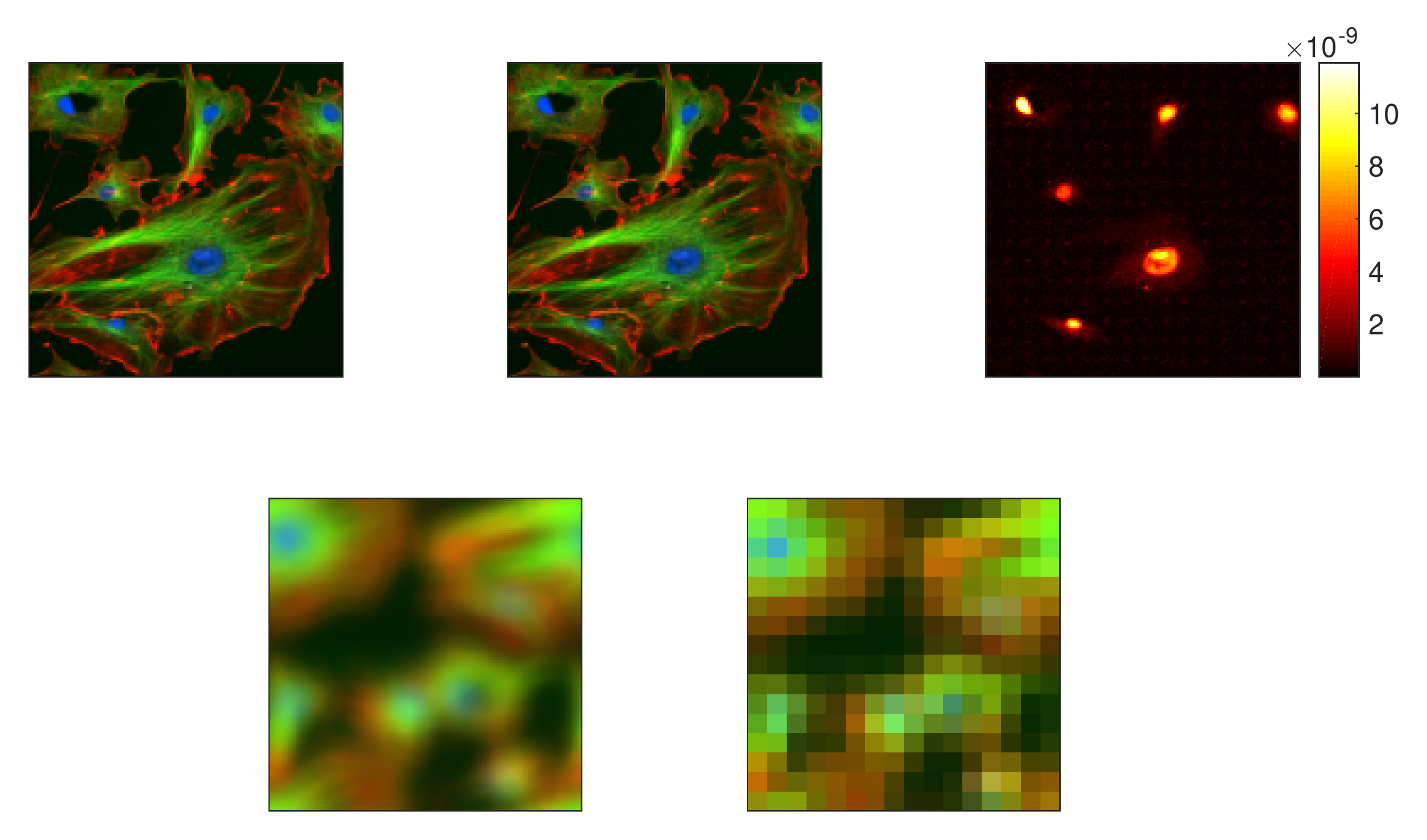}

\protect\caption{Top row: image of fluorescent \emph{endothelial} cells (left), reconstructed
image (center), and the difference between the normalized images (right),\protect \\
Bottom row: blurred image (left), $8\protect\msf X$ magnified subsampled
blurred image (right)}
\label{fig:image}
\end{figure}

\section{Conclusions}

In this paper we studied the blind deconvolution problem in an imaging system
that captures multiple instances of the target image modulated by
randomly coded masks, blurred, and then spatially subsampled. We first
expressed the necessary and sufficient condition for identifiability
of the problem using a graph representation of the unknowns and the measurements.
Furthermore, we formulate a different sufficient identifiability condition
by considering a subspace model for the blurring kernel. Finally,
under a special subspace model, namely the bandpass model, we derived
the sufficient number of random masks that allow successful blind
deconvolution through lifting and nuclear norm minimization.

An interesting extension to our work is to consider a subspace model
for the image rather than the blurring kernel. We believe that this
model would relax the identifiability requirements imposed on the
blurring kernel that might not be appropriate in certain scenarios.
The image subspace model can be further relaxed to sparsity
with respect to some basis (e.g., wavelet, Fourier, etc). This subspace-sparse
model would lead to estimation of simultaneously low-rank and row-sparse
matrices in the lifted domain. These problems, in their general form,
are known to be challenging and resistant to usual convex relaxation
techniques \cite{oymak_simultaneously_2015}. However, we believe
that the particular measurement model in the blind deconvolution problem can enable us
to perform accurate and efficient blind deconvolution.

\appendix

\section{Proofs of the results in Section \ref{ssec:Identifiability} }

In this section we provide the proofs pertaining to the identifiability
analysis provided in Section \ref{ssec:Identifiability}.
\begin{proof}[Proof of Lemma \ref{lem:IdentifiabilityGraph}]
 The assumptions made in Section \ref{ssec:Identifiability} ensure
that the isolated vertices in the considered bipartite graph correspond
to the entries with value zero. Furthermore, for each edge of the
graph we observe the product of its end nodes. Therefore, in each
of the connected components of the graph, choosing the value
of only one of the vertices is enough to uniquely determine the value of
the other vertices of that component. This assignment of values does
not depend on the other connected components of the graph. Therefore,
if there are two or more connected components of order greater than
one, each of them can take independent values. This implies that the
corresponding matrix $\mb h\mb x^{\mr T}$ cannot be recovered uniquely.
\end{proof}

\begin{proof}[Proof of Theorem \ref{thm:IdentifiabilitySubspace}]
As mentioned in Section \ref{ssec:Identifiability}, we have restricted
our problem to scenarios where the measurements $x_{i}h_{j}$ for
any particular $i$ and similarly for any particular $j$ cannot be
simultaneously zero. Therefore, the zero entries of $\mb x$ (and
also $\mb h$) can be easily identified from the zero measurements.
By the assumption that $\mb x\neq0$, there is at least one $1\leq i\leq L$
where $x_{i}\neq0$ . Let $\left.\mb h\right|_{\msf J_{i}}$ denote the restriction
of $\mb h$ to the entries indexed by $\msf J_{i}$. Therefore, we
observe the vector $x_{i}\left.\mb h\right|_{\msf J_{i}}$ which is nonzero.
Invoking the assumption that the restriction of $\mb V$ to the rows
indexed by $\msf J_{i}$ is full-rank we deduce that we can recover
the vector $x_{i}\mb h$, uniquely by solving a least squares problem.
Then, it is straightforward to recover any remaining nonzero $x_{l}\mb h$
from the measurements.
\end{proof}

\section{Proofs of the results in Section \ref{ssec:ConvexBD} }

\subsection{Tools from probability theory}

In this section we provide the definitions and results from probability
theory that we frequently apply in the proofs.
\begin{defn}
For a convex and non-decreasing function $\psi:\mbb{R}_+\to\mbb{R}_+$ that satisfies\linebreak $\psi\left(0\right)=0$, the
Orlicz $\psi$-norm for a random matrix (or vector) $\mb X$ is defined
as 
\begin{align*}
\left\Vert \mb X\right\Vert _{\psi} & =\inf\left\{ u>0\mid\mbb E\left[\psi\left(\frac{\left\Vert \mb X\right\Vert }{u}\right)\right]\leq1\right\} .
\end{align*}
\end{defn}
Some important special cases of the Orlicz $\psi$-norms are
\begin{itemize}
\item the Orlicz $2$-norm, also known as the sub-Gaussian norm, denoted
by $\left\Vert \cdot\right\Vert _{\psi_{2}}$ with $\psi_{2}\left(t\right)=e^{\frac{t^{2}}{2}}-1$,
and 
\item the Orlicz $1$-norm, also known as subexponential norm, denoted by
$\left\Vert \cdot\right\Vert _{\psi_{1}}$ with $\psi_{1}\left(t\right)=e^{t}-1$.\end{itemize}
\begin{prop}[{Matrix Bernstein's inequality \cite[Proposition 2]{koltchinskii_2011_nuclear}}]
\label{pro:MatrixBernstein's} Let $\mb X_{1},\mb X_{2},\dotsc,$
and $\mb X_{n}$ be independent random matrices of dimension $d_{1}\times d_{2}$
that satisfy $\mbb E\left[\mb X_{i}\right]=\mb 0$. Suppose that for
$B>0$ we have 
\begin{align*}
\max_{i=1,2,\dotsc,n}\left\Vert \mb X_{i}\right\Vert _{\psi_{1}} & \leq B,
\end{align*}
 and define 
\begin{align*}
\sigma^{2}: & =\max\left\{ \left\Vert \mbb E\left[\sum_{i=1}^{n}\mb X_{i}\mb X_{i}^{*}\right]\right\Vert ,\left\Vert \mbb E\left[\sum_{i=1}^{n}\mb X_{i}^{*}\mb X_{i}\right]\right\Vert \right\} .
\end{align*}
 Then, there exist a constant $C$ such that for all $t\geq0$ the
tail bound 
\begin{align*}
\left\Vert \sum_{i=1}^{n}\mb X_{i}\right\Vert  & \leq C\max\left\{ \sigma\sqrt{t+\log\left(d_{1}+d_{2}\right)},B\log\left(\frac{\sqrt{n}B}{\sigma}\right)\left(t+\log\left(d_{1}+d_{2}\right)\right)\right\} ,
\end{align*}
 holds with probability at least $1-e^{-t}$. 
\end{prop}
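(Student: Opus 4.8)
The plan is to establish the bound through the standard matrix Laplace-transform machinery, adapted to the subexponential (Orlicz-$1$) setting by a truncation argument that accounts for the unboundedness of the summands.

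First I would reduce to the Hermitian case. Since the $\mb X_i$ are rectangular, I would pass to their Hermitian dilations $\mc D\left(\mb X_i\right)=\left[\begin{smallmatrix}\mb 0 & \mb X_i\\ \mb X_i^{*} & \mb 0\end{smallmatrix}\right]$, which are self-adjoint of size $\left(d_1+d_2\right)\times\left(d_1+d_2\right)$ and satisfy $\left\Vert\mc D\left(\mb X_i\right)\right\Vert=\left\Vert\mb X_i\right\Vert$ and $\mc D\left(\mb X_i\right)^2=\left[\begin{smallmatrix}\mb X_i\mb X_i^{*} & \mb 0\\ \mb 0 & \mb X_i^{*}\mb X_i\end{smallmatrix}\right]$. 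Consequently $\left\Vert\sum_i\mb X_i\right\Vert=\lambda_{\max}\left(\sum_i\mc D\left(\mb X_i\right)\right)$, the Orlicz-$1$ bound is preserved, and, since the operator norm of a block-diagonal matrix is the larger of its two blocks, $\left\Vert\mbb E\sum_i\mc D\left(\mb X_i\right)^2\right\Vert=\sigma^2$. This explains the dimensional factor $d_1+d_2$ in the tail and reduces the claim to a one-sided bound on the largest eigenvalue of a sum of independent, centered, Hermitian subexponential matrices. For $\theta>0$, Markov's inequality applied to the trace exponential gives
\[
\Pr\left[\lambda_{\max}\left(\textstyle\sum_i\mc D\left(\mb X_i\right)\right)\geq u\right]\leq e^{-\theta u}\,\mbb E\,\mr{tr}\exp\left(\theta\textstyle\sum_i\mc D\left(\mb X_i\right)\right),
\]
and by Lieb's concavity theorem, in the form of the master tail bound, the expected trace exponential is controlled by the individual matrix cumulant generating functions,
\[
\mbb E\,\mr{tr}\exp\left(\theta\textstyle\sum_i\mc D\left(\mb X_i\right)\right)\leq\mr{tr}\exp\left(\textstyle\sum_i\log\mbb E\,e^{\theta\mc D\left(\mb X_i\right)}\right),
\]
so that everything reduces to bounding the per-term matrix MGF $\mbb E\,e^{\theta\mc D\left(\mb X_i\right)}$.

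The crux, and the main obstacle, is that the subexponential (rather than bounded) hypothesis makes the per-term MGF finite only for $\theta$ in a bounded range, with factorially growing moments. I would truncate at a level $R\asymp B\log\left(\sqrt n B/\sigma\right)$: writing $\mb X_i=\mb X_i\mathbf 1_{\left\{\left\Vert\mb X_i\right\Vert\leq R\right\}}+\mb X_i\mathbf 1_{\left\{\left\Vert\mb X_i\right\Vert>R\right\}}$, the event that every summand stays below $R$ has probability at least $1-2ne^{-R/B}$ by the definition of $\left\Vert\cdot\right\Vert_{\psi_1}$ and a union bound, and on this event the sum equals the sum of its truncated pieces. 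The truncated, recentered matrices $\mb Y_i$ are then bounded by $2R$ in norm; their recentering bias $\sum_i\mbb E\,\mb X_i\mathbf 1_{\left\{\left\Vert\mb X_i\right\Vert>R\right\}}$ (which equals $-\sum_i\mbb E\,\mb X_i\mathbf 1_{\left\{\left\Vert\mb X_i\right\Vert\leq R\right\}}$ since $\mbb E\,\mb X_i=\mb 0$) is negligible in operator norm for $R$ this large, and their second-moment matrices remain dominated by $\sigma^2$. For bounded matrices the standard Bernstein MGF estimate $\log\mbb E\,e^{\theta\mb Y_i}\preceq\frac{\theta^2/2}{1-2R\theta/3}\,\mbb E\,\mb Y_i^2$, valid for $0<\theta<3/\left(2R\right)$, now applies.

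Finally I would assemble and optimize. Substituting the per-term estimate, using $\left\Vert\sum_i\mbb E\,\mb Y_i^2\right\Vert\lesssim\sigma^2$ and $\mr{tr}\exp\left(\cdot\right)\leq\left(d_1+d_2\right)e^{\lambda_{\max}\left(\cdot\right)}$, the failure probability becomes
\[
\Pr\left[\lambda_{\max}\geq u\right]\leq\left(d_1+d_2\right)\exp\left(-\theta u+\frac{C\theta^2\sigma^2}{1-2R\theta/3}\right)+2ne^{-R/B}.
\]
Choosing $\theta$ to balance the exponent yields the two-regime behavior: for small $u$ the quadratic term dominates, the optimal $\theta\approx u/\sigma^2$ gives the sub-Gaussian tail $\exp\left(-cu^2/\sigma^2\right)$; for large $u$ the admissible $\theta$ saturates near $3/\left(2R\right)$ and the tail becomes $\exp\left(-cu/R\right)$. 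Equating the bound to $e^{-t}$ after absorbing the $\left(d_1+d_2\right)$ and $n$ prefactors into the exponent and inverting for $u$ produces exactly $u\lesssim\max\left\{\sigma\sqrt{t+\log\left(d_1+d_2\right)},\,B\log\left(\sqrt n B/\sigma\right)\left(t+\log\left(d_1+d_2\right)\right)\right\}$. The delicate point is the choice of $R$, which must be large enough to absorb both the truncation probability $2ne^{-R/B}$ and the recentering bias yet small enough to keep the $B$-term sharp; this tension is precisely the source of the $\log\left(\sqrt n B/\sigma\right)$ factor.
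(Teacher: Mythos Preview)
The paper does not prove this proposition at all: it is stated as Proposition~\ref{pro:MatrixBernstein's} and attributed to \cite[Proposition 2]{koltchinskii_2011_nuclear}, then used as a black box throughout the appendix (Lemmas \ref{lem:HH}--\ref{lem:z_k}). So there is no ``paper's own proof'' to compare against.

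That said, your sketch is a faithful outline of the standard route to this result, essentially the one in the cited reference: Hermitian dilation to reduce to the self-adjoint case and explain the $d_1+d_2$ dimensional factor; the matrix Laplace transform combined with Lieb's concavity (the Tropp master bound); a truncation at level $R\asymp B\log(\sqrt{n}B/\sigma)$ to handle the $\psi_1$ tails; the bounded-case Bernstein MGF estimate on the truncated pieces; and finally the two-regime optimization in $\theta$. The one place to be careful is the bookkeeping after truncation: you must check that the recentered second moments are still controlled by a constant multiple of $\sigma^2$ (not just by $\sigma^2+nB^2$), and that the recentering bias $\sum_i\mathbb{E}\,\mb X_i\mathbf{1}_{\{\|\mb X_i\|>R\}}$ is genuinely $o(\sigma)$ for your choice of $R$---both hold, but they are the steps where a naive bound would lose the sharp $\log(\sqrt{n}B/\sigma)$ factor and replace it by $\log n$. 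If you want to turn this into a full proof, those two estimates deserve explicit lines.
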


\begin{prop}[{Orlicz norm of a finite maximum \cite[Lemma 2.2.2]{vandervaart_weak_1996}}]
\label{pro:MaxOrlicz} Let $\psi$ be a convex, nondecreasing, nonzero
function that obeys $\psi\left(0\right)=0$ and $\lim\sup_{x,y\to\infty}\psi\left(x\right)\psi\left(y\right)/\psi\left(cxy\right)<\infty$
for some constant $c$. Then, for any random variables $x_{1},x_{2},\dotsc,$
and $x_{n}$ we have 
\begin{align*}
\left\Vert \max_{i=1,2,\dotsc,n}x_{i}\right\Vert _{\psi} & \leq c_{\psi}\psi^{-1}\left(n\right)\max_{i=1,2,\dotsc,n}\left\Vert x_{i}\right\Vert _{\psi},
\end{align*}
 for some constant $c_{\psi}$ depending only on $\psi$.
\end{prop}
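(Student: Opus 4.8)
The plan is to follow the classical argument for the Orlicz norm of a finite maximum, the only genuinely delicate point being the reduction of the growth hypothesis on $\psi$ to a clean multiplicative inequality. Throughout I treat the $x_i$ as nonnegative (they play the role of the scalars $\left\Vert\mb X_i\right\Vert$ in the preceding definition), write $C:=\max_i\left\Vert x_i\right\Vert_\psi$, and normalize so that $C=1$; by the definition of the Orlicz norm together with monotone convergence and monotonicity of $\psi$, this gives $\mbb E\left[\psi\left(x_i\right)\right]\le1$ for every $i$.

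First I would convert the hypothesis $\limsup_{x,y\to\infty}\psi(x)\psi(y)/\psi(cxy)<\infty$ into the statement that, after enlarging the constant $c$,
\begin{align*}
\psi\left(x\right)\psi\left(y\right) & \le\psi\left(cxy\right)\qquad\text{for all }x,y\ge1.
\end{align*}
The $\limsup$ bound gives $\psi(x)\psi(y)\le K\psi(cxy)$ for $x,y$ past some threshold $x_{0}$, and since convexity with $\psi(0)=0$ yields $\psi(\lambda z)\ge\lambda\psi(z)$ for $\lambda\ge1$, replacing $c$ by $Kc$ absorbs the factor $K$. To push the threshold down from $x_{0}$ to $1$ I would modify $\psi$ on the compact interval $[0,x_{0}]$, keeping it convex, nondecreasing, equal to $0$ at the origin, and equal to the original $\psi$ for large arguments; such a modification changes $\left\Vert\cdot\right\Vert_\psi$ by at most a constant factor depending only on $\psi$, which is harmless. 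This normalization and extension step is where I expect the main technical (if largely routine) work to lie.

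Next I would extract the key scaling inequality. Taking $y=\psi^{-1}(n)$ (well defined and $\ge1$ for all but finitely many $n$, the exceptional small $n$ being absorbed into $c_\psi$), the multiplicative bound reads $n\,\psi(s)\le\psi\!\left(c\,s\,\psi^{-1}(n)\right)$ for $s\ge1$; substituting $a=c\,s\,\psi^{-1}(n)$ gives
\begin{align*}
\psi\left(\frac{a}{c\,\psi^{-1}\left(n\right)}\right) & \le\frac{1}{n}\,\psi\left(a\right)\qquad\text{whenever }a\ge c\,\psi^{-1}\left(n\right).
\end{align*}

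Finally I would set $u:=c\,\psi^{-1}(n)$ (recall $C=1$) and, using that $\psi$ is nondecreasing, write $\psi\!\left(\max_i x_i/u\right)=\max_i\psi\!\left(x_i/u\right)$, then split each index according to whether $x_i\ge u$. Bounding the maximum over the indices with $x_i\ge u=c\,\psi^{-1}(n)$ by their sum and applying the scaling inequality with $a=x_i$, these terms contribute at most $\sum_i\tfrac{1}{n}\psi(x_i)$, whose expectation is at most $\tfrac{1}{n}\cdot n=1$. For the indices with $x_i<u$ we simply have $x_i/u<1$, hence $\psi(x_i/u)\le\psi(1)$ deterministically. Combining,
\begin{align*}
\mbb E\left[\psi\left(\frac{\max_i x_i}{u}\right)\right] & \le1+\psi\left(1\right).
\end{align*}
Since $1+\psi(1)\ge1$, convexity and $\psi(0)=0$ give $\psi\!\left(\,\cdot\,/(1+\psi(1))\right)\le\psi(\cdot)/(1+\psi(1))$, so replacing $u$ by $(1+\psi(1))u$ brings the expectation below $1$. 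By the definition of the Orlicz norm this yields $\left\Vert\max_i x_i\right\Vert_\psi\le(1+\psi(1))\,c\,\psi^{-1}(n)$, and undoing the normalization $C=1$ gives the claim with $c_\psi=(1+\psi(1))\,c$, a constant depending only on $\psi$.
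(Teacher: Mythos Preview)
The paper does not prove this proposition; it is quoted verbatim from van der Vaart and Wellner \cite[Lemma 2.2.2]{vandervaart_weak_1996} and used as a black box. So there is no in-paper argument to compare against.

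Your proposal reproduces the standard proof from that reference and is essentially correct. One minor point: the detour through ``modify $\psi$ on $[0,x_0]$'' to push the multiplicative inequality down to $x,y\ge1$ is unnecessary and slightly awkward. You can instead enlarge $c$ once more: for $1\le x\le x_0$ and $y\ge x_0$ monotonicity gives $\psi(x)\psi(y)\le\psi(x_0)\psi(y)\le\psi(cx_0y)\le\psi(cx_0\,xy)$, and for $1\le x,y\le x_0$ the left side is at most $\psi(x_0)^2$, which is dominated by $\psi(c')$ for $c'$ large since $\psi$ is unbounded (from $\psi(\lambda z)\ge\lambda\psi(z)$ for $\lambda\ge1$). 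This avoids changing the Orlicz function and the associated equivalence-of-norms remark. The remaining steps --- the scaling inequality $\psi(a/u)\le\psi(a)/n$ for $a\ge u=c\,\psi^{-1}(n)$, the sum bound on the large indices, the trivial $\psi(1)$ bound on the small ones, and the final rescaling by $1+\psi(1)$ via $\psi(\lambda z)\le\lambda\psi(z)$ for $\lambda\le1$ --- are all fine.
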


\begin{prop}[{Hanson-Wright inequality \cite[Theorem 1.1]{rudelson_hanson-wright_2013}}]
\label{pro:Hanson-Wright}Let $\mb x\in\mbb R^{n}$ be a random vector
with independent components $x_{i}$ which satisfy $\mbb E\left[x_{i}\right]=0$
and $\left\Vert x_{i}\right\Vert _{\psi_{2}}\leq\rho$. Let $\mb A$
be an $n\times n$ matrix. Then, for every $t\geq0$, 
\begin{align*}
\mbb P\left\{ \left|\mb x^{\mr T}\mb A\mb x-\mbb E\left[\mb x^{\mr T}\mb A\mb x\right]\right|>t\right\}  & \leq2e^{-c\min\left\{ \frac{t^{2}}{\rho^{4}\left\Vert \mb A\right\Vert _{F}^{2}},\frac{t}{\rho^{2}\left\Vert \mb A\right\Vert }\right\} }.
\end{align*}

\end{prop}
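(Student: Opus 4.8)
The plan is to follow the exponential-moment argument of Rudelson and Vershynin, separating the quadratic form into its diagonal and off-diagonal contributions and combining their tails by a union bound. Writing $\mb x^{\mr T}\mb A\mb x-\mbb E[\mb x^{\mr T}\mb A\mb x]=\sum_i A_{ii}(x_i^2-\mbb E x_i^2)+\sum_{i\neq j}A_{ij}x_ix_j=:D+S$, it suffices to prove a tail bound of the stated shape for $D$ and for $S$ separately. For the diagonal term $D$, each summand is centered and sub-exponential, with $\left\Vert x_i^2\right\Vert _{\psi_1}\lesssim\left\Vert x_i\right\Vert _{\psi_2}^2\leq\rho^2$, and the $x_i$ are independent; hence the scalar Bernstein inequality for independent sub-exponential summands gives $\mbb P\{|D|>t\}\leq 2\exp(-c\min\{t^2/(\rho^4\sum_i A_{ii}^2),\,t/(\rho^2\max_i|A_{ii}|)\})$, which already has the required two-regime form since $\sum_i A_{ii}^2\leq\left\Vert \mb A\right\Vert _{F}^2$ and $\max_i|A_{ii}|\leq\left\Vert \mb A\right\Vert$.

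The substantive part is the off-diagonal chaos $S$, which I would control through its moment generating function $\mbb E\exp(\lambda S)$. First I would \emph{decouple}: by the standard decoupling inequality (valid for the convex function $s\mapsto e^{\lambda s}$) one may replace $S$ by $\widetilde S:=\sum_{i,j}A_{ij}x_ix_j'$, where $\mb x'$ is an independent copy of $\mb x$, at the price of an absolute constant multiplying the argument. Conditioning on $\mb x'$, the form $\widetilde S=\sum_i x_i(\mb A\mb x')_i$ is linear in the independent sub-Gaussian entries $x_i$, so the linear sub-Gaussian moment bound dominates it by the corresponding Gaussian linear form; applying the same domination in the $\mb x'$ coordinate yields, after rescaling $\lambda$ by an absolute constant times $\rho^2$, the Gaussian-chaos bound $\mbb E\exp(\lambda\widetilde S)\leq\mbb E\exp(c\rho^2\lambda\,\mb g^{\mr T}\mb A\mb g')$ with $\mb g,\mb g'$ independent standard Gaussian vectors. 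I expect this pair of reductions to be the main obstacle: the decoupling must be invoked in a form that controls the full exponential moment rather than a moment of fixed order, and the Gaussian comparison must be applied separately in each of the two vector arguments while keeping the absolute constants under control.

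Once reduced to the Gaussian chaos, the computation is explicit. Diagonalizing $\mb A=\mb U\mb\Sigma\mb V^{\mr T}$ by its singular value decomposition and using rotation invariance of $\mb g,\mb g'$, one gets $\mb g^{\mr T}\mb A\mb g'=\sum_k s_k\gamma_k\gamma_k'$ with independent standard normals $\gamma_k,\gamma_k'$ and singular values $s_k$, whence $\mbb E\exp(\mu\,\mb g^{\mr T}\mb A\mb g')=\prod_k(1-\mu^2 s_k^2)^{-1/2}$ for $|\mu|\,\left\Vert \mb A\right\Vert<1$. The inequality $-\tfrac12\log(1-u)\leq u$ for $u\leq1/2$ then gives $\log\mbb E\exp(\lambda S)\lesssim\lambda^2\rho^4\left\Vert \mb A\right\Vert _{F}^2$ whenever $|\lambda|\lesssim 1/(\rho^2\left\Vert \mb A\right\Vert)$. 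A Chernoff bound $\mbb P\{S>t\}\leq\exp(-\lambda t+c\lambda^2\rho^4\left\Vert \mb A\right\Vert _{F}^2)$ optimized over $\lambda$ in this admissible range yields the sub-Gaussian regime $t^2/(\rho^4\left\Vert \mb A\right\Vert _{F}^2)$ for small $t$ and, because the optimal $\lambda$ saturates the boundary $|\lambda|\sim 1/(\rho^2\left\Vert \mb A\right\Vert)$, the sub-exponential regime $t/(\rho^2\left\Vert \mb A\right\Vert)$ for large $t$. The same estimate applied to $-S$, together with the union bound over the diagonal and off-diagonal tails, gives the claimed inequality.
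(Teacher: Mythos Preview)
Your outline is a faithful and correct sketch of the Rudelson--Vershynin proof: the diagonal/off-diagonal split, Bernstein for the diagonal, decoupling followed by a two-step sub-Gaussian-to-Gaussian comparison for the off-diagonal chaos, and the explicit Gaussian MGF computation via the SVD all go through as you describe. The one place to be careful is the decoupling step---you need the version that controls the full exponential moment (equivalently, tails) rather than fixed moments, and you have flagged exactly that.

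However, note that in this paper Proposition~\ref{pro:Hanson-Wright} is not proved at all: it is quoted verbatim from \cite[Theorem~1.1]{rudelson_hanson-wright_2013} and used only as a black box (specifically, inside the proof of Lemma~\ref{lem:RademacherSubExp1} to bound the Orlicz $1$-norm of $\langle\mb a,\mb\phi\rangle\langle\mb\phi,\mb b\rangle-\langle\mb a,\mb b\rangle$). So there is no ``paper's own proof'' to compare against; what you have written is precisely the argument from the cited reference, and it is correct.
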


\subsection{Proof of Theorem \ref{thm:BlindDeconvolution}}

The auxiliary and intermediate lemmas needed for the proof of Theorem
\ref{thm:BlindDeconvolution} are subsumed to the latter parts of
this section. To prove the theorem we need to construct a dual certificate that
exhibits certain properties on the ``support set'' 
\begin{align}
\msf T & =\left\{ \widehat{\mb h}\mb v^{*}+\mb u\overline{\mb x}^{*}\mid\mb u\in\mbb C^{N},\mb v\in\mbb C^{L}\right\} ,\label{eq:Support}
\end{align}
 and its orthogonal complement $\msf T{}^{\perp}$.
\begin{proof}[Proof of Theorem \ref{thm:BlindDeconvolution}]
 Our proof begins by stating the conditions for unique recovery of
$\widehat{\mb h}\overline{\mb x}^{*}$ from (\ref{eq:NucMin}) which
parallels the arguments in \cite[Section 3.1]{ahmed_blind_2014}.
Without repeating every detail explained in \cite{ahmed_blind_2014},
we provide a sketch here for clarification. Recall that $\widehat{\mb h}$
and $\mb x$ are assumed to have unit $\ell_{2}$-norm. Let 
\begin{align*}
\mc P_{\msf T}: & \mb S\mapsto\widehat{\mb h}\widehat{\mb h}^{*}\mb S+\mb S\overline{\mb x}\overline{\mb x}^{*}-\widehat{\mb h}\widehat{\mb h}^{*}\mb S\overline{\mb x}\overline{\mb x}^{*},\qquad & \mc P_{\msf T^{\perp}}: & \mb S\mapsto\left(\mb I-\widehat{\mb h}\widehat{\mb h}^{*}\right)\mb S\left(\mb I-\overline{\mb x}\overline{\mb x}^{*}\right)
\end{align*}
 denote the orthognoal projections onto the subspaces $\msf T$ and
$\msf T^{\perp}$, respectively. It can be shown that (see,
e.g. \cite{recht_2011_simpler}) the matrix $\widehat{\mb h}\overline{\mb x}^{*}$
is a unique minimizer of (\ref{eq:NucMin}) if there exists a matrix
$\mb Y\in\mr{range}\left(\mc A^{*}\right)$ such that 
\begin{align*}
\Re\left\langle \widehat{\mb h}\overline{\mb x}^{*}-\mc P_{\msf T}\left(\mb Y\right),\mc P_{\msf T}\left(\mb Z\right)\right\rangle -\Re\left\langle \mc P_{\msf T^{\perp}}\left(\mb Y\right),\mc P_{\msf T^{\perp}}\left(\mb Z\right)\right\rangle +\left\Vert \mc P_{\msf T^{\perp}}\left(\mb Z\right)\right\Vert _{*} & >0,
\end{align*}
 for all $\mb Z\in\mr{null}\left(\mc A\right)$. Applying H\"{o}lder's
inequality to the first two terms shows that it suffices to find a
$\mb Y\in\mr{range}\left(\mc A^{*}\right)$ that satisfies 
\begin{align}
-\left\Vert \widehat{\mb h}\overline{\mb x}^{*}-\mc P_{\msf T}\left(\mb Y\right)\right\Vert _{F}\left\Vert \mc P_{\msf T}\left(\mb Z\right)\right\Vert _{F}+\left(1-\left\Vert \mc P_{\msf T^{\perp}}\left(\mb Y\right)\right\Vert \right)\left\Vert \mc P_{\msf T^{\perp}}\left(\mb Z\right)\right\Vert _{*} & >0,\label{eq:UniquenessSufficient}
\end{align}
 for all $\mb Z\in\mr{null}\left(\mc A\right)$. Using the fact that
$\mb Z\in\mr{null}\left(\mc A\right)$ and Lemma \ref{lem:ANearIsometry} below
which guarantees
\begin{align}
\left|\left\Vert \mc A\left(\mb X\right)\right\Vert _{F}^{2}-\mbb E\left[\left\Vert \mc A\left(\mb X\right)\right\Vert _{F}^{2}\right]\right| & \leq\frac{1}{2}\left\Vert \mb X\right\Vert _{F}^{2}\label{eq:ANearIsometry}
\end{align}
 for all $\mb X\in\msf T$, we can deduce that 
\begin{align*}
0=\left\Vert \mc A\left(\mb Z\right)\right\Vert _{F} & \geq\frac{1}{\sqrt{2}}\left\Vert \mc P_{\msf T}\left(\mb Z\right)\right\Vert _{F}-\left\Vert \mc A\right\Vert \left\Vert \mc P_{\msf T^{\perp}}\left(\mb Z\right)\right\Vert _{*}
\end{align*}
 holds with probability at least $1-3L^{-\beta}$ if $K\overset{\beta}{\gtrsim}\mu\log^{2}L\,\log\log\left(N+1\right)$
for some $\beta>0$. The bound (\ref{eq:ANearIsometry}) also guarantees
that $\mc P_{\msf T^{\perp}}\left(\mb Z\right)=\mb Z-\mc P_{\msf T}\left(\mb Z\right)$
cannot be the zero matrix. Combining these results and (\ref{eq:UniquenessSufficient})
shows that it suffices to find a $\mb Y\in\mr{range}\left(\mc A^{*}\right)$
(i.e., the dual certificate) that obeys 
\begin{align}
\sqrt{2}\left\Vert \mc A\right\Vert \left\Vert \mc P_{\msf T}\left(\mb Y\right)-\widehat{\mb h}\overline{\mb x}^{*}\right\Vert _{F} & \leq\frac{1}{4}\label{eq:OnT}
\end{align}
 and 
\begin{align}
\left\Vert \mc P_{\msf T^{\perp}}\left(\mb Y\right)\right\Vert  & <\frac{3}{4}.\label{eq:OnTperp}
\end{align}
\emph{ }

Similar to \cite{ahmed_blind_2014} we employ the golfing scheme \cite{gross-recovering-2011}
to construct the dual certificate $\mb Y$. Consider a partition
of the the index set $\left\{ 1,2,\dotsc,K\right\} $ to its disjoint
subsets $\msf K_{1},\msf K_{2},\dotsc$, and $\msf K_{P}$ such that
\begin{align*}
\left|\msf K_{p}\right| & =\frac{K}{P}\quad\forall p\in\left\{ 1,2,\dotsc,P\right\} .
\end{align*}
 Define the operator restricted to indices in $\msf K_{p}$ as 
\begin{align}
\mc A_{p}\left(\mb X\right) & :=\sqrt{\frac{LP}{K}}\left(\mb F_{N}\odot\mb X\right)\mb{\Phi}_{\msf K_{p}}.\label{eq:Op_p}
\end{align}
 Furthermore, we obtain a sequence of matrices $\mb Y_{0}=\mb 0,\mb Y_{1},\dotsc,\mb Y_{P}$
through the recursive relation 
\begin{align*}
\mb Y_{p} & =\mb Y_{p-1}+\mc A_{p}^{*}\mc A_{p}\left(\widehat{\mb h}\overline{\mb x}^{*}-\mc P_{\msf T}\left(\mb Y_{p-1}\right)\right).
\end{align*}
 Our goal is to show that $\mb Y=\mb Y_{P}$ satisfies (\ref{eq:OnT})
and (\ref{eq:OnTperp}) with high probability. With 
\begin{align}
\mb W_{p} & :=\mc P_{\msf T}\left(\mb Y_{p}\right)-\widehat{\mb h}\overline{\mb x}^{*},\label{eq:Wp}
\end{align}
projecting both sides of the above recursion onto $\msf T$ yields
\begin{align*}
\mb W_{p} & =\mb W_{p-1}-\mc P_{\msf T}\mc A_{p}^{*}\mc A_{p}\left(\mb W_{p-1}\right)\\
 & =\left(\mc P_{\msf T}-\mc P_{\msf T}\mc A_{p}^{*}\mc A_{p}\mc P_{\msf T}\right)\left(\mb W_{p-1}\right),
\end{align*}
 where the latter equation holds because $\mb W_{p-1}\in\msf T$ by
construction. Therefore, with 
\begin{align*}
\left|\msf K_{p}\right|=\frac{K}{P} & \overset{\beta}{\gtrsim}\mu\log^{2}L\,\log\log\left(N+1\right),
\end{align*}
we can invoke Lemma \ref{lem:ANearIsometry} below to guarantee that 
\begin{align*}
\left\Vert \mb W_{p}\right\Vert _{F} & \leq\frac{1}{2}\left\Vert \mb W_{p-1}\right\Vert _{F}\quad\forall p=1,2,\dotsc P,
\end{align*}
 and thus 
\begin{align}
\left\Vert \mb W_{p}\right\Vert _{F} & \leq2^{-p}\left\Vert \widehat{\mb h}\overline{\mb x}^{*}\right\Vert _{F}=2^{-p}\quad\forall p=1,2,\dotsc,P,\label{eq:WpFrobenius}
\end{align}
 hold with probability at least $1-3PL^{-\beta}$. This result implies
that (\ref{eq:OnT}) holds for $\mb Y=\mb Y_{P}$ if 
\begin{align*}
P & \geq\log_{2}\left(4\sqrt{2}\left\Vert \mc A\right\Vert \right)=\frac{5}{2}+\log_{2}\left\Vert \mc A\right\Vert .
\end{align*}
 From Lemma \ref{lem:normA} below we know that that $\left\Vert \mc A\right\Vert \lesssim1+\sqrt{\frac{L}{K}}+\sqrt{\frac{\beta\log L}{K}}$
with probability at least $1-L^{-\beta}$. Therefore, we can deduce
that with 
\begin{align}
P & \gtrsim\max\left\{ 1,\log\left(1+\sqrt{\frac{L}{K}}+\sqrt{\frac{\beta\log L}{K}}\right)\right\} ,\label{eq:P-lowerbound}
\end{align}
(\ref{eq:OnT}) holds for $\mb Y=\mb Y_{P}$ with probability exceeding
$1-\left(3P+1\right)L^{-\beta}$. 

To show that $\mb Y=\mb Y_{P}$ also obeys (\ref{eq:OnTperp}), we
begin by expressing each $\mb Y_{P}$ explicitly in terms of the matrices
$\mb W_{p}$ as 
\begin{align*}
\mb Y_{P} & =\sum_{p=1}^{P}\mb Y_{p}-\mb Y_{p-1}=-\sum_{p=1}^{P}\mc A_{p}^{*}\mc A_{p}\left(\mb W_{p-1}\right).
\end{align*}
Then using the fact that $\mc P_{\msf T^{\perp}}\left(\mb W_{p-1}\right)=\mb 0$
for all $p=1,2,\dotsc,P$ we can write 
\begin{align*}
\left\Vert \mc P_{\msf T^{\perp}}\left(\mb Y_{P}\right)\right\Vert  & =\left\Vert \mc P_{\msf T^{\perp}}\left(\sum_{p=1}^{P}\mc A_{p}^{*}\mc A_{p}\left(\mb W_{p-1}\right)\right)\right\Vert \\
 & =\left\Vert \mc P_{\msf T^{\perp}}\left(\sum_{p=1}^{P}\mc A_{p}^{*}\mc A_{p}\left(\mb W_{p-1}\right)-\mb W_{p-1}\right)\right\Vert \\
 & \leq\left\Vert \sum_{p=1}^{P}\mc A_{p}^{*}\mc A_{p}\left(\mb W_{p-1}\right)-\mb W_{p-1}\right\Vert \\
 & \leq\sum_{p=1}^{P}\left\Vert \mc A_{p}^{*}\mc A_{p}\left(\mb W_{p-1}\right)-\mb W_{p-1}\right\Vert .
\end{align*}
If the size of each partition $\msf K_{p}$ is sufficiently large
and specifically obeys 
\begin{align}
\frac{K}{P}=\left|\msf K_{p}\right| & \overset{\beta}{\gtrsim}\mu\log^{2}L\log\log\left(N+1\right),\label{eq:P-upperbound}
\end{align}
then we can apply Lemma \ref{lem:Ap*Ap}, stated below in Section \ref{sub:Decay}, to simplify the bound on
$\left\Vert \mc P_{\msf T^{\perp}}\left(\mb Y_{P}\right)\right\Vert $
and write 
\begin{align*}
\left\Vert \mc P_{\msf T^{\perp}}\left(\mb Y_{P}\right)\right\Vert  & \leq\frac{3}{4}\sum_{p=1}^{P}2^{-p}<\frac{3}{4},
\end{align*}
which holds with probability at least $1-cPL^{-\beta}$ where $c$
is an absolute constant. Therefore, if there exists a $P$ that satisfies
both (\ref{eq:P-lowerbound}) and (\ref{eq:P-upperbound}), then (\ref{eq:OnT})
and (\ref{eq:OnTperp}) simultaneously hold for $\mb Y=\mb Y_{P}$
with probability at least $1-c'PL^{-\beta}$ for some absolute constant
$c'$. To guarantee existence of such $P$, it suffices to have 
\begin{align*}
K & \overset{\beta}{\gtrsim}\mu\log^{2}L\log\frac{Le}{\mu}\log\log\left(N+1\right),
\end{align*}
 for which we can choose $P\lesssim\log\frac{Le}{\mu}$. Since $\mu\geq L/N$
we have $P\lesssim N$. The probability of the desired events exceeds
$1-O\left(NL^{-\beta}\right)$. 
\end{proof}

\subsubsection{$\protect\mc A_{p}$ is a near isometry on $\protect\msf T$}

We would like to show that the restriction of $\mc A_{p}$, defined
by (\ref{eq:Op_p}), to the subspace $\msf{T}$ in (\ref{eq:Support})
has a near isometry behavior. In particular, our goal is to show that
for all $\mb X\in\msf T$ the inequality 
\begin{align*}
\left|\left\Vert \mc A_{p}\left(\mb X\right)\right\Vert _{F}^{2}-\mbb E\left[\left\Vert \mc A_{p}\left(\mb X\right)\right\Vert _{F}^{2}\right]\right| & \leq\frac{1}{2}\left\Vert \mb X\right\Vert _{F}^{2}
\end{align*}
holds with high probability. The following lemma with $\msf K=\msf K_{p}$
establishes the desired property.
\begin{lem}[near isometry of $\mc A_{\msf K}$ on $\msf T$]
\label{lem:ANearIsometry}Let be $\msf K\subseteq\left\{ 1,2,\dotsc,K\right\} $
be an arbitrary index set and define 
\begin{align}
\mc A_{\msf K}\left(\mb X\right) & :=\sqrt{\frac{L}{\left|\msf K\right|}}\left(\mb F_{N}\odot\mb X\right)\mb{\Phi}_{\msf K}.\label{eq:AK}
\end{align}
For any $\beta>0$, if we have 
\begin{align*}
\left|\msf K\right| & \overset{\beta}{\gtrsim}\mu\log^{2}L\,\log\log\left(N+1\right),
\end{align*}
 then
\begin{align*}
\left|\left\Vert \mc A_{\msf K}\left(\mb X\right)\right\Vert _{F}^{2}-\mbb E\left[\left\Vert \mc A_{\msf K}\left(\mb X\right)\right\Vert _{F}^{2}\right]\right| & \leq\frac{1}{2}\left\Vert \mb X\right\Vert _{F}^{2}
\end{align*}
 for all $\mb X\in\msf T$, with probability at least $1-3L^{-\beta}$.\end{lem}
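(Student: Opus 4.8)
The plan is to recast the statement as an operator-norm estimate on the tangent space $\msf T$ and then apply the matrix Bernstein inequality (Proposition~\ref{pro:MatrixBernstein's}) to a sum of independent, per-mask operators; the coherence $\mu$ and the delicate $\log\log\left(N+1\right)$ factor will enter through the Orlicz-norm bound on a single term. First I would identify the mean of the quadratic form. Writing $\mb Y=\mb F_{N}\odot\mb X$ and using that every entry of $\mb F_{N}$ has modulus $L^{-1/2}$ together with $\mbb E\left[\mb\phi_{k}\mb\phi_{k}^{*}\right]=\mb I$, a direct computation gives $\mbb E\left\Vert \mc A_{\msf K}\left(\mb X\right)\right\Vert _{F}^{2}=L\left\Vert \mb F_{N}\odot\mb X\right\Vert _{F}^{2}=\left\Vert \mb X\right\Vert _{F}^{2}$ for every $\mb X$, so that $\mbb E\left[\mc A_{\msf K}^{*}\mc A_{\msf K}\right]=\mc I$. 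Because $\mc P_{\msf T}\mb X=\mb X$ for $\mb X\in\msf T$ and $\mc P_{\msf T}$ is self-adjoint, the quantity to be controlled equals $\left\langle \mb X,\mc P_{\msf T}\left(\mc A_{\msf K}^{*}\mc A_{\msf K}-\mc I\right)\mc P_{\msf T}\mb X\right\rangle $, and hence it suffices to prove the operator-norm bound
\begin{align*}
\left\Vert \mc P_{\msf T}\left(\mc A_{\msf K}^{*}\mc A_{\msf K}-\mc I\right)\mc P_{\msf T}\right\Vert  & \leq\frac{1}{2},
\end{align*}
which is the reduction underlying the golfing scheme.

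Next I would expand $\mc A_{\msf K}^{*}\mc A_{\msf K}$ into independent contributions indexed by the masks. Setting $\mc B_{k}\left(\mb X\right):=\overline{\mb F}_{N}\odot\left(\left(\mb F_{N}\odot\mb X\right)\mb\phi_{k}\mb\phi_{k}^{*}\right)$, one checks that each $\mc B_{k}$ is self-adjoint with $\mbb E\mc B_{k}=\tfrac{1}{L}\mc I$, and that $\mc A_{\msf K}^{*}\mc A_{\msf K}=\tfrac{L}{\left|\msf K\right|}\sum_{k\in\msf K}\mc B_{k}$. Consequently $\mc P_{\msf T}\left(\mc A_{\msf K}^{*}\mc A_{\msf K}-\mc I\right)\mc P_{\msf T}=\sum_{k\in\msf K}\mc Z_{k}$, where $\mc Z_{k}:=\tfrac{L}{\left|\msf K\right|}\mc P_{\msf T}\left(\mc B_{k}-\tfrac{1}{L}\mc I\right)\mc P_{\msf T}$ are independent, mean-zero, self-adjoint operators supported on the $\left(N+L-1\right)$-dimensional space $\msf T$. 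I would then apply Proposition~\ref{pro:MatrixBernstein's} to $\sum_{k}\mc Z_{k}$, viewing the $\mc Z_{k}$ as matrices over $\msf T$; the dimension factor $\log\left(d_{1}+d_{2}\right)\asymp\log\left(N+L\right)\lesssim\log L$ will supply one of the two logarithmic powers of $L$. This requires bounding the variance proxy $\sigma^{2}=\left\Vert \sum_{k}\mbb E\mc Z_{k}^{2}\right\Vert $ and the subexponential scale $B=\max_{k}\left\Vert \mc Z_{k}\right\Vert _{\psi_{1}}$.

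The variance $\sigma^{2}=\tfrac{L^{2}}{\left|\msf K\right|}\left\Vert \mbb E\left(\mc P_{\msf T}\left(\mc B_{1}-\tfrac{1}{L}\mc I\right)\mc P_{\msf T}\right)^{2}\right\Vert $ I would estimate by expanding the square and using the fourth-moment structure of the Rademacher vector together with the explicit rank-two form of $\mc P_{\msf T}$, which only involves the directions $\widehat{\mb h}$ and $\overline{\mb x}$. The spikiness of $\widehat{\mb h}$ is what governs the estimate, so $\left\Vert \widehat{\mb h}\right\Vert _{\infty}^{2}=\mu/L$ enters here and should yield $\sigma^{2}\lesssim\mu/\left|\msf K\right|$; this part is essentially routine bookkeeping once the moments are organized.

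The subexponential bound $B$ is the main obstacle. Here $\left\Vert \mc Z_{k}\right\Vert $ is the supremum over $\mb X\in\msf T$ of the centered Rademacher quadratic forms $\mb\phi_{k}^{*}\mb M_{\mb X}\mb\phi_{k}-\mbb E[\,\cdot\,]$ with $\mb M_{\mb X}=\left(\mb F_{N}\odot\mb X\right)^{*}\left(\mb F_{N}\odot\mb X\right)$, so a naive $\varepsilon$-net over $\msf T$ would cost $\log\left(\text{net size}\right)\asymp N+L$ and destroy the estimate. Instead I would exploit the rank-two structure of $\mc P_{\msf T}$ to reduce the relevant supremum to a maximum over the $N$ frequency rows of $\mb F_{N}$, bound each individual chaos term in $\psi_{1}$ via Hanson--Wright (Proposition~\ref{pro:Hanson-Wright}) with the coherence $\mu$ setting its scale, and then pass to the maximum through the Orlicz maximal inequality (Proposition~\ref{pro:MaxOrlicz}). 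Isolating a merely \emph{doubly}-logarithmic dependence $\log\log\left(N+1\right)$ from this last step, rather than a full power of $\log\left(N+1\right)$, is the most delicate point of the proof and should give $B\lesssim\tfrac{\mu}{\left|\msf K\right|}\log\log\left(N+1\right)$. Finally, taking $t=\beta\log L$, the Bernstein tail forces the Gaussian term $\sigma\sqrt{t+\log\left(d_{1}+d_{2}\right)}$ and the heavy-tail term $B\log\left(\sqrt{\left|\msf K\right|}B/\sigma\right)\left(t+\log\left(d_{1}+d_{2}\right)\right)$---whose logarithmic prefactor contributes the second power of $\log L$---both below $\tfrac{1}{4}$ exactly when $\left|\msf K\right|\overset{\beta}{\gtrsim}\mu\log^{2}L\,\log\log\left(N+1\right)$, with failure probability at most $3L^{-\beta}$, as claimed.
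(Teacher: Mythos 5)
Your reduction to the self-adjoint operator bound $\left\Vert \mc P_{\msf T}\left(\mc A_{\msf K}^{*}\mc A_{\msf K}-\mc I\right)\mc P_{\msf T}\right\Vert \leq\frac{1}{2}$ and the identification $\mbb E\left[\mc A_{\msf K}^{*}\mc A_{\msf K}\right]=\mc I$ are both correct, and your architecture (a single matrix Bernstein application to the compressed operators $\mc Z_{k}$ viewed as matrices on the $\left(N+L-1\right)$-dimensional space $\msf T$) is genuinely different from the paper's: there, one writes $\mb X=\widehat{\mb h}\mb v^{*}+\mb u\overline{\mb x}^{*}$ with $\overline{\mb v}\perp\mb x$, factors $\left\Vert \mb u\right\Vert _{2}$ and $\left\Vert \mb v\right\Vert _{2}$ out of the deviation, and controls three matrix sums $\sum_{k}\mb Z_{k}$, $\sum_{k}\mb Z'_{k}$, $\sum_{k}\mb Z''_{k}$ by three separate Bernstein applications (Lemmas \ref{lem:HH}, \ref{lem:XX}, \ref{lem:XH}), so uniformity over $\msf T$ is automatic. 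The genuine gap sits exactly at the step you flag as most delicate: the claim $B=\max_{k}\left\Vert \mc Z_{k}\right\Vert _{\psi_{1}}\lesssim\frac{\mu}{\left|\msf K\right|}\log\log\left(N+1\right)$, with the double logarithm extracted from the Orlicz maximal inequality, is unattainable. Proposition \ref{pro:MaxOrlicz} with $\psi_{1}$ gives the factor $\psi_{1}^{-1}\left(N\right)=\log\left(N+1\right)$, and this order is tight here rather than an artifact of the inequality: the block of $\mc Z_{k}$ along the $\mb u\overline{\mb x}^{*}$ direction is, up to normalization, the paper's $\mb Z'_{k}=L\left(\mb D_{\left|\mb F_{N}\left(\mb x\odot\mb{\phi}_{k}\right)\right|^{2}}-\frac{\left\Vert \mb x\right\Vert _{2}^{2}}{L}\mb I\right)$, whose spectral norm is the maximum of $N$ Rademacher chaos variables each of constant $\psi_{1}$-scale; for a flat image $\mb x$ this is the peak of the squared partial DFT of a random sign vector, which genuinely grows like $\log N$, so $\left\Vert \mc Z_{k}\right\Vert _{\psi_{1}}\gtrsim\log\left(N+1\right)/\left|\msf K\right|$ on that block and no rank-two trick removes the full logarithm.

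The correct bookkeeping --- and the place the $\log\log$ actually lives --- is $B\lesssim\max\left\{ \mu,\sqrt{\mu}\log\left(N+1\right)\right\} /\left|\msf K\right|$ (deterministically $\left\Vert \mb Z_{k}\right\Vert \leq\mu$ on the $\widehat{\mb h}$-block, and $\lesssim\log\left(N+1\right)$, respectively $\lesssim\sqrt{\mu}\log\left(N+1\right)$, on the other blocks via Proposition \ref{pro:MaxOrlicz} combined with Lemmas \ref{lem:RademacherSubExp1} and \ref{lem:RademacherSubExp2}), together with $\sigma^{2}\lesssim\mu/\left|\msf K\right|$ as you correctly anticipate. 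The double logarithm then emerges from the prefactor $\log\left(\sqrt{n}B/\sigma\right)$ in Proposition \ref{pro:MatrixBernstein's}: with $B\asymp\sqrt{\mu}\log\left(N+1\right)/\left|\msf K\right|$ and $\sigma\asymp\sqrt{\mu/\left|\msf K\right|}$ one gets $\log\left(\sqrt{\left|\msf K\right|}B/\sigma\right)\asymp\log\log\left(N+1\right)$, and at $t=\beta\log L$ the heavy-tail term is $\lesssim\sqrt{\mu}\log\left(N+1\right)\log\log\left(N+1\right)\log L/\left|\msf K\right|$, which the hypothesis $\left|\msf K\right|\overset{\beta}{\gtrsim}\mu\log^{2}L\,\log\log\left(N+1\right)$ dominates because $\mu\geq L/N\geq1$ and $N\leq L$ imply $\log\left(N+1\right)\lesssim\sqrt{\mu}\log L$. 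So your proof is repairable by relocating the $\log\log\left(N+1\right)$ from $B$ to the Bernstein prefactor; as written, however, the stated bound on $B$ is false, and the plan for obtaining it would fail.
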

\begin{proof}
For every $\mb X=\widehat{\mb h}\mb v^{*}+\mb u\overline{\mb x}^{*}\in\msf{T}$ we have
\begin{align}
\left\Vert \mc A_{\msf K}\left(\mb X\right)\right\Vert _{F}^{2}-\mbb E\left[\left\Vert \mc A_{\msf K}\left(\mb X\right)\right\Vert _{F}^{2}\right] & =\frac{L}{\left|\msf K\right|}\sum_{k\in\msf K}\mr{tr}\left(\left(\mb F{}_{N}\odot\mb X\right)\left(\mb{\phi}_{k}\mb{\phi}_{k}^{*}-\mb I\right)\left(\mb F{}_{N}\odot\mb X\right)^{*}\right)\nonumber \\
 & =\frac{L}{\left|\msf K\right|}\sum_{k\in\msf K}\mr{tr}\left.\Bigl(\left(\mb D_{\widehat{\mb h}}\mb F_{N}\mb D_{\mb v}^{*}+\mb D_{\mb u}\mb F_{N}\mb D_{\mb x}\right)\left(\mb{\phi}_{k}\mb{\phi}_{k}^{*}-\mb I\right)\right.\nonumber \\
 &\hspace{5em}\left.\left(\mb D_{\mb v}\mb F_{N}^{*}\mb D_{\widehat{\mb h}}^{*}+\mb D_{\mb x}^{*}\mb F_{N}^{*}\mb D_{\mb u}^{*}\right)\right)\nonumber \\
 & =\frac{L}{\left|\msf K\right|}\sum_{k\in\msf K}\mr{tr}\left(\mb D_{\widehat{\mb h}}\mb F_{N}\mb D_{\mb v}^{*}\left(\mb{\phi}_{k}\mb{\phi}_{k}^{*}-\mb I\right)\mb D_{\mb v}\mb F_{N}^{*}\mb D_{\widehat{\mb h}}^{*}\right)\nonumber \\
 & +\frac{L}{\left|\msf K\right|}\sum_{k\in\msf K}\mr{tr}\left(\mb D_{\mb u}\mb F_{N}\mb D_{\mb x}\left(\mb{\phi}_{k}\mb{\phi}_{k}^{*}-\mb I\right)\mb D_{\mb x}^{*}\mb F_{N}^{*}\mb D_{\mb u}^{*}\right)\nonumber \\
 & +\frac{L}{\left|\msf K\right|}\sum_{k\in\msf K}\mr{tr}\left(\mb D_{\widehat{\mb h}}\mb F_{N}\mb D_{\mb v}^{*}\left(\mb{\phi}_{k}\mb{\phi}_{k}^{*}-\mb I\right)\mb D_{\mb x}^{*}\mb F_{N}^{*}\mb D_{\mb u}^{*}\right)\nonumber \\
 & +\frac{L}{\left|\msf K\right|}\sum_{k\in\msf K}\mr{tr}\left(\mb D_{\mb u}\mb F_{N}\mb D_{\mb x}\left(\mb{\phi}_{k}\mb{\phi}_{k}^{*}-\mb I\right)\mb D_{\mb v}\mb F_{N}^{*}\mb D_{\widehat{\mb h}}^{*}\right).\label{eq:BDDevExpansion}
\end{align}
Therefore, for all $\mb{X}\in\msf{T}$ we can write (\ref{eq:BDDevExpansion}) as 
\begin{align*}
\left\Vert \mc A_{\msf K}\left(\mb X\right)\right\Vert _{F}^{2}-\mbb E\left[\left\Vert \mc A_{\msf K}\left(\mb X\right)\right\Vert _{F}^{2}\right] & =\frac{1}{\left|\msf K\right|}\sum_{k\in\msf K}\left\langle \mb Z_{k},\overline{\mb v}\overline{\mb v}^{*}\right\rangle +\frac{1}{\left|\msf K\right|}\sum_{k\in\msf K}\left\langle \mb Z'_{k},\overline{\mb u}\overline{\mb u}^{*}\right\rangle +\frac{2}{\left|\msf K\right|}\sum_{k\in\msf K}\Re\left\langle \mb Z''_{k},\overline{\mb v}\mb u^{*}\right\rangle .
\end{align*}
where the summands are expressed using the matrices 
\begin{align}
\mb Z_{k} & =L\left(\mb D_{\mb{\phi}_{k}}^{*}\mb F_{N}^{*}\mb D_{\left|\widehat{\mb h}\right|^{2}}\mb F_{N}\mb D_{\mb{\phi}_{k}}-\frac{\left\Vert \widehat{\mb h}\right\Vert _{2}^{2}}{L}\mb I\right),\label{eq:Z_k}
\end{align}
 
\begin{align}
\mb Z'_{k} & =L\mr{diag}\left(\mb F_{N}\mb D_{\mb x}\left(\mb{\phi}_{k}\mb{\phi}_{k}^{*}-\mb I\right)\mb D_{\mb x}^{*}\mb F_{N}^{*}\right),\label{eq:Z'_k}
\end{align}
 and 
\begin{align}
\mb Z''_{k} & =L\left(\mb D_{\mb{\phi}_{k}}^{*}\mb F_{N}^{*}\mb D_{\widehat{\mb h}}^{*}\mb D_{\mb F_{N}\left(\mb x\odot\mb{\phi}_{k}\right)}-\frac{1}{L}\mb x\widehat{\mb h}^{*}\right)\label{eq:Z''_k}
\end{align}
Then, the triangle inequality yields 
\begin{align*}
\left|\left\Vert \mc A_{\msf K}\left(\mb X\right)\right\Vert _{F}^{2}-\mbb E\left[\left\Vert \mc A_{\msf K}\left(\mb X\right)\right\Vert _{F}^{2}\right]\right| & \leq\frac{\left\Vert \mb v\right\Vert _{2}^{2}}{\left|\msf K\right|}\left\Vert \sum_{k\in\msf K}\mb Z{}_{k}\right\Vert+\frac{\left\Vert \mb u\right\Vert _{2}^{2}}{\left|\msf K\right|}\left\Vert \sum_{k\in\msf K}\mb Z'_{k}\right\Vert +\frac{2\left\Vert \mb u\right\Vert _{2}\left\Vert \mb v\right\Vert _{2}}{\left|\msf K\right|}\left\Vert \sum_{k\in\msf K}\mb Z''_{k}\right\Vert.
\end{align*}
Without loss of generality we can assume that $\mb{\overline{v}}$ is orthogonal
to $\mb x$ which implies that 
\begin{align*}
\left\Vert \mb X\right\Vert _{F}^{2} & =\left\Vert \widehat{\mb h}\right\Vert _{2}^{2}\left\Vert \mb v\right\Vert _{2}^{2}+\left\Vert \mb u\right\Vert _{2}^{2}\left\Vert \mb x\right\Vert _{2}^{2}=\left\Vert \mb u\right\Vert _{2}^{2}+\left\Vert \mb v\right\Vert _{2}^{2},
\end{align*}
 and thereby 
\begin{align}
\left|\left\Vert \mc A_{\msf K}\left(\mb X\right)\right\Vert _{F}^{2}-\mbb E\left[\left\Vert \mc A_{\msf K}\left(\mb X\right)\right\Vert _{F}^{2}\right]\right| & \leq\frac{1}{\left|\msf K\right|}\left(\left\Vert \sum_{k\in\msf K}\mb Z{}_{k}\right\Vert +\left\Vert \sum_{k\in\msf K}\mb Z'_{k}\right\Vert +\left\Vert \sum_{k\in\msf K}\mb Z''_{k}\right\Vert \right)\left\Vert \mb X\right\Vert _{F}^{2},\label{eq:BDMainDevBound}
\end{align}
holds for all $\mb{X}\in\msf{T}$.
Therefore, it suffices to bound the operator norm of the sums
of $\mb Z_{k}$s, $\mb Z'_{k}$s, and $\mb Z''_{k}$s, separately.
As shown by Lemmas \ref{lem:HH}, \ref{lem:XX}, and \ref{lem:XH} in Section \ref{ssec:Lemmas}, the
matrix Bernstein's inequality can be used to establish the desired
bounds. It follows from these lemmas that for $\left|\msf K\right|\overset{\beta}{\gtrsim}\mu\log^{2}L\,\log\log\left(N+1\right)$,
the bound 
\begin{align*}
\left|\left\Vert \mc A_{\msf K}\left(\mb X\right)\right\Vert _{F}^{2}-\mbb E\left[\left\Vert \mc A_{\msf K}\left(\mb X\right)\right\Vert _{F}^{2}\right]\right| & \leq\frac{1}{2}\left\Vert \mb X\right\Vert _{F}^{2}
\end{align*}
holds for all $\mb{X}\in\msf{T}$ with probability at least $1-3L^{-\beta}$. 
\end{proof}

\subsubsection{Operator norm of $\protect\mc A$} The following lemma establishes a global bound for the operator norm of $\mc A$ that holds with high probability.
\begin{lem}[the operator norm of $\mc A$]
\label{lem:normA}For any $\beta>0$, the operator norm of $\mc A$
can be bounded as 
\begin{align*}
\left\Vert \mc A\right\Vert  & \lesssim\sqrt{\frac{L}{K}}+\sqrt{\frac{\beta\log L}{K}}
\end{align*}
 with probability at least $1-L^{-\beta}$.\end{lem}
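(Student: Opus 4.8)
The plan is to decouple the random matrix $\mb\Phi$ from the deterministic, structured factor $\mb F_N\odot\mb X$ and thereby reduce the claim to a tail estimate for the spectral norm of the $L\times K$ Rademacher matrix $\mb\Phi$. The key starting observation is that every entry of the truncated normalized DFT matrix $\mb F_N$ has modulus $1/\sqrt{L}$, so that $\left\Vert\mb F_N\odot\mb X\right\Vert_F^2=\tfrac1L\left\Vert\mb X\right\Vert_F^2$ for every $\mb X$. Writing $\mb B:=\mb F_N\odot\mb X$ and using that $\mb\Phi\mb\Phi^*$ is a positive semidefinite $L\times L$ matrix, I would estimate
\begin{align*}
\left\Vert\mc A\left(\mb X\right)\right\Vert_F^2 & =\frac{L}{K}\,\mr{tr}\left(\mb B\,\mb\Phi\mb\Phi^*\,\mb B^*\right)\leq\frac{L}{K}\left\Vert\mb\Phi\mb\Phi^*\right\Vert\left\Vert\mb B\right\Vert_F^2=\frac{\left\Vert\mb\Phi\right\Vert^2}{K}\left\Vert\mb X\right\Vert_F^2,
\end{align*}
where the inequality is the elementary bound $\mr{tr}\left(\mb B\mb M\mb B^*\right)\leq\left\Vert\mb M\right\Vert\left\Vert\mb B\right\Vert_F^2$ valid for any positive semidefinite $\mb M$. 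Taking the supremum over $\left\Vert\mb X\right\Vert_F=1$ yields the deterministic reduction $\left\Vert\mc A\right\Vert\leq\left\Vert\mb\Phi\right\Vert/\sqrt K$. Note that this step sidesteps the complex structure of $\mb B$ entirely: only the spectral norm of the \emph{real} matrix $\mb\Phi$ remains to be controlled.

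It then remains to show $\left\Vert\mb\Phi\right\Vert\lesssim\sqrt L+\sqrt K+\sqrt{\beta\log L}$ with probability at least $1-L^{-\beta}$. I would obtain this through a standard $\varepsilon$-net argument. Fixing $\tfrac14$-nets $\mc N\subset S^{L-1}$ and $\mc M\subset S^{K-1}$ with $\left|\mc N\right|\leq 9^L$ and $\left|\mc M\right|\leq 9^K$, one has $\left\Vert\mb\Phi\right\Vert\leq 2\max_{\mb u\in\mc N,\,\mb w\in\mc M}\mb u^{\mr T}\mb\Phi\mb w$. For fixed unit vectors $\mb u,\mb w$ the scalar $\mb u^{\mr T}\mb\Phi\mb w=\left\langle\mb u\mb w^{\mr T},\mb\Phi\right\rangle$ is a linear combination of the independent Rademacher entries of $\mb\Phi$ with coefficient matrix $\mb u\mb w^{\mr T}$ of unit Frobenius norm, hence sub-Gaussian with $\mbb P\left\{\mb u^{\mr T}\mb\Phi\mb w>s\right\}\leq e^{-s^2/2}$. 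A union bound over the at most $9^{L+K}$ net pairs gives $\left\Vert\mb\Phi\right\Vert>2s$ with probability at most $9^{L+K}e^{-s^2/2}$, and choosing $s^2\gtrsim\left(L+K\right)+\beta\log L$ makes this at most $L^{-\beta}$, producing the asserted bound on $\left\Vert\mb\Phi\right\Vert$. (Equivalently one may cite any off-the-shelf spectral-norm estimate for matrices with independent sub-Gaussian entries.)

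Combining the two steps gives $\left\Vert\mc A\right\Vert\leq\left\Vert\mb\Phi\right\Vert/\sqrt K\lesssim\sqrt{L/K}+1+\sqrt{\beta\log L/K}$ with probability at least $1-L^{-\beta}$, which is precisely the form invoked in the proof of Theorem \ref{thm:BlindDeconvolution}. In the regime relevant there the number of masks obeys $K\lesssim L$ (consistent with $\mu\geq L/N$ and the poly-logarithmic mask budget), so that $1\leq\sqrt{L/K}$ and the constant term is absorbed, yielding the stated $\left\Vert\mc A\right\Vert\lesssim\sqrt{L/K}+\sqrt{\beta\log L/K}$. I expect the main obstacle to be securing the sharp $\sqrt L$ scaling rather than a lossy $\sqrt{L\log L}$: a naive matrix-Bernstein bound on $\mb\Phi\mb\Phi^*-K\mb I=\sum_k\left(\mb\phi_k\mb\phi_k^*-\mb I\right)$, whose summands have operator norm of order $L$, introduces spurious logarithmic factors and a fourth-root term of order $\left(L\log L/K\right)^{1/4}$. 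The net argument—where the ambient dimension enters only through the metric entropy $L+K$ of the two spheres—is therefore the route that delivers the correct rate.
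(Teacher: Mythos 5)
Your proposal is correct and takes essentially the same approach as the paper: both exploit the constant modulus $1/\sqrt{L}$ of the entries of $\mb F_{N}$ to obtain the deterministic reduction $\left\Vert \mc A\right\Vert \leq\left\Vert \mb{\Phi}\right\Vert /\sqrt{K}$ and then invoke a spectral-norm tail bound for matrices with iid sub-Gaussian entries. The only cosmetic difference is that the paper cites this bound off the shelf from \cite[Proposition 2.4]{rudelson_non-asymptotic-2010} while you rederive it via the standard $\varepsilon$-net argument, and your observation that the additive constant $1$ (the $\sqrt{K}/\sqrt{K}$ term) is absorbed into $\sqrt{L/K}$ when $K\lesssim L$ correctly accounts for a term the lemma's statement silently drops.
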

\begin{proof}
By definition we have 
\begin{align*}
\left\Vert \mc A\right\Vert  & =\sup_{\mb X\neq\mb 0}\frac{\left\Vert \mc A\left(\mb X\right)\right\Vert _{F}}{\left\Vert \mb X\right\Vert _{F}}\\
 & =\sup_{\mb X\neq\mb 0}\frac{\left\Vert \sqrt{\frac{L}{K}}\left(\mb F_{N}\odot\mb X\right)\mb{\Phi}\right\Vert _{F}}{\left\Vert \mb X\right\Vert _{F}}\\
 & \leq\sqrt{\frac{L}{K}}\left\Vert \mb{\Phi}\right\Vert \sup_{\mb X\neq\mb 0}\frac{\left\Vert \left(\mb F_{N}\odot\mb X\right)\right\Vert _{F}}{\left\Vert \mb X\right\Vert _{F}}\\
 & =\frac{1}{\sqrt{K}}\left\Vert \mb{\Phi}\right\Vert ,
\end{align*}
where the last inequality holds because $\left\Vert \mb F_{N}\odot\mb X\right\Vert _{F}=\frac{1}{\sqrt{L}}\left\Vert \mb X\right\Vert _{F}$.
Therefore, we can use standard tail bounds for the spectral norm of
random matrices with independent sub-Gaussian entries to bound $\left\Vert \mb{\Phi}\right\Vert $
and thus $\left\Vert \mc A\right\Vert $. For example, the bound established
in \cite[Proposition 2.4]{rudelson_non-asymptotic-2010} guarantees
that for any $t>0$ we have 
\begin{align*}
\left\Vert \mb{\Phi}\right\Vert  & \leq C\left(\sqrt{L}+\sqrt{K}\right)+t
\end{align*}
 with probability at least $1-2e^{-ct^{2}}$, where $C$ and $c$
are absolute constants. Therefore, setting $t=\sqrt{\frac{\beta\log L+\log2}{c}}$
we can show that 
\begin{align*}
\left\Vert \mb{\Phi}\right\Vert  & \lesssim\sqrt{L}+\sqrt{K}+\sqrt{\beta\log L}
\end{align*}
holds with probability at least $1-L^{-\beta}$. This completes the
proof.
\end{proof}

\subsubsection{\label{sub:Decay}Decay of $\left\Vert \protect\mc A_{p}^{*}\protect\mc A_{p}\left(\protect\mb W_{p-1}\right)-\protect\mb W_{p-1}\right\Vert $
with respect to $p$}

Our goal in this section is to show that for $\mc A_{p}$s defined by (\ref{eq:Op_p})
and $\mb W_{p}$s defined by (\ref{eq:Wp}), with high probability, the quantity $\left\Vert \mc A_{p}^{*}\mc A_{p}\left(\mb W_{p-1}\right)-\mb W_{p-1}\right\Vert $
decays quickly as $p$ increases.
\begin{lem}
\label{lem:Ap*Ap}For $p=0,1,\dotsc,P-1$ let $\mb W_{p}$ be defined
by (\ref{eq:Wp}). Then we can choose 
\begin{align*}
\left|\msf K_{p}\right| & \overset{\beta}{\gtrsim}\mu\log^{2}L\log\log\left(N+1\right),
\end{align*}
 such that 
\begin{align*}
\left\Vert \mc A_{p}^{*}\mc A_{p}\left(\mb W_{p-1}\right)-\mb W_{p-1}\right\Vert  & \leq\frac{3}{4}\cdot2^{-p}
\end{align*}
holds for every $p$ simultaneously with probability at least $1-cPL^{-\beta}$
where $c>0$ is an absolute constant. \end{lem}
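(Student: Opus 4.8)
The plan is to recognize the displayed quantity as a centered sum of independent random matrices and to control its spectral norm with the matrix Bernstein inequality (Proposition \ref{pro:MatrixBernstein's}). I would first record the algebraic fact that $\mc A_p^*\mc A_p$ is unbiased: since the columns of $\mb \Phi$ are iid with $\mbb E[\mb \phi_k\mb \phi_k^*]=\mb I$ and $\overline{\mb F}_N\odot(\mb F_N\odot\mb X)=\frac 1L\mb X$, a direct computation gives $\mbb E[\mc A_p^*\mc A_p(\mb X)]=\mb X$ for every $\mb X$. Consequently
\begin{align*}
\mc A_p^*\mc A_p(\mb W_{p-1})-\mb W_{p-1} & =\sum_{k\in\msf K_p}\mb R_k,\qquad \mb R_k:=\frac{LP}{K}\,\overline{\mb F}_N\odot\Bigl((\mb F_N\odot\mb W_{p-1})(\mb \phi_k\mb \phi_k^*-\mb I)\Bigr),
\end{align*}
is a sum of independent, zero-mean, $N\times L$ random matrices.

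Because the partition blocks $\msf K_1,\dotsc,\msf K_P$ are disjoint and $\mb \Phi$ has independent columns, the iterate $\mb W_{p-1}$ is independent of $\mc A_p$. I would therefore condition on $\mb W_{p-1}$ and treat it as a fixed matrix $\mb X\in\msf T$, noting that on the high-probability event of \eqref{eq:WpFrobenius} it obeys $\|\mb X\|_F\le 2^{-(p-1)}$. Conditionally the $\mb R_k$ are truly independent and centered, so it remains to estimate the matrix-Bernstein parameters and to show that $\|\sum_{k}\mb R_k\|\le\tfrac38\|\mb X\|_F\le\tfrac34\cdot2^{-p}$ with the claimed probability.

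For the parameter estimates I would substitute $\mb X=\widehat{\mb h}\mb v^*+\mb u\overline{\mb x}^*$ and use $\mb F_N\odot\mb X=\mb D_{\widehat{\mb h}}\mb F_N\mb D_{\mb v}^*+\mb D_{\mb u}\mb F_N\mb D_{\mb x}$, exactly as in the proof of Lemma \ref{lem:ANearIsometry}, so that each $\mb R_k$ decomposes into pieces built from the same $\mb D_{\cdot}\mb F_N\mb D_{\cdot}$ blocks already handled through $\mb Z_k,\mb Z'_k,\mb Z''_k$. The variance proxy $\sigma^2=\max\{\|\mbb E[\sum_{k}\mb R_k\mb R_k^*]\|,\|\mbb E[\sum_{k}\mb R_k^*\mb R_k]\|\}$ can be evaluated directly from $\mbb E[\mb \phi_k\mb \phi_k^*]=\mb I$ and the fourth moments of the Rademacher entries, and it carries factors of $\mu$, $\|\mb X\|_F^2$, and $\log L$. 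The subexponential bound $B\ge\max_k\|\mb R_k\|_{\psi_1}$ is the delicate ingredient: each entry of $\mb R_k$ is a (bi)linear/quadratic form in $\mb \phi_k$, so I would bound it via the Hanson--Wright inequality (Proposition \ref{pro:Hanson-Wright}) and then pass from a single entry to the maximum over the $N$ active frequencies using the Orlicz-maximum bound (Proposition \ref{pro:MaxOrlicz}); since $\psi_1^{-1}(N)=\log(N+1)$, this step injects a factor of order $\log(N+1)$ into $B$ alongside the coherence $\mu$.

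Finally I would feed $B$ and $\sigma$ into Proposition \ref{pro:MatrixBernstein's} with $d_1+d_2=N+L\lesssim L$ and deviation level $t\asymp\beta\log L$. The Gaussian term contributes the $\sqrt{\log L}$ behavior, which upon solving for the admissible block size becomes the $\log^2 L$ factor, while in the subexponential term the factor $\log(\sqrt{|\msf K_p|}\,B/\sigma)$ converts the $\log(N+1)$ hidden in $B$ into the $\log\log(N+1)$ of the statement. Requiring $|\msf K_p|\overset{\beta}{\gtrsim}\mu\log^2L\,\log\log(N+1)$ then forces $\|\sum_{k}\mb R_k\|\le\tfrac34\cdot2^{-p}$, and a union bound over the $P$ blocks yields the simultaneous probability $1-cPL^{-\beta}$. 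I expect the $\psi_1$-norm estimate for $\mb R_k$ to be the main obstacle: controlling a Rademacher quadratic form entrywise, taking the maximum over frequencies while losing only a single logarithm, and tracking precisely how $\mu$ enters is exactly what produces both the coherence dependence and the extra $\log\log(N+1)$ factor in the final sample complexity.
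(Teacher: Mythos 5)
Your skeleton matches the paper's proof at a high level: you center the sum, exploit independence of $\mb W_{p-1}$ from the block $\msf K_p$ (the whole point of the golfing partition), decompose $\mb W_{p-1}=\widehat{\mb h}\mb v^{*}+\mb u\overline{\mb x}^{*}$, and reduce to matrix Bernstein with $\psi_1$-norms controlled through Proposition \ref{pro:MaxOrlicz}. The paper does exactly this, observing that the two resulting summands are transposes of the already-analyzed matrices $\mb Z''_{k}$ from (\ref{eq:Z''_k}), so Lemma \ref{lem:XH} applies after a change of arguments (and, for the record, the $\psi_1$ bounds there come from Hoeffding via Lemma \ref{lem:RademacherSubExp2}, not Hanson--Wright; also $\sigma^2$ carries no $\log L$ --- that factor enters only through the deviation level $t$). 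Your account of where $\log\log(N+1)$ arises is essentially right.

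However, there is a genuine gap that your plan cannot close as written. When Lemma \ref{lem:XH} is applied to the component $\mb u\overline{\mb x}^{*}$, the role of $\widehat{\mb h}$ is played by $\mb u$, so both $B$ and $\sigma$ scale with $\sqrt{L\left\Vert \mb u\right\Vert _{\infty}^{2}}$ --- the \emph{coherence of the iterate} --- not with $\sqrt{\mu}$. You only track $\left\Vert \mb W_{p-1}\right\Vert _{F}\leq2^{-(p-1)}$, which gives no better than $\left\Vert \mb u\right\Vert _{\infty}\leq\left\Vert \mb u\right\Vert _{2}\lesssim2^{-p}$; plugging this in turns the relevant Bernstein parameter into $\sqrt{L}\cdot2^{-p}$ and would force $\left|\msf K_{p}\right|\gtrsim L\log^{2}L$, destroying the claimed block size $\mu\log^{2}L\log\log\left(N+1\right)$. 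The paper repairs this with a separate recursive ingredient that your proposal is missing entirely: it defines $\mu_{p-1}:=L\left\Vert \mb W_{p-1}\right\Vert _{\infty,2}^{2}$ in (\ref{eq:maxRowWp}) and proves inductively (Lemma \ref{lem:maxRowWp}, which itself requires Lemma \ref{lem:RowsOnT}, the vector bound of Lemma \ref{lem:z_k}, and a scalar Bernstein argument) that $\mu_{p}\leq2^{-2p}\mu$ holds for all $p$ simultaneously with probability $1-2PL^{-\beta}$. Only after substituting $\left\Vert \mb u\right\Vert _{\infty}^{2}\leq\mu_{p-1}/L\leq2^{-2p}\mu/L$ does the bound exhibit both the coherence dependence on $\mu$ and the geometric decay $2^{-p}$. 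In short: the matrix-Bernstein machinery you describe is the right tool, but the heart of this lemma is propagating an $\ell_\infty$-type (row-norm) bound through the golfing iterations alongside the Frobenius bound, and your proposal provides no mechanism for that.
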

\begin{proof}
The action of $\mc A_{p}^{*}\mc A_{p}$ on the matrix $\mb{W}_{p-1}$ obeys
\begin{align*}
\mc A_{p}^{*}\mc A_{p}\left(\mb W_{p-1}\right)-\mb W_{p-1} & =L\overline{\mb F}_{N}\odot\left(\left(\mb F_{N}\odot\mb W_{p-1}\right)\left(\frac{1}{\left|\msf K_{p}\right|}\mb{\Phi}_{\msf K_{p}}\mb{\Phi}_{\msf K_{p}}^{*}-\mb I\right)\right)\\
 & =\frac{1}{\left|\msf K_{p}\right|}\sum_{k\in\msf K_{p}}L\overline{\mb F}_{N}\odot\left(\left(\mb F_{N}\odot\mb W_{p-1}\right)\left(\mb{\phi}_{k}\mb{\phi}_{k}^{*}-\mb I\right)\right).
\end{align*}
Since $\mb W_{p-1}\in\msf T$ we can find vectors $\mb u$ and $\mb v$
such that 
\begin{align*}
\mb W_{p-1} & =\widehat{\mb h}\mb v^{*}+\mb u\overline{\mb x}^{*},
\end{align*}
 which implies that 
\begin{align*}
\mc A_{p}^{*}\mc A_{p}\left(\mb W_{p-1}\right)-\mb W_{p-1} & =\frac{1}{\left|\msf K_{p}\right|}\sum_{k\in\msf K_{p}}L\overline{\mb F}_{N}\odot\left(\left(\mb F_{N}\odot\left(\widehat{\mb h}\mb v^{*}+\mb u\overline{\mb x}^{*}\right)\right)\left(\mb{\phi}_{k}\mb{\phi}_{k}^{*}-\mb I\right)\right).
\end{align*}
Without loss of generality, we also assume that $\mb x$ and $\mb{\overline{v}}$
are orthogonal so that 
\begin{align*}
\left\Vert \mb W_{p-1}\right\Vert _{F}^{2} & =\left\Vert \widehat{\mb h}\mb v^{*}\right\Vert _{F}^{2}+\left\Vert \mb u\overline{\mb x}^{*}\right\Vert _{F}^{2}=\left\Vert \mb v\right\Vert _{2}^{2}+\left\Vert \mb u\right\Vert _{2}^{2}.
\end{align*}
Therefore, on the event that the near isometry of $\mc A_{p}$ on
$\msf T$ as stated by the Lemma \ref{lem:ANearIsometry} holds, the
bound in (\ref{eq:WpFrobenius}) guarantees that 
\begin{align}
\left\Vert \mb u\right\Vert _{2}^{2}+\left\Vert \mb v\right\Vert _{2}^{2} & \leq2^{-2p}.\label{eq:||u||-||v||}
\end{align}
 Furthermore, we can write 
\begin{align}
\mc A_{p}^{*}\mc A_{p}\left(\mb W_{p-1}\right)-\mb W_{p-1} & =\frac{1}{\left|\msf K_{p}\right|}\left(\sum_{k\in\msf K_{p}}\widetilde{\mb Z}_{k}+\sum_{k\in\msf K_{p}}\widetilde{\mb Z}'_{k}\right)\label{eq:Ap*Ap-TwoSum}
\end{align}
 where 
\begin{align*}
\widetilde{\mb Z}_{k} & :=L\overline{\mb F}_{N}\odot\left(\left(\mb F_{N}\odot\left(\widehat{\mb h}\mb v^{*}\right)\right)\left(\mb{\phi}_{k}\mb{\phi}_{k}^{*}-\mb I\right)\right)\\
 & =L\mb D_{\widehat{\mb h}}\mb D_{\mb F_{N}\left(\overline{\mb v}\odot\mb{\phi}_{k}\right)}\overline{\mb F}_{N}\mb D_{\phi_{k}}^{*}-\widehat{\mb h}\mb v^{*}
\end{align*}
and 
\begin{align*}
\widetilde{\mb Z}'_{k} & :=L\overline{\mb F}_{N}\odot\left(\left(\mb F_{N}\odot\left(\mb u\overline{\mb x}^{*}\right)\right)\left(\mb{\phi}_{k}\mb{\phi}_{k}^{*}-\mb I\right)\right)\\
 & =L\mb D_{\mb u}\mb D_{\mb F_{N}\left(\mb{x}\odot\mb{\phi}_{k}\right)}\overline{\mb F}_{N}\mb D_{\mb{\phi}_{k}}^{*}-\mb u\overline{\mb x}^{*}.
\end{align*}
These matrices are very similar to the matrix $\mb Z''_{k}$ defined
by (\ref{eq:Z''_k}). In fact, if we consider $\mb Z''_{k}$ to be
a function of $\mb x$ and $\mb{\widehat{\mb h}}$ such
as $\mb Z''_{k}\left(\mb x,\widehat{\mb h}\right)$, then it is easy
to verify that 
\begin{align*}
\widetilde{\mb Z_{k}} & =\left(\mb Z''_{k}\left(\overline{\mb v},\overline{\mb{\widehat{h}}}\right)\right)^{\mr T},
\end{align*}
 and 
\begin{align*}
\widetilde{\mb Z}'_{k} & =\left(\mb Z''_{k}\left(\mb x,\overline{\mb u}\right)\right)^{\mr T}.
\end{align*}
 Therefore, we can readily use Lemma \ref{lem:XH} in Section \ref{ssec:Lemmas} to obtain bounds
for the spectral norms of the sum of $\widetilde{\mb Z}_{k}$s and
the sum of $\widetilde{\mb Z}'_{k}$s. To adapt the result of Lemma
\ref{lem:XH}, it suffices to scale the deviation bounds by the norm
of the vectors $\mb u$ or $\mb v$ and replace the coherence of $\widehat{\mb h}$
by that of $\mb u$, as necessary.

As explained above, we can use Lemma \ref{lem:XH} and (\ref{eq:Ap*Ap-TwoSum})
to obtain 
\begin{align*}
\left\Vert \mc A_{p}^{*}\mc A_{p}\left(\mb W_{p-1}\right)-\mb W_{p-1}\right\Vert  & \leq\frac{1}{\left|\msf K_{p}\right|}\left(\left\Vert \sum_{k\in\msf K_{p}}\widetilde{\mb Z}_{k}\right\Vert +\left\Vert \sum_{k\in\msf K_{p}}\widetilde{\mb Z}'_{k}\right\Vert \right)\\
 & \overset{\beta}{\lesssim}\left\Vert \mb v\right\Vert _{2}\max\left\{ \sqrt{\frac{\mu}{\left|\msf K_{p}\right|}\log L},\frac{\sqrt{\mu}\log L\log\left(N+1\right)\,\log\log\left(N+1\right)}{\left|\msf K_{p}\right|}\right\} \\
 & +\max\left\{ \sqrt{\frac{ L\left\Vert \mb u\right\Vert _{\infty}^{2}}{\left|\msf K_{p}\right|}\log L},\right.\\
 &\hspace{5em}\left.\frac{\sqrt{L\left\Vert \mb u\right\Vert _{\infty}^{2}}\log L\log\left(N+1\right)\,\log\log\left(N+1\right)}{\left|\msf K_{p}\right|}\right\} ,
\end{align*}
with probability at least $1-2L^{-\beta}$. We define $\mu_{p-1}$,
a quantity that controls the largest row-wise energy of $\mb W_{p-1}$,
by 
\begin{align}
\mu_{p-1} & :=L\left\Vert \mb W_{p-1}\right\Vert _{\infty,2}^{2},\label{eq:maxRowWp}
\end{align}
where $\left\Vert\cdot\right\Vert_{\infty,2}$ denotes the largest row-wise $\ell_2$-norm.
Since $\mb u=\mb W_{p-1}\overline{\mb x}$ and $\overline{\mb x}$
is unit-norm, it is straightforward to show that 
\begin{align*}
\left\Vert \mb u\right\Vert _{\infty}^{2} & \leq\frac{\mu_{p-1}}{L},
\end{align*}
and rewrite the bound on $\left\Vert \mc A_{p}^{*}\mc A_{p}\left(\mb W_{p-1}\right)-\mb W_{p-1}\right\Vert $
as 
\begin{align*}
\left\Vert \mc A_{p}^{*}\mc A_{p}\left(\mb W_{p-1}\right)-\mb W_{p-1}\right\Vert  & \overset{\beta}{\lesssim}\left\Vert \mb v\right\Vert _{2}\max\left\{ \sqrt{\frac{\mu}{\left|\msf K_{p}\right|}\log L},\frac{\sqrt{\mu}\log L\log\left(N+1\right)\,\log\log\left(N+1\right)}{\left|\msf K_{p}\right|}\right\} \\
 & +\max\left\{ \sqrt{\frac{ \mu_{p-1} }{\left|\msf K_{p}\right|}\log L},\right. \\
 &\hspace{5em}\left.\frac{\sqrt{\mu_{p-1}}\log L\log\left(N+1\right)\,\log\log\left(N+1\right)}{\left|\msf K_{p}\right|}\Biggl\}\right. .
\end{align*}
The inequality (\ref{eq:||u||-||v||}) implies that $\left\Vert \mb u\right\Vert _{2}\leq2^{-p}$
and $\left\Vert \mb v\right\Vert _{2}\leq2^{-p}$. Furthermore, if
we have 
\begin{align*}
\left|\msf K_{p}\right| & \overset{\beta}{\gtrsim}\mu\log^{2}L\log\log\left(N+1\right),
\end{align*}
then we can invoke Lemma \ref{lem:maxRowWp} below to guarantee the bound
$\mu_{p-1}\leq2^{-2p}\mu$ with probability exceeding $1-2L^{-\beta}$.
Therefore, the above deviation bound can be simplified to 
\begin{align*}
\left\Vert \mc A_{p}^{*}\mc A_{p}\left(\mb W_{p-1}\right)-\mb W_{p-1}\right\Vert  & \overset{\beta}{\lesssim}2^{-p}\max\left\{ \sqrt{\frac{\mu}{\left|\msf K_{p}\right|}\log L},\frac{\sqrt{\mu}\log L\log\left(N+1\right)\,\log\log\left(N+1\right)}{\left|\msf K_{p}\right|}\right\} \\
 & \leq\frac{3}{4}\cdot2^{-p}.
\end{align*}
 The events that the above inequality depends on for every $p=1,2,\dotsc,P$,
hold simultaneously with probability exceeding $1-cPL^{-\beta}$ where
$c>0$ is an absolute constant. 
\end{proof}

\subsubsection{Controlling the largest row-norm of $\protect\mb W_{p}$}

Through the following lemma we show that the largest row-wise $\ell_{2}$-norm
of $\mb W_{p}$ decreases significantly as $p$ increases.
\begin{lem}
\label{lem:maxRowWp}For $p=1,2,\dotsc,P$, let $\mu_{p}$ be defined
as (\ref{eq:maxRowWp}). Furthermore, suppose that 
\begin{align*}
\mu_{p-1} & \leq2^{-2\left(p-1\right)}\mu.
\end{align*}
Then, with $\left|\msf K_{p}\right|\overset{\beta}{\gtrsim}\mu\log^{2}L\,\log\log\left(N+1\right)$
we have 
\begin{align*}
\mu_{p} & \leq2^{-2p}\mu,
\end{align*}
 with probability at least $1-2L^{-\beta}$. Therefore, we have 
\begin{align*}
\mu_{p} & \leq2^{-2p}\mu,
\end{align*}
simultaneously for all $p=1,2,\dotsc,P$ with probability at least
$1-2PL^{-\beta}$.\end{lem}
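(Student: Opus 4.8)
The plan is to argue by induction on $p$, taking the stated hypothesis $\mu_{p-1}\le 2^{-2(p-1)}\mu$ as the inductive assumption and feeding it into a concentration bound that produces the factor-$4$ contraction of $\mu$. First I would make the recursion for $\mb W_p$ explicit: since $\mb W_{p-1}\in\msf T$, projecting the golfing recursion onto $\msf T$ gives $\mb W_p=-\mc P_{\msf T}(\mb D_p)$, where $\mb D_p:=\mc A_p^*\mc A_p(\mb W_{p-1})-\mb W_{p-1}$ is exactly the centered deviation whose explicit Hadamard form is written out in the proof of Lemma~\ref{lem:Ap*Ap}. Writing $\mb W_p=\widehat{\mb h}\mb v'^*+\mb u'\overline{\mb x}^*$ with the (without loss of generality) orthogonality $\overline{\mb v'}\perp\mb x$, a one-line computation using $\overline{\mb x}^*\overline{\mb x}=1$ shows $\mb u'=\mb W_p\overline{\mb x}=-\mb D_p\overline{\mb x}$. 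The $i$-th row of $\mb W_p$ is $\widehat h_i\mb v'^*+u_i'\overline{\mb x}^*$, so by orthogonality its squared $\ell_2$-norm is $|\widehat h_i|^2\|\mb v'\|_2^2+|u_i'|^2$, whence
\[
\mu_p=L\|\mb W_p\|_{\infty,2}^2\le \mu\|\mb v'\|_2^2+L\|\mb u'\|_\infty^2 .
\]
This splits the task into the ``$\widehat{\mb h}$-aligned'' part $\mu\|\mb v'\|_2^2$ and the ``$\overline{\mb x}$-aligned'' part $L\|\mb u'\|_\infty^2=L\|\mb D_p\overline{\mb x}\|_\infty^2$.

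For the $\widehat{\mb h}$-aligned part I would invoke the incoherence of the filter together with the Frobenius decay already available for the golfing iterates: since $\|\mb v'\|_2\le\|\mb W_p\|_F\le 2^{-p}$ by (\ref{eq:WpFrobenius}) and $L\|\widehat{\mb h}\|_\infty^2=\mu$, this term contributes the bulk $O(2^{-2p}\mu)$ with no further randomness, the contraction in $p$ coming entirely from the $\tfrac12$-contraction of $\mc P_\msf T-\mc P_\msf T\mc A_p^*\mc A_p\mc P_\msf T$ on $\msf T$ guaranteed by Lemma~\ref{lem:ANearIsometry}. All of the real difficulty is therefore concentrated in the $\overline{\mb x}$-aligned part, which must be shown to be an arbitrarily small fraction of $2^{-2p}\mu$.

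The crux is to bound $L\|\mb D_p\overline{\mb x}\|_\infty^2$. I would expand the $i$-th entry and recognize it as a centered quadratic form in the Rademacher entries of $\mb\Phi_{\msf K_p}$: with $(\mb b_i)_j=\overline F_{ij}\overline x_j$ and $(\mb c_i)_\ell=F_{i\ell}(\mb W_{p-1})_{i\ell}$,
\[
(\mb D_p\overline{\mb x})_i=\frac{L}{|\msf K_p|}\sum_{k\in\msf K_p}\Bigl((\mb c_i^{\mr T}\mb{\phi}_k)(\mb b_i^{\mr T}\mb{\phi}_k)-\mbb E\bigl[(\mb c_i^{\mr T}\mb{\phi}_k)(\mb b_i^{\mr T}\mb{\phi}_k)\bigr]\Bigr),
\]
i.e. $\mb g^{\mr T}\mb A_i\mb g-\mbb E[\mb g^{\mr T}\mb A_i\mb g]$ with $\mb g=\mr{vec}(\mb\Phi_{\msf K_p})$ and $\mb A_i$ block diagonal with blocks $\tfrac{L}{|\msf K_p|}\mb c_i\mb b_i^{\mr T}$. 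Because the golfing partition makes $\mb\Phi_{\msf K_p}$ independent of $\mb W_{p-1}$, I may condition on $\mb W_{p-1}$ and treat $\mb b_i,\mb c_i$ as deterministic, so Proposition~\ref{pro:Hanson-Wright} applies. Using $\|\mb b_i\|_2^2=\tfrac1L$ and the inductive row-energy bound $\|\mb c_i\|_2^2=\tfrac1L\|(\mb W_{p-1})_{i,:}\|_2^2\le \mu_{p-1}/L^2$, I get $\|\mb A_i\|_F\le\sqrt{\mu_{p-1}/(L|\msf K_p|)}$ and $\|\mb A_i\|\le\sqrt{\mu_{p-1}}/(|\msf K_p|\sqrt L)$. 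Choosing the threshold $t\asymp 2^{-p}\sqrt{\mu/L}$ and substituting $\mu_{p-1}\le 2^{-2(p-1)}\mu$ makes both Hanson--Wright exponents $\gtrsim|\msf K_p|$; a union bound over the $L$ rows then yields $\|\mb D_p\overline{\mb x}\|_\infty\le t$, hence $L\|\mb u'\|_\infty^2\lesssim 2^{-2p}\mu$, with probability at least $1-2L^{-\beta}$ as soon as $|\msf K_p|\gtrsim(\beta+1)\log L$ --- comfortably implied by the stated $|\msf K_p|\overset{\beta}{\gtrsim}\mu\log^2L\,\log\log(N+1)$.

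Combining the two parts gives $\mu_p\le 2^{-2p}\mu$ on the Hanson--Wright event (the near-isometry of $\mc A_p$ being invoked for the Frobenius decay exactly as in the surrounding analysis), and a final union bound over $p=1,\dots,P$ yields the simultaneous statement with probability at least $1-2PL^{-\beta}$. The main obstacle, and the step I would treat most carefully, is the Hanson--Wright bookkeeping: correctly extracting the bilinear structure of $(\mb D_p\overline{\mb x})_i$, computing $\|\mb A_i\|_F$ and $\|\mb A_i\|$, and verifying that plugging in $\mu_{p-1}\le 2^{-2(p-1)}\mu$ produces precisely the factor-$4$ reduction in $\mu$ that closes the induction. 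The one structural subtlety is the conditional-independence argument that legitimizes treating $\mb b_i,\mb c_i$ as fixed when applying the concentration inequality.
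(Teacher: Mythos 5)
There is a genuine gap, and it sits exactly where you put the least weight: the $\widehat{\mb h}$-aligned term. Bounding $\left\Vert \mb v'\right\Vert _{2}\leq\left\Vert \mb W_{p}\right\Vert _{F}\leq2^{-p}$ via (\ref{eq:WpFrobenius}) makes that term alone as large as $\mu\left\Vert \mb v'\right\Vert _{2}^{2}\leq2^{-2p}\mu$, i.e.\ it already \emph{saturates the entire budget} of the inequality you must prove. Since your orthogonal decomposition gives $\mu_{p}\leq\mu\left\Vert \mb v'\right\Vert _{2}^{2}+L\left\Vert \mb u'\right\Vert _{\infty}^{2}$ and the second term is strictly positive, ``bulk plus an arbitrarily small fraction'' yields only $\mu_{p}\leq\left(1+\epsilon\right)2^{-2p}\mu$, not $\mu_{p}\leq2^{-2p}\mu$. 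The constant $1$ matters here: the lemma's conclusion is fed back in as the hypothesis $\mu_{p-1}\leq2^{-2\left(p-1\right)}\mu$ at the next step and is consumed with these exact constants in Lemma~\ref{lem:Ap*Ap} (the $\frac{3}{4}\cdot2^{-p}$ bound and the telescoping sum $<\frac{3}{4}$), so your induction as written does not close. The paper avoids this by never using the Frobenius decay for the $\widehat{\mb h}$-component of the rows: it observes that $\mb v'=\left(\mb I-\overline{\mb x}\overline{\mb x}^{*}\right)\mb R_{p}^{*}\widehat{\mb h}$ with $\mb R_{p}=\mb W_{p-1}-\mc A_{p}^{*}\mc A_{p}\left(\mb W_{p-1}\right)$ is itself a \emph{centered} sum over the fresh masks in $\msf K_{p}$, and concentrates it via matrix Bernstein (Lemma~\ref{lem:z_k}), obtaining $\left\Vert \mb R_{p}^{*}\widehat{\mb h}\right\Vert _{2}^{2}\lesssim2^{-2\left(p-1\right)}\mu\log L/\left|\msf K_{p}\right|\ll2^{-2p}$ once $\left|\msf K_{p}\right|\overset{\beta}{\gtrsim}\mu\log^{2}L\,\log\log\left(N+1\right)$ with a large enough implicit constant. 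That way \emph{both} components are small fractions of $2^{-2p}\mu$ and the factor-$4$ contraction closes with constant exactly one. (Your route could be repaired by proving the weaker $\mu_{p}\leq C\,2^{-2p}\mu$ and re-tuning all downstream constants, but that is a different lemma than the one stated; the direct fix is to concentrate $\mb v'$ as the paper does.)

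Your treatment of the $\overline{\mb x}$-aligned part, by contrast, is sound and is a legitimate alternative to the paper's argument: you identify $\mb u'=\mb W_{p}\overline{\mb x}=-\mb D_{p}\overline{\mb x}$ correctly (the $\widehat{\mb h}$-component annihilates $\overline{\mb x}$ under the projection), extract the correct bilinear Rademacher structure per row, and your computations $\left\Vert \mb A_{i}\right\Vert _{F}\leq\sqrt{\mu_{p-1}/\left(L\left|\msf K_{p}\right|\right)}$ and $\left\Vert \mb A_{i}\right\Vert \leq\sqrt{\mu_{p-1}}/\left(\left|\msf K_{p}\right|\sqrt{L}\right)$ check out, so Hanson--Wright with $t\asymp2^{-p}\sqrt{\mu/L}$ indeed gives exponents $\gtrsim\left|\msf K_{p}\right|$ after substituting $\mu_{p-1}\leq2^{-2\left(p-1\right)}\mu$. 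The paper instead runs a scalar Bernstein argument on the $\zeta_{k}$'s using Lemmas \ref{lem:RademacherSubExp1} and \ref{lem:RademacherFoutrhMoment}, which is essentially the same estimate (note Lemma~\ref{lem:RademacherSubExp1} is itself proved via Hanson--Wright), and your conditioning on $\mb W_{p-1}$, justified by the disjointness of the golfing blocks, mirrors the paper's conditioning on $\mc A_{1},\dotsc,\mc A_{p-1}$. Two minor slips to fix if you pursue this: the union bound is over the $N$ rows of $\mb W_{p}\in\mbb C^{N\times L}$, not $L$ rows (harmless since $N\leq L$, but your probability bookkeeping should match the paper's $1-N^{-1}L^{-\beta}$ per row); and Proposition~\ref{pro:Hanson-Wright} is stated for real quadratic forms, so the complex vectors $\mb b_{i},\mb c_{i}$ require a real/imaginary splitting before it applies.
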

\begin{proof}
Let $\mb R_{p}:=\mb W_{p-1}-\mc A_{p}^{*}\mc A_{p}\left(\mb W_{p-1}\right)$. Furthermore,
denote the $l$-th columns of $\mb R_{p}^{*}$, $\mb W_{p-1}^{*}$,
and $\mb W_{p}^{*}$ by $\mb r_{l,p}$, $\mb w_{l,p-1}$, and $\mb w_{l,p}$,
respectively. Because $\mb W_{p}=\mc P_{\msf T}\left(\mb R_{p}\right)$,
it follows from Lemma \ref{lem:RowsOnT} in Section \ref{ssec:Lemmas} that 
\begin{align}
\left\Vert \mb w_{l,p}\right\Vert _{2}^{2} & \leq\left\Vert \widehat{\mb h}\right\Vert _{\infty}^{2}\left\Vert \mb R_{p}^{*}\widehat{\mb h}\right\Vert _{2}^{2}+\left|\left\langle \overline{\mb x},\mb r_{l,p}\right\rangle \right|^{2}\nonumber \\
 & =\frac{\mu}{L}\left\Vert \mb R_{p}^{*}\widehat{\mb h}\right\Vert _{2}^{2}+\left|\left\langle \overline{\mb x},\mb r_{l,p}\right\rangle \right|^{2}.\label{eq:w_l,p}
\end{align}
 We can expand $\mb R_{p}$ as 
\begin{align*}
\mb R_{p} & =\frac{1}{\left|\msf K_{p}\right|}\sum_{k\in\msf K_{p}}\mb W_{p-1}-L\overline{\mb F}_{N}\odot\left(\left(\mb F_{N}\odot\mb W_{p-1}\right)\mb{\phi}_{k}\mb{\phi}_{k}^{*}\right)\\
 & =\frac{1}{\left|\msf K_{p}\right|}\sum_{k\in\msf K_{p}}\mb W_{p-1}-L\overline{\mb F}_{N}\odot\left(\left(\mb F_{N}\odot\mb W_{p-1}\right)\mb{\phi}_{k}\mb{\phi}_{k}^{*}\right).
\end{align*}
 Therefore, we can write 
\begin{align*}
\mb R_{p}^{*}\widehat{\mb h} & =\frac{1}{\left|\msf K_{p}\right|}\sum_{k\in\msf K_{p}}\left(\mb W_{p-1}^{*}-L\overline{\mb F}_{N}^{*}\odot\left(\mb{\phi}_{k}\mb{\phi}_{k}^{*}\left(\mb F_{N}^{*}\odot\mb W_{p-1}^{*}\right)\right)\right)\widehat{\mb h}\\
 & =\frac{1}{\left|\msf K_{p}\right|}\sum_{k\in\msf K_{p}}\mb W_{p-1}^{*}\widehat{\mb h}-L\mb D_{\mb{\phi}_{k}}\overline{\mb F}_{N}^{*}\left(\overline{\left(\mb F_{N}\odot\mb W_{p-1}\right)\mb{\phi}_{k}}\odot\widehat{\mb h}\right)\\
 & =\frac{1}{\left|\msf K_{p}\right|}\sum_{k\in\msf K_{p}}\mb W_{p-1}^{*}\widehat{\mb h}-L\mb D_{\mb{\phi}_{k}}\overline{\mb F}_{N}^{*}\left(\overline{\left(\mb F_{N}\odot\mb W_{p-1}\right)\mb{\phi}_{k}}\odot\widehat{\mb h}\right)\\
 & =\frac{1}{\left|\msf K_{p}\right|}\sum_{k\in\msf K_{p}}\mb z_{k},
\end{align*}
where 
\begin{align}
\mb z_{k} & :=\mb W_{p-1}^{*}\widehat{\mb h}-L\mb D_{\mb{\phi}_{k}}\overline{\mb F}_{N}^{*}\left(\overline{\left(\mb F_{N}\odot\mb W_{p-1}\right)\mb{\phi}_{k}}\odot\widehat{\mb h}\right)\label{eq:z_k}
\end{align}
Conditioned on $\mc A_{1},\mc A_{2},\dotsc,$ and $\mc A_{p-1}$,
we can invoke Lemma \ref{lem:z_k} below and show that 
\begin{align}
\left\Vert \mb R_{p}^{*}\widehat{\mb h}\right\Vert _{2}^{2} & \overset{\beta}{\lesssim}2^{-2\left(p-1\right)}\max\left\{ \frac{\mu\log L}{\left|\msf K_{p}\right|},\frac{\mu\log^{2}\left(N+1\right)\log^{2}\log\left(N+1\right)\log^{2}L}{\left|\msf K_{p}\right|^{2}}\right\} \label{eq:mu_p(1)}
\end{align}
holds with probability exceeding $1-L^{-\beta}$. Furthermore, we
can expand $\left\langle \overline{\mb x},\mb r_{l,p}\right\rangle $
as 
\begin{align*}
\left\langle \overline{\mb x},\mb r_{l,p}\right\rangle  & =\frac{1}{\left|\msf K_{p}\right|}\sum_{k\in\msf K_{p}}\left\langle \overline{\mb x},\mb w_{l,p-1}\right\rangle -L\left\langle \overline{\mb x},\overline{\mb f_{l}}\odot\mb{\phi}_{k}\right\rangle \left\langle \overline{\mb f_{l}}\odot\mb{\phi}_{k},\mb w_{l,p-1}\right\rangle \\
 & =\frac{1}{\left|\msf K_{p}\right|}\sum_{k\in\msf K_{p}}\left\langle \overline{\mb x},\mb w_{l,p-1}\right\rangle -L\left\langle \mb f_{l}\odot\overline{\mb x},\mb{\phi}_{k}\right\rangle \left\langle \mb{\phi}_{k},\mb f_{l}\odot\mb w_{l,p-1}\right\rangle \\
 & =-\frac{1}{\left|\msf K_{p}\right|}\sum_{k\in\msf K_{p}}\zeta_{k},
\end{align*}
 with 
\begin{align*}
\zeta_{k} & :=L\left\langle \mb f_{l}\odot\overline{\mb x},\mb{\phi}_{k}\right\rangle \left\langle \mb{\phi}_{k},\mb f_{l}\odot\mb w_{l,p-1}\right\rangle -\left\langle \overline{\mb x},\mb w_{l,p-1}\right\rangle .
\end{align*}
 The Orlicz 1-norm of $\zeta_{k}$ can be bounded using Lemma \ref{lem:RademacherSubExp1} below
as 
\begin{align*}
\left\Vert \zeta_{k}\right\Vert _{\psi_{1}} & \lesssim\left\Vert \mb w_{l,p-1}\right\Vert _{2}\left\Vert \mb x\right\Vert _{2}=\left\Vert \mb w_{l,p-1}\right\Vert _{2}.
\end{align*}
 Furthermore, we have 
\begin{align*}
\mbb E\left[\sum_{k\in\msf K_{p}}\left|\zeta_{k}\right|^{2}\right] & =\left|\msf K_{p}\right|\left(L^{2}\mbb E\left[\left|\left\langle \mb f_{l}\odot\overline{\mb x},\mb{\phi}_{k}\right\rangle \right|^{2}\left|\left\langle \mb{\phi}_{k},\mb f_{l}\odot\mb w_{l,p-1}\right\rangle \right|^{2}\right]-\left|\left\langle \overline{\mb x},\mb w_{l,p-1}\right\rangle \right|^{2}\right)\\
 & \leq3\left|\msf K_{p}\right|\left\Vert \mb w_{l,p-1}\right\Vert _{2}^{2},
\end{align*}
where Lemma \ref{lem:RademacherFoutrhMoment} in Section \ref{ssec:Lemmas} is used to obtain the
inequality. Therefore, the scalar Bernstein's inequality guarantees
that 
\begin{align*}
\left|\sum_{k\in\msf K_{p}}\zeta_{k}\right| & \lesssim\max\left\{ \left\Vert \mb w_{l,p-1}\right\Vert _{2}\sqrt{3\left|\msf K_{p}\right|\left(t+\log2\right)},\left\Vert \mb w_{l,p-1}\right\Vert _{2}\log\left(t+\log2\right)\right\} 
\end{align*}
 with probability at least $1-e^{-t}$. With $t=\beta\log L+\log N$,
we deduce that 
\begin{align}
\left|\left\langle \overline{\mb x},\mb r_{l,p}\right\rangle \right|^{2} & =\left(\frac{1}{\left|\msf K_{p}\right|}\left|\sum_{k\in\msf K_{p}}\zeta_{k}\right|\right)^{2}\nonumber \\
 & \overset{\beta}{\lesssim}\left\Vert \mb w_{l,p-1}\right\Vert _{2}^{2}\max\left\{ \frac{\log L}{\left|\msf K_{p}\right|},\frac{\log^{2}L}{\left|\msf K_{p}\right|^{2}}\right\} ,\label{eq:mu_p(2)}
\end{align}
 holds with probability exceeding $1-N^{-1}L^{-\beta}$.

The inequalities (\ref{eq:w_l,p}), (\ref{eq:mu_p(1)}), and (\ref{eq:mu_p(2)}),
and the assumption 
\begin{align*}
L\left\Vert \mb w_{l,p-1}\right\Vert _{2}^{2} & \leq\mu_{p-1}\leq\mu2^{-2\left(p-1\right)}
\end{align*}
show that 
\begin{align*}
L\left\Vert \mb w_{l,p}\right\Vert _{2}^{2}\overset{\beta}{\lesssim} & 2^{-2\left(p-1\right)}\mu\max\left\{ \frac{\mu\log L}{\left|\msf K_{p}\right|},\frac{\mu\log^{2}\left(N+1\right)\log^{2}\log\left(N+1\right)\log^{2}L}{\left|\msf K_{p}\right|^{2}}\right\} \\
 & +2^{-2\left(p-1\right)}\mu\max\left\{ \frac{\log L}{\left|\msf K_{p}\right|},\frac{\log^{2}L}{\left|\msf K_{p}\right|^{2}}\right\} 
\end{align*}
 holds with probability at least $1-2N^{-1}L^{-\beta}$. Then using
the assumption that $\left|\msf K_{p}\right|\overset{\beta}{\gtrsim}\mu\log^{2}L\,\log\log\left(N+1\right)$
and applying the union bound over $l$ guarantees that 
\begin{align*}
\mu_{p}\leq & 2^{-2p}\mu,
\end{align*}
 with probability exceeding $1-2L^{-\beta}$. Finally, since $\mu_{0}=L\left\Vert \mb W_{0}\right\Vert _{\infty,2}^{2}=L\left\Vert \widehat{\mb h}\right\Vert ^{2}=\mu$,
a recursive application of the above bound guarantees that 
\begin{align*}
\mu_{p} & \leq2^{-2p}\mu,
\end{align*}
 for $p=1,2,\dotsc P$ hold simultaneously with probability exceeding
$1-2PL^{-\beta}$.
\end{proof}

\subsection{\label{ssec:Lemmas}Supplementary lemmas}

The proofs in this section rely on a form of the matrix Bernstein's
inequality borrowed from \cite{koltchinskii_2011_nuclear} and stated
in Proposition \ref{pro:MatrixBernstein's}.
\begin{lem}
\label{lem:HH}For any $\beta>0$ the matrices $\mb Z_{k}$, defined
by (\ref{eq:Z_k}), obey 
\begin{align*}
\left\Vert \sum_{k\in\msf K}\mb Z_{k}\right\Vert  & \overset{\beta}{\lesssim}\max\left\{ \sqrt{\left|\msf K\right|\mu\log L},\mu\log\mu\log L\right\} ,
\end{align*}
with probability at least $1-L^{-\beta}$. \end{lem}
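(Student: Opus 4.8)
The plan is to apply the matrix Bernstein inequality of Proposition~\ref{pro:MatrixBernstein's} to the i.i.d., mean-zero, Hermitian summands $\mb Z_{k}$ from \eqref{eq:Z_k}. Write $\mb A:=\mb F_{N}^{*}\mb D_{\left|\widehat{\mb h}\right|^{2}}\mb F_{N}$, a fixed Hermitian matrix that does not depend on the masks, so that (using $\left\Vert\widehat{\mb h}\right\Vert_{2}=1$ and $\mb D_{\mb{\phi}_{k}}^{*}=\mb D_{\mb{\phi}_{k}}$) we have $\mb Z_{k}=L\,\mb D_{\mb{\phi}_{k}}\mb A\mb D_{\mb{\phi}_{k}}-\mb I$. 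First I would verify the hypotheses of the inequality. The $\left(i,j\right)$ entry of $\mb D_{\mb{\phi}_{k}}\mb A\mb D_{\mb{\phi}_{k}}$ is $\left(\mb{\phi}_{k}\right)_{i}A_{ij}\left(\mb{\phi}_{k}\right)_{j}$, and since the masks are Rademacher $\mbb E\left[\left(\mb{\phi}_{k}\right)_{i}\left(\mb{\phi}_{k}\right)_{j}\right]=\delta_{ij}$; hence taking the expectation kills the off-diagonal entries and returns $\mr{diag}\left(\mb A\right)$. Because every DFT entry has modulus $L^{-1/2}$, the diagonal of $\mb A$ is constant, $A_{ii}=\tfrac1L\left\Vert\widehat{\mb h}\right\Vert_{2}^{2}=\tfrac1L$, so $\mbb E\left[\mb Z_{k}\right]=\mb 0$ as required.

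Next I would produce the deterministic parameter $B$. Since $\mb D_{\mb{\phi}_{k}}$ is orthogonal and $\left\Vert\mb F_{N}\right\Vert=1$, we get $\left\Vert\mb A\right\Vert\le\left\Vert\mb D_{\left|\widehat{\mb h}\right|^{2}}\right\Vert=\left\Vert\widehat{\mb h}\right\Vert_{\infty}^{2}=\mu/L$, and therefore $\left\Vert\mb Z_{k}\right\Vert\le L\left\Vert\mb A\right\Vert+1=\mu+1$ almost surely. A boundedly-supported random variable satisfies $\left\Vert\mb Z_{k}\right\Vert_{\psi_{1}}\lesssim\mu$, so we may take $B\lesssim\mu$.

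The crux of the argument is the variance parameter $\sigma^{2}=\left|\msf K\right|\left\Vert\mbb E\left[\mb Z_{k}^{2}\right]\right\Vert$. Using $\mb D_{\mb{\phi}_{k}}^{2}=\mb I$ and the same averaging identity as above, $\mbb E\left[\mb Z_{k}^{2}\right]=L^{2}\mr{diag}\left(\mb A^{2}\right)-\mb I$. Because $\mb A$ is Hermitian, $\left(\mb A^{2}\right)_{ii}=\sum_{j}\left|A_{ij}\right|^{2}$, and each $A_{ij}$ depends only on $i-j\bmod L$ and equals $\tfrac1L$ times a DFT coefficient of the sequence $\left|\widehat{\mb h}\right|^{2}$. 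Parseval's identity then gives, independently of $i$,
\begin{align*}
\left(\mb A^{2}\right)_{ii} & =\sum_{j}\left|A_{ij}\right|^{2}=\frac1L\sum_{n}\left|\widehat{h}_{n}\right|^{4}\le\frac1L\left\Vert\widehat{\mb h}\right\Vert_{\infty}^{2}\left\Vert\widehat{\mb h}\right\Vert_{2}^{2}=\frac{\mu}{L^{2}}.
\end{align*}
Since also $\left(\mb A^{2}\right)_{ii}\ge\left|A_{ii}\right|^{2}=L^{-2}$, we obtain $\left\Vert\mbb E\left[\mb Z_{k}^{2}\right]\right\Vert=\max_{i}\left|L^{2}\left(\mb A^{2}\right)_{ii}-1\right|\le\mu$ and hence $\sigma^{2}\le\left|\msf K\right|\mu$. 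This is the only place the coherence enters, precisely through $\sum_{n}\left|\widehat{h}_{n}\right|^{4}\le\left\Vert\widehat{\mb h}\right\Vert_{\infty}^{2}$.

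Finally I would invoke Proposition~\ref{pro:MatrixBernstein's} with $t=\beta\log L$ and $d_{1}=d_{2}=L$, so that $t+\log\left(2L\right)\overset{\beta}{\lesssim}\log L$ and the failure probability is $e^{-t}=L^{-\beta}$. The sub-Gaussian term becomes $\sigma\sqrt{t+\log\left(2L\right)}\overset{\beta}{\lesssim}\sqrt{\left|\msf K\right|\mu\log L}$, while the sub-exponential term becomes $B\log\left(\tfrac{\sqrt{\left|\msf K\right|}B}{\sigma}\right)\left(t+\log\left(2L\right)\right)\overset{\beta}{\lesssim}\mu\log\mu\log L$, the logarithmic factor following by substituting $B\lesssim\mu$ and $\sigma^{2}\lesssim\left|\msf K\right|\mu$, which renders the argument $\sqrt{\left|\msf K\right|}B/\sigma$ of order $\sqrt{\mu}$. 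Taking the maximum of the two terms yields the claim. I expect the variance estimate of the third paragraph—reducing $\left\Vert\mbb E\left[\mb Z_{k}^{2}\right]\right\Vert$ to the tail-sum $\sum_{n}\left|\widehat h_{n}\right|^{4}$ via Parseval and then to the coherence—to be the main obstacle, together with the mild bookkeeping needed to pin the sub-exponential logarithmic factor at $\log\mu$ rather than $\log L$; the remaining steps are routine applications of the Bernstein bound.
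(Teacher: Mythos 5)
Your proposal is correct and takes essentially the same route as the paper: both proofs apply the matrix Bernstein inequality of Proposition \ref{pro:MatrixBernstein's} with $B\lesssim\mu$ (from the deterministic bound $\left\Vert \mb Z_{k}\right\Vert \lesssim\mu$) and $\sigma^{2}=\left|\msf K\right|\mu$, and then set $t=\beta\log L$, with the $\log\mu$ factor arising in both arguments from $\sqrt{\left|\msf K\right|}B/\sigma\asymp\sqrt{\mu}$. Your Parseval computation of $\mr{diag}\left(\mb A^{2}\right)$ is just an explicit derivation of the paper's identity $\mbb E\left[\mb Z_{k}^{2}\right]=\left(L\left\Vert \widehat{\mb h}\right\Vert _{4}^{4}-\left\Vert \widehat{\mb h}\right\Vert _{2}^{4}\right)\mb I$, followed by the same estimate $\left\Vert \widehat{\mb h}\right\Vert _{4}^{4}\leq\left\Vert \widehat{\mb h}\right\Vert _{\infty}^{2}\left\Vert \widehat{\mb h}\right\Vert _{2}^{2}=\mu/L$.
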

\begin{proof}
Since $\left\Vert \mb Z_{k}\right\Vert \leq L\left\Vert \widehat{\mb h}\right\Vert _{\infty}^{2}=\mu$,
then 
\begin{align*}
\max_{k\in\msf K}\left\Vert \mb Z_{k}\right\Vert _{\psi_{1}} & \leq B:=\frac{\mu}{\log2}.
\end{align*}
 Furthermore, we have 
\begin{align*}
\mbb E\left[\mb Z_{k}^{2}\right] & =L^{2}\left(\frac{\left\Vert \widehat{\mb h}\right\Vert _{4}^{4}}{L}\mb I-\frac{\left\Vert \widehat{\mb h}\right\Vert _{2}^{4}}{L^{2}}\mb I\right)
\end{align*}
from which we obtain the bound 
\begin{align*}
\left\Vert \mbb E\left[\sum_{k\in\msf K}\mb Z_{k}^{2}\right]\right\Vert  & \leq\sigma^{2}:=\left|\msf K\right|\mu.
\end{align*}
 Therefore, for any $t>0$ the matrix Bernstein's inequality guarantees
that 
\begin{align*}
\left\Vert \sum_{k\in\msf K}\mb Z_{k}\right\Vert  & \leq C\max\left\{ \sigma\sqrt{t+\log2L},B\log\left(\frac{\sqrt{\left|\msf K\right|}B}{\sigma}\right)\left(t+\log2L\right)\right\} \\
 & =C\max\left\{ \sqrt{\left|\msf K\right|\mu\left(t+\log2L\right)},\frac{\mu}{\log2}\log\left(\frac{\sqrt{\mu}}{\log2}\right)\left(t+\log2L\right)\right\} 
\end{align*}
 with probability at least $1-e^{-t}$. In particular, with $t=\beta\log L$
we obtain 
\begin{align*}
\left\Vert \sum_{k\in\msf K}\mb Z_{k}\right\Vert  & \overset{\beta}{\lesssim}\max\left\{ \sqrt{\left|\msf K\right|\mu\log L},\mu\log\mu\log L\right\} 
\end{align*}
 with probability at least $1-L^{-\beta}$.\end{proof}
\begin{lem}
\label{lem:XX}For any $\beta>0$ the matrices $\mb Z'_{k}$, as defined
by (\ref{eq:Z'_k}), obey 
\begin{align*}
\left\Vert \sum_{k\in\msf K}\mb Z'_{k}\right\Vert  & \overset{\beta}{\lesssim}\max\left\{ \sqrt{\left|\msf K\right|\log L},\log^{2}L\right\} ,
\end{align*}
with probability at least $1-L^{-\beta}$. \end{lem}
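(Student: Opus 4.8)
The plan is to exploit the fact that $\mb Z'_k$ in (\ref{eq:Z'_k}) is a \emph{diagonal} $N\times N$ matrix with real (hence Hermitian) entries, so that $\bigl\|\sum_{k\in\msf K}\mb Z'_k\bigr\|$ is simply the maximum over $n$ of a scalar sum. First I would compute the $n$-th diagonal entry explicitly. Let $\mb c_n$ be the $n$-th row of $\mb F_N\mb D_{\mb x}$ (written as a column), so that its entries satisfy $|c_{n,l}|=|x_l|/\sqrt L$, and let $S_{n,k}$ denote the $n$-th entry of $\mb F_N\mb D_{\mb x}\mb{\phi}_k=\mb c_n^{\mr T}\mb{\phi}_k$. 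Then the $(n,n)$ entry of $\mb F_N\mb D_{\mb x}(\mb{\phi}_k\mb{\phi}_k^*-\mb I)\mb D_{\mb x}^*\mb F_N^*$ equals $|S_{n,k}|^2-\|\mb c_n\|_2^2$, and since $\|\mb x\|_2=1$ we have $\|\mb c_n\|_2^2=1/L$ for every $n$. Writing $|S_{n,k}|^2=\mb{\phi}_k^{\mr T}\Re(\overline{\mb c_n}\mb c_n^{\mr T})\mb{\phi}_k$ and splitting off the diagonal of $\Re(\overline{\mb c_n}\mb c_n^{\mr T})$ (which contributes exactly $\|\mb c_n\|_2^2=1/L$), I obtain $[\mb Z'_k]_{nn}=L|S_{n,k}|^2-1=L\,\mb{\phi}_k^{\mr T}\mb B_n\mb{\phi}_k$, where $\mb B_n$ is the hollow symmetric matrix obtained by zeroing the diagonal of $\Re(\overline{\mb c_n}\mb c_n^{\mr T})$. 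A direct estimate gives $\|\mb B_n\|\le 2/L$ and $\|\mb B_n\|_F\le\|\mb c_n\|_2^2=1/L$, so each entry is a mean-zero order-two Rademacher chaos of controlled size. It is also useful to record $L|S_{n,k}|^2=|(\widehat{\mb x\odot\mb{\phi}_k})_n|^2$, which shows the entries can be as large as $\Theta(L)$ for aligned realizations.

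With this reduction I would apply the matrix Bernstein inequality (Proposition \ref{pro:MatrixBernstein's}) to the independent, mean-zero, Hermitian summands $\{\mb Z'_k\}_{k\in\msf K}$, exactly as in the proof of Lemma \ref{lem:HH}, feeding it two inputs. For the variance parameter, $(\mb Z'_k)^2$ is diagonal and its $n$-th entry has expectation $\mr{Var}\bigl(L|S_{n,k}|^2\bigr)=L^2\,\mbb E[(\mb{\phi}_k^{\mr T}\mb B_n\mb{\phi}_k)^2]=2L^2\|\mb B_n\|_F^2\le 2$, using the identity $\mbb E[(\mb{\phi}^{\mr T}\mb B\mb{\phi})^2]=2\|\mb B\|_F^2$ valid for hollow symmetric $\mb B$; summing over $k$ gives $\sigma^2=\bigl\|\mbb E[\sum_k(\mb Z'_k)^2]\bigr\|\lesssim|\msf K|$. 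For the sub-exponential parameter $B$, I would bound $\|\mb Z'_k\|_{\psi_1}=\bigl\|\max_n|[\mb Z'_k]_{nn}|\bigr\|_{\psi_1}$ in two stages: Proposition \ref{pro:Hanson-Wright} in its heavy-tail (linear) regime yields $\|[\mb Z'_k]_{nn}\|_{\psi_1}\lesssim L\|\mb B_n\|\lesssim 1$ for each fixed $n$, and Proposition \ref{pro:MaxOrlicz} with $\psi=\psi_1$ (so $\psi_1^{-1}(N)=\log(N+1)$) converts the maximum into $B\lesssim\log(N+1)\le\log L$, using $N\le L$.

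Finally I would substitute $\sigma^2\lesssim|\msf K|$ and $B\lesssim\log L$ into Proposition \ref{pro:MatrixBernstein's} with $t=\beta\log L$ and $d_1=d_2=N\le L$ (so $\log(d_1+d_2)\lesssim\log L$). The sub-Gaussian branch contributes $\sigma\sqrt{t+\log 2N}\lesssim\sqrt{|\msf K|\log L}$, and the sub-exponential branch contributes $B\,\log\!\bigl(\tfrac{\sqrt{|\msf K|}B}{\sigma}\bigr)(t+\log 2N)\lesssim\log^2 L$ after collecting the slowly varying logarithmic prefactor; taking the maximum of the two branches produces the claimed bound with probability at least $1-L^{-\beta}$.

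The delicate step is the estimate for $B$. Unlike in Lemma \ref{lem:HH}, where $\|\mb Z_k\|\le\mu$ deterministically, the diagonal entries of $\mb Z'_k$ admit no useful absolute bound (they are individual squared Fourier magnitudes of $\mb x\odot\mb{\phi}_k$, which can reach order $L$), so one is forced through the Orlicz-norm-of-a-maximum estimate, and this is precisely what injects the $\log(N+1)$ factor. A secondary point requiring care is the $\log(\sqrt{|\msf K|}B/\sigma)$ prefactor in Proposition \ref{pro:MatrixBernstein's}, since $\sigma$ (equivalently $\|\mb B_n\|_F$) can degenerate for concentrated images; this ratio is controlled by noting that degeneracy of $\sigma$ forces the summands themselves to be small, so the factor is absorbed into the logarithmic terms that are collected into $\log^2 L$. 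Because this lemma contributes only $\sqrt{\log L/|\msf K|}$ and $\log^2 L/|\msf K|$ to the deviation bound (\ref{eq:BDMainDevBound}), the stated bound is comfortably subordinate to the overall requirement $|\msf K|\overset{\beta}{\gtrsim}\mu\log^2 L\,\log\log(N+1)$ of Lemma \ref{lem:ANearIsometry}.
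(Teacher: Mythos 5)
Your proposal is correct and follows essentially the same route as the paper: exploit the diagonal form of $\mb Z'_{k}$, bound each diagonal entry's $\psi_{1}$-norm by a Rademacher-chaos estimate (the paper via Lemma \ref{lem:RademacherSubExp1}, you via Hanson--Wright applied to the hollowed rank-one matrix $\mb B_{n}$), lift to $B\lesssim\log\left(N+1\right)$ with Proposition \ref{pro:MaxOrlicz}, get $\sigma^{2}\lesssim\left|\msf K\right|$ (the paper via the fourth-moment bound of Lemma \ref{lem:RademacherFoutrhMoment}, you via the exact chaos variance identity $2\left\Vert \mb B_{n}\right\Vert _{F}^{2}$), and conclude with matrix Bernstein at $t=\beta\log L$. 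The only caveat is that the sub-exponential branch actually comes out as $\log L\,\log\left(N+1\right)\log\log\left(N+1\right)$ --- which is what the paper's own proof ends with --- so your compression of it into $\log^{2}L$ carries exactly the same (harmless for Theorem \ref{thm:BlindDeconvolution}) $\log\log$ slack that already separates the paper's lemma statement from its proof.
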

\begin{proof}
Note that $\mb Z'_{k}=L\left(\mb D_{\left|\mb F_{N}\left(\mb x\odot\mb{\phi}_{k}\right)\right|^{2}}-\frac{\left\Vert \mb x\right\Vert _{2}^{2}}{L}\mb I\right)$
using which we deduce that 
\begin{align*}
\left\Vert \mb Z'_{k}\right\Vert  & \leq L\max_{l=1,2,\dotsc,N}\left|\left|\left\langle \mb{\phi}_{k},\overline{\mb x}\odot\mb f_{l}\right\rangle \right|^{2}-\frac{\left\Vert \mb x\right\Vert _{2}^{2}}{L}\right|.
\end{align*}
 Therefore, the Orlicz 1-norm of the right-hand side is an upper bound
for that of $\mb Z'_{k}$. Using Proposition \ref{pro:MaxOrlicz},
we can thus show that 
\begin{align*}
\left\Vert \mb Z'_{k}\right\Vert _{\psi_{1}} & \leq c_{\psi_{1}}L\log\left(N+1\right)\max_{l=1,2,\dotsc,N}\left\Vert \left|\left|\left\langle \mb{\phi}_{k},\overline{\mb x}\odot\mb f_{l}\right\rangle \right|^{2}-\frac{\left\Vert \mb x\right\Vert _{2}^{2}}{L}\right|\right\Vert _{\psi_{1}},
\end{align*}
for some absolute constant $c_{\psi_{1}}>0$ depending only on the
function $\psi_{1}\left(u\right)=e^{u}-1$. Applying Lemma \ref{lem:RademacherSubExp1} below
to the latter inequality then yields 
\begin{align*}
\left\Vert \mb Z'_{k}\right\Vert _{\psi_{1}} & \leq8c_{\psi_{1}}\log\left(N+1\right)
\end{align*}
 and thereby
\begin{align*}
\max_{k\in\msf K}\left\Vert \mb Z'_{k}\right\Vert _{\psi_{1}} & \leq B:=8c_{\psi_{1}}\log\left(N+1\right).
\end{align*}
 Furthermore, we have 
\begin{align*}
\mbb E\left[\mb Z_{k}^{\prime2}\right] & =L^{2}\left(\mbb E\left[\mb D_{\left|\mb F_{N}\left(\mb x\odot\mb{\phi}_{k}\right)\right|^{4}}\right]-\frac{\left\Vert \mb x\right\Vert _{2}^{4}}{L^{2}}\mb I\right)\\
 & =L^{2}\left(\mb D_{\mbb E\left[\left|\mb F_{N}\left(\mb x\odot\mb{\phi}_{k}\right)\right|^{4}\right]}-\frac{\left\Vert \mb x\right\Vert _{2}^{4}}{L^{2}}\mb I\right)\\
 & \preccurlyeq2L^{2}\left(\frac{\left\Vert \mb x\right\Vert _{2}^{4}}{L^{2}}-\frac{\left\Vert \mb x\right\Vert _{4}^{4}}{L^{2}}\mb I\right)\\
 & \preccurlyeq2\mb I
\end{align*}
where the first matrix inequality follows from Lemma \ref{lem:RademacherFoutrhMoment} below.
The above inequality then yields 
\begin{align*}
\left\Vert \mbb E\left[\sum_{k\in\msf K}\mb Z_{k}^{\prime2}\right]\right\Vert  & \leq\sigma^{2}:=2\left|\msf K\right|.
\end{align*}
Applying the matrix Bernstein's inequality for $t>0$ shows that that
\begin{align*}
\left\Vert \sum_{k\in\msf K}\mb Z_{k}^{\prime}\right\Vert  & \leq C'\max\left\{ \sigma\sqrt{t+\log2N},B\log\left(\frac{\sqrt{\left|\msf K\right|}B}{\sigma}\right)\left(t+\log2N\right)\right\} \\
 & =C'\max\left\{ \sqrt{2\left|\msf K\right|\left(t+\log2N\right)},8c_{\psi_{1}}\log\left(N+1\right)\log\left(\frac{8c_{\psi_{1}}\log\left(N+1\right)}{\sqrt{2}}\right)\left(t+\log2N\right)\right\} 
\end{align*}
 with probability at least $1-e^{-t}$. Setting $t=\beta\log L$ we
obtain 
\begin{align*}
\left\Vert \sum_{k\in\msf K}\mb Z'_{k}\right\Vert  & \overset{\beta}{\lesssim}\max\left\{ \sqrt{\left|\msf K\right|\log L},\log L\log\left(N+1\right)\log\log\left(N+1\right)\right\} 
\end{align*}
 with probability at least $1-L^{-\beta}$.\end{proof}
\begin{lem}
\label{lem:XH}For any $\beta>0$ the matrices $\mb Z''_{k}$, as
defined by (\ref{eq:Z''_k}), obey 
\begin{align*}
\left\Vert \sum_{k\in\msf K}\mb Z''_{k}\right\Vert  & \overset{\beta}{\lesssim}\max\left\{ \sqrt{\mu\left|\msf K\right|\log L},\sqrt{\mu}\log L\log\left(N+1\right)\,\log\log\left(N+1\right)\right\} ,
\end{align*}
with probability at least $1-L^{-\beta}$. \end{lem}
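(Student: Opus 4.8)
The plan is to bound $\left\Vert\sum_{k\in\msf K}\mb Z''_k\right\Vert$ with the matrix Bernstein inequality (Proposition \ref{pro:MatrixBernstein's}), exactly as in the proofs of Lemmas \ref{lem:HH} and \ref{lem:XX}. Each $\mb Z''_k$ defined by (\ref{eq:Z''_k}) is an $L\times N$ matrix, so $d_1+d_2=L+N\leq 2L$ and the ambient-dimension term will contribute a $\log L$ factor. The first step is to confirm that the $\mb Z''_k$ are centered: writing the leading factor as $L\,\mb D_{\mb{\phi}_k}\mb F_N^*\mb D_{\mb c_k}$ with $\mb c_k:=\overline{\widehat{\mb h}}\odot\mb F_N(\mb x\odot\mb{\phi}_k)$ and taking the entrywise expectation, the identity $\mbb E[\phi_{k,l}\phi_{k,l'}]=\delta_{l,l'}$ together with $|(\mb F_N)_{n,l}|^2=1/L$ collapses the sum to $\tfrac1L\mb x\widehat{\mb h}^*$, which is precisely the subtracted term; hence $\mbb E[\mb Z''_k]=\mb 0$. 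It then remains to estimate the uniform bound $B$ on $\max_k\|\mb Z''_k\|_{\psi_1}$ and the variance proxy $\sigma^2$, after which the claimed bound falls out of Bernstein with $t=\beta\log L$.

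For the variance proxy $\sigma^2=\max\{\|\mbb E\sum_k\mb Z''_k(\mb Z''_k)^*\|,\|\mbb E\sum_k(\mb Z''_k)^*\mb Z''_k\|\}$, the key observation is that the $N\times N$ side is the one carrying the coherence. The product $(\mb Z''_k)^*\mb Z''_k$ telescopes, using $\mb D_{\mb{\phi}_k}^2=\mb I$ and the row-orthonormality $\mb F_N\mb F_N^*=\mb I_N$, to the \emph{diagonal} matrix $L^2\mb D_{|\mb c_k|^2}$; taking expectations ($\mbb E|\mb c_k|^2=\tfrac1L|\widehat{\mb h}|^2$) and adding the lower-order corrections from the centering term yields $\mbb E[(\mb Z''_k)^*\mb Z''_k]=L\mb D_{|\widehat{\mb h}|^2}-\widehat{\mb h}\widehat{\mb h}^*$, whose norm is $L\|\widehat{\mb h}\|_\infty^2=\mu$ up to a $+1$. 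The $L\times L$ product $\mb Z''_k(\mb Z''_k)^*$, by contrast, is essentially rank-one in the random direction $\mb D_{\mb{\phi}_k}\mb F_N^*(\cdots)$, so after the fourth-order Rademacher average (Lemma \ref{lem:RademacherFoutrhMoment}) its expectation spreads over all $L$ directions and contributes only $O(1)$ per term. Taking the maximum gives $\sigma^2\lesssim\left|\msf K\right|\mu$.

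The uniform bound $B$ is where the $\log(N+1)$ factor enters. Using $\|\mb D_{\mb{\phi}_k}\|=\|\mb F_N^*\|=1$ one has $\|\mb Z''_k\|\leq L\max_{1\leq n\leq N}|\widehat h_n|\,\bigl|(\mb F_N(\mb x\odot\mb{\phi}_k))_n\bigr|+\|\mb x\widehat{\mb h}^*\|$, and the second (deterministic) summand equals $1$ and is negligible. Each $(\mb F_N(\mb x\odot\mb{\phi}_k))_n$ is a Rademacher sum whose coefficients have energy $\tfrac1L\|\mb x\|_2^2=\tfrac1L$, so by Lemma \ref{lem:RademacherSubExp1} its $\psi_1$-norm is $\lesssim 1/\sqrt L$, whence $\bigl\||\widehat h_n|\,|(\mb F_N(\mb x\odot\mb{\phi}_k))_n|\bigr\|_{\psi_1}\lesssim|\widehat h_n|/\sqrt L\leq\sqrt{\mu}/L$. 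Applying Proposition \ref{pro:MaxOrlicz} to the maximum over the $N$ frequencies introduces the factor $\psi_1^{-1}(N)=\log(N+1)$, yielding $\max_{k\in\msf K}\|\mb Z''_k\|_{\psi_1}\leq B\lesssim\sqrt{\mu}\,\log(N+1)$.

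With these estimates, Proposition \ref{pro:MatrixBernstein's} at $t=\beta\log L$ produces the two competing terms $\sigma\sqrt{t+\log 2L}\lesssim\sqrt{\mu\left|\msf K\right|\log L}$ and $B\log(\sqrt{\left|\msf K\right|}B/\sigma)(t+\log 2L)$. In the latter, $\sqrt{\left|\msf K\right|}B/\sigma\lesssim\log(N+1)$, so the logarithm contributes only a $\log\log(N+1)$, and the term becomes $\sqrt{\mu}\,\log L\,\log(N+1)\,\log\log(N+1)$; taking the maximum reproduces the claimed bound with probability at least $1-L^{-\beta}$. I expect the variance computation to be the main obstacle: unlike $\mb Z_k$ and $\mb Z'_k$, the matrix $\mb Z''_k$ is a genuine cross term coupling the $\widehat{\mb h}$ and $\mb x$ directions, so one must expand both $\mb Z''_k(\mb Z''_k)^*$ and $(\mb Z''_k)^*\mb Z''_k$, recognize that the coherence $\mu$ enters only through the diagonal $N\times N$ side, and carry out the fourth-moment Rademacher average on the $L\times L$ side carefully enough to confirm that its leading order is $O(\left|\msf K\right|)$ rather than something larger.
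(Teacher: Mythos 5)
Your proposal follows the paper's own proof essentially step for step: the same matrix Bernstein setup on the $L\times N$ matrices, the same uniform bound $B\lesssim\sqrt{\mu}\log\left(N+1\right)$ obtained by combining Proposition \ref{pro:MaxOrlicz} over the $N$ frequencies with the sub-exponential bound $\lesssim 1/\sqrt{L}$ on each linear Rademacher form, the same variance estimates ($\left\Vert \mbb E\sum_{k}\mb Z_{k}^{\prime\prime*}\mb Z_{k}^{\prime\prime}\right\Vert \leq\left|\msf K\right|\mu$ from the telescoped diagonal on the $N\times N$ side, and $O\left(\left|\msf K\right|\right)$ on the $L\times L$ side via Lemma \ref{lem:RademacherFoutrhMoment}), and the same conclusion at $t=\beta\log L$ with $\sqrt{\left|\msf K\right|}B/\sigma\lesssim\log\left(N+1\right)$ yielding the $\log\log\left(N+1\right)$ factor. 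The only blemish is a citation slip: the $\psi_{1}$-bound for the linear form $\left|\left\langle \mb{\phi}_{k},\overline{\mb x}\odot\mb f_{l}\right\rangle \right|$ is Lemma \ref{lem:RademacherSubExp2}, not Lemma \ref{lem:RademacherSubExp1} (which treats the centered quadratic form); your explicit verification that $\mbb E\left[\mb Z''_{k}\right]=\mb 0$ is correct and is a detail the paper leaves implicit.
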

\begin{proof}
Using the triangle inequality we have 
\begin{align*}
\left\Vert \mb Z''_{k}\right\Vert _{\psi_{1}} & \leq L\left\Vert \mb D_{\mb{\phi}_{k}}^{*}\mb F_{N}^{*}\mb D_{\widehat{\mb h}}^{*}\mb D_{\mb F_{N}\left(\mb x\odot\mb{\phi}_{k}\right)}\right\Vert _{\psi_{1}}+\left\Vert \mb x\widehat{\mb h}^{*}\right\Vert _{\psi_{1}}.
\end{align*}
Straightforward bounds for the spectral norm yield 
\begin{align*}
\left\Vert \mb D_{\mb{\phi}_{k}}^{*}\mb F_{N}^{*}\mb D_{\widehat{\mb h}}^{*}\mb D_{\mb F_{N}\left(\mb x\odot\mb{\phi}_{k}\right)}\right\Vert  & \leq\left\Vert \widehat{\mb h}\right\Vert _{\infty}\left\Vert \mb F_{N}\left(\mb x\odot\mb{\phi}_{k}\right)\right\Vert _{\infty},
\end{align*}
 using which we can write 
\begin{align*}
\left\Vert \mb Z''_{k}\right\Vert _{\psi_{1}} & \leq L\left\Vert \widehat{\mb h}\right\Vert _{\infty}\left\Vert \max_{l=1,2,\dotsc,N}\left|\left\langle \mb{\phi}_{k},\overline{\mb x}\odot\mb f_{l}\right\rangle \right|\right\Vert _{\psi_{1}}+\frac{1}{\log2}\\
 & =\sqrt{L\mu}\left\Vert \max_{l=1,2,\dotsc,N}\left|\left\langle \mb{\phi}_{k},\overline{\mb x}\odot\mb f_{l}\right\rangle \right|\right\Vert _{\psi_{1}}+\frac{1}{\log2}.
\end{align*}
 The Orlicz 1-norm on the right-hand side can be bounded using Proposition
\ref{pro:MaxOrlicz}. Therefore, we have 
\begin{align*}
\left\Vert \max_{l=1,2,\dotsc,N}\left|\left\langle \mb{\phi}_{k},\overline{\mb x}\odot\mb f_{l}\right\rangle \right|\right\Vert _{\psi_{1}} & \leq c_{\psi_{1}}\log\left(N+1\right)\max_{l=1,2,\dotsc,N}\left\Vert \left|\left\langle \mb{\phi}_{k},\overline{\mb x}\odot\mb f_{l}\right\rangle \right|\right\Vert _{\psi_{1}}
\end{align*}
for some absolute constant $c_{\psi_{1}}>0$ that only depends on
the function $\psi_{1}\left(u\right)=e^{u}-1$. Lemma \ref{lem:RademacherSubExp2} below
guarantees that 
\begin{align*}
\left\Vert \left|\left\langle \mb{\phi}_{k},\overline{\mb x}\odot\mb f_{l}\right\rangle \right|\right\Vert _{\psi_{1}} & \leq\frac{8}{\sqrt{L}}.
\end{align*}
Thus, for each $k$ we have 
\begin{align*}
\left\Vert \mb Z''_{k}\right\Vert _{\psi_{1}} & \leq\left(8c_{\psi_{1}}+1\right)\frac{\sqrt{\mu}\log\left(N+1\right)}{\log2},
\end{align*}
 or equivalently 
\begin{align*}
\max_{k\in\mc{\msf K}}\left\Vert \mb Z''_{k}\right\Vert _{\psi_{1}} & \leq B:=\left(8c_{\psi_{1}}+1\right)\frac{\sqrt{\mu}\log\left(N+1\right)}{\log2}.
\end{align*}
 Furthermore, we have 
\begin{align*}
\mbb E\left[\mb Z_{k}^{\prime\prime*}\mb Z_{k}^{\prime\prime}\right] & =L^{2}\mbb E\left[\mb D_{\left|\widehat{\mb h}\odot\mb F_{N}\left(\mb x\odot\mb{\phi}_{k}\right)\right|^{2}}\right]-\widehat{\mb h}\widehat{\mb h}^{*}\\
 & =L\left\Vert \mb x\right\Vert _{2}^{2}\mb D_{\left|\widehat{\mb h}\right|^{2}}-\widehat{\mb h}\widehat{\mb h}^{*}
\end{align*}
which implies that 
\begin{align*}
\left\Vert \mbb E\left[\sum_{k\in\msf K}\mb Z_{k}^{\prime\prime*}\mb Z_{k}^{\prime\prime}\right]\right\Vert  & \leq\left|\msf K\right|L\left\Vert \widehat{\mb h}\right\Vert _{\infty}^{2}=\left|\msf K\right|\mu.
\end{align*}
We also have 
\begin{align*}
\mbb E\left[\mb Z_{k}^{\prime\prime}\mb Z_{k}^{\prime\prime*}\right] & =L^{2}\mbb E\left[\mb D_{\mb{\phi}_{k}}^{*}\mb F_{N}^{*}\mb D_{\left|\widehat{\mb h}\odot\mb F_{N}\left(\mb x\odot\mb{\phi}_{k}\right)\right|^{2}}\mb F_{N}\mb D_{\mb{\phi}_{k}}\right]-\mb x\mb x^{*}
\end{align*}
 which results in 
\begin{align*}
\left\Vert \mbb E\left[\sum_{k\in\msf K}\mb Z_{k}^{\prime\prime}\mb Z_{k}^{\prime\prime*}\right]\right\Vert  & \leq\left|\msf K\right|L^{2}\max_{\mb u:\left\Vert \mb u\right\Vert _{2}=1}\mbb E\left[\sum_{l=1}^{N}\left|\widehat{h}_{l}\right|^{2}\left|\left\langle \mb{\phi}_{k},\mb f_{l}\odot\overline{\mb x}\right\rangle \right|^{2}\left|\left\langle \mb{\phi}_{k},\mb f_{l}\odot\overline{\mb u}\right\rangle \right|^{2}\right]\\
 & \leq\left|\msf K\right|L^{2}\max_{\mb u:\left\Vert \mb u\right\Vert _{2}=1}\sum_{l=1}^{N}\left|\widehat{h}_{l}\right|^{2}\!\left(3\left\Vert \mb f_{l}\odot\mb x\right\Vert _{2}^{2}\left\Vert \mb f_{l}\odot\overline{\mb u}\right\Vert _{2}^{2}-2\left\langle \left|\mb f_{l}\odot\overline{\mb x}\right|^{2}\!,\left|\mb f_{l}\odot\overline{\mb u}\right|^{2}\right\rangle \right)\\
 & \leq\left|\msf K\right|L^{2}\max_{\mb u:\left\Vert \mb u\right\Vert _{2}=1}\sum_{l=1}^{N}3\left|\widehat{h}_{l}\right|^{2}\left\Vert \mb f_{l}\odot\overline{\mb x}\right\Vert _{2}^{2}\left\Vert \mb f_{l}\odot\overline{\mb u}\right\Vert _{2}^{2}\\
 & =3\left|\msf K\right|
\end{align*}
where we used Lemma \ref{lem:RademacherFoutrhMoment} below in the second
inequality. Therefore, we obtain 
\begin{align*}
\max\left\{ \left\Vert \mbb E\left[\sum_{k\in\msf K}\mb Z_{k}^{\prime\prime*}\mb Z_{k}^{\prime\prime}\right]\right\Vert ,\left\Vert \mbb E\left[\sum_{k\in\msf K}\mb Z_{k}^{\prime\prime}\mb Z_{k}^{\prime\prime*}\right]\right\Vert \right\}  & \leq\sigma^{2}:=\max\left\{ \mu,3\right\} \left|\msf K\right|.
\end{align*}
Applying the matrix Bernstein's inequality for $t>0$ shows that 
\begin{align*}
\left\Vert \sum_{k\in\msf K}\mb Z_{k}^{\prime\prime}\right\Vert  & \leq C''\max\left\{ \sigma\sqrt{t+\log\left(N+L\right)},B\log\left(\frac{\sqrt{\left|\msf K\right|}B}{\sigma}\right)\left(t+\log\left(N+L\right)\right)\right\} \\
 & =C''\max\left\{ \sqrt{\left|\msf K\right|\left(t+\log\left(N+L\right)\right)\max\left\{ \mu,3\right\} },\right.\\
 & \left.\left(8c_{\psi_{1}}+1\right)\frac{\sqrt{\mu}\log\left(N+1\right)}{\log2}\log\left(\frac{\left(8c_{\psi_{1}}+1\right)\log\left(N+1\right)}{\log2}\right)\left(t+\log\left(N+L\right)\right)\right\} 
\end{align*}
 with probability at least $1-e^{-t}$. Setting $t=\beta\log L$ we
obtain 
\begin{align*}
\left\Vert \sum_{k\in\msf K}\mb Z''_{k}\right\Vert  & \overset{\beta}{\lesssim}\max\left\{ \sqrt{\left|\msf K\right|\mu\log L},\sqrt{\mu}\log L\log\left(N+1\right)\,\log\log\left(N+1\right)\right\} ,
\end{align*}
 with probability at least $1-L^{-\beta}$. \end{proof}
\begin{lem}
\label{lem:z_k}Let $\mb z_{k}$ and $\mu_{p-1}$ be defined as in
(\ref{eq:z_k}) and (\ref{eq:maxRowWp}), respectively. Furthermore,
as in Lemma \ref{lem:maxRowWp}, suppose that $\mu_{p-1}\leq2^{-2p+2}\mu$.
Then, conditioned on $\mc A_{1},\mc A_{2},\dotsc,$ and $\mc A_{p-1}$,
for $\beta>0$ we have 
\begin{align*}
\left\Vert \sum_{k\in\msf K_{p}}\mb z_{k}\right\Vert _{2}\overset{\beta}{\lesssim} & 2^{-p+1}\max\left\{ \sqrt{\left|\msf K_{p}\right|\mu\log L},\sqrt{\mu}\log\left(N+1\right)\log\log\left(N+1\right)\,\log L\right\} 
\end{align*}
 with probability at least $1-N^{-1}L^{-\beta}$.\end{lem}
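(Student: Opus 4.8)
The plan is to recognize $\sum_{k\in\msf K_{p}}\mb z_{k}$ as a sum of independent, mean-zero random vectors (conditioned on $\mc A_{1},\dotsc,\mc A_{p-1}$) and to invoke the matrix Bernstein inequality, Proposition \ref{pro:MatrixBernstein's}, with the $\mb z_{k}$ viewed as $N\times 1$ matrices (so $d_{1}=N$, $d_{2}=1$). The whole argument runs parallel to the proof of Lemma \ref{lem:XH}, because $\mb z_{k}$ has essentially the same algebraic structure as a column of (the transpose of) $\mb Z''_{k}$; the novelty is only the extra $2^{-p}$ decay, which will be tracked through the hypotheses $\left\Vert \mb W_{p-1}\right\Vert _{F}\le 2^{-(p-1)}$ and $\mu_{p-1}\le 2^{-2(p-1)}\mu$.

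First I would establish the mean-zero property. Conditioned on $\mc A_{1},\dotsc,\mc A_{p-1}$ the matrix $\mb W_{p-1}$ is deterministic, and writing $\mb z_{k}=\mbb E\left[\mb z'_{k}\right]-\mb z'_{k}$ with $\mb z'_{k}:=L\mb D_{\mb{\phi}_{k}}\overline{\mb F}_{N}^{*}\bigl(\overline{\left(\mb F_{N}\odot\mb W_{p-1}\right)\mb{\phi}_{k}}\odot\widehat{\mb h}\bigr)$, a direct Rademacher expectation computation (using $\mbb E\left[\mb{\phi}_{k}\mb{\phi}_{k}^{*}\right]=\mb I$) gives $\mbb E\left[\mb z'_{k}\right]=\mb W_{p-1}^{*}\widehat{\mb h}$, whence $\mbb E\left[\mb z_{k}\right]=\mb 0$. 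Next I would bound the subexponential parameter $B:=\max_{k}\left\Vert \mb z_{k}\right\Vert _{\psi_{1}}$ exactly as in Lemma \ref{lem:XH}: a triangle inequality separates the deterministic mean term from the random one, a crude spectral bound pulls out $\left\Vert \widehat{\mb h}\right\Vert _{\infty}$ and the entrywise factor $\left\Vert \mb F_{N}(\cdot)\right\Vert _{\infty}$, and Proposition \ref{pro:MaxOrlicz} converts the maximum over the $N$ coordinates into a $\log(N+1)$ prefactor, with the individual Orlicz norms controlled by Lemma \ref{lem:RademacherSubExp2}. Feeding in $\mu=L\left\Vert \widehat{\mb h}\right\Vert _{\infty}^{2}$ together with the row-norm hypothesis $\mu_{p-1}=L\left\Vert \mb W_{p-1}\right\Vert _{\infty,2}^{2}\le 2^{-2(p-1)}\mu$ should yield $B\overset{\beta}{\lesssim}2^{-(p-1)}\sqrt{\mu}\,\log(N+1)$.

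I would then compute the variance proxy $\sigma^{2}=\max\bigl\{\left\Vert \mbb E\sum_{k}\mb z_{k}\mb z_{k}^{*}\right\Vert,\ \mbb E\sum_{k}\left\Vert \mb z_{k}\right\Vert _{2}^{2}\bigr\}$ via the fourth-moment Rademacher identities of Lemma \ref{lem:RademacherFoutrhMoment}, which collapse the quartic terms into products of $\ell_{2}$-norms and should give $\sigma^{2}\overset{\beta}{\lesssim}2^{-2(p-1)}\mu\,\left|\msf K_{p}\right|$. Finally I would substitute into Proposition \ref{pro:MatrixBernstein's} with $t=\beta\log L+\log N$, so that $e^{-t}=N^{-1}L^{-\beta}$ matches the claimed failure probability. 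Since $N\le L$ we have $\log(N+1)\lesssim\log L$ and $t+\log(N+1)\overset{\beta}{\lesssim}\log L$, while the ratio $\sqrt{\left|\msf K_{p}\right|}\,B/\sigma\lesssim\log(N+1)$ turns the Bernstein prefactor $\log\bigl(\sqrt{\left|\msf K_{p}\right|}B/\sigma\bigr)$ into $\log\log(N+1)$. Collecting the two branches of the Bernstein maximum then reproduces exactly
\begin{align*}
\left\Vert \sum_{k\in\msf K_{p}}\mb z_{k}\right\Vert _{2} & \overset{\beta}{\lesssim}2^{-p+1}\max\left\{ \sqrt{\left|\msf K_{p}\right|\mu\log L},\ \sqrt{\mu}\,\log(N+1)\,\log\log(N+1)\,\log L\right\},
\end{align*}
the common factor $2^{-p+1}$ emerging from both $B$ and $\sigma$.

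I expect the main obstacle to be the variance computation in the third step: getting the cross terms in $\mbb E\left[\mb z_{k}\mb z_{k}^{*}\right]$ to cancel correctly so that $\sigma^{2}$ inherits the full $2^{-2(p-1)}$ decay rather than a weaker power. The bookkeeping of conjugates and Hadamard products in $\mb z_{k}$, and the careful application of Lemma \ref{lem:RademacherFoutrhMoment} to the $\widehat{\mb h}$-weighted quartics, is where the estimate could most easily lose a factor; everything else is a routine transcription of the Lemma \ref{lem:XH} machinery.
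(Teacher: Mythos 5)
Your proposal is correct and is essentially the paper's own proof: the same mean-zero decomposition of $\mb z_{k}$ (with $\mbb E\left[\mb z'_{k}\right]=\mb W_{p-1}^{*}\widehat{\mb h}$ verified from $\mbb E\left[\mb{\phi}_{k}\mb{\phi}_{k}^{*}\right]=\mb I$), the same subexponential bound $B\lesssim2^{-p+1}\sqrt{\mu}\log\left(N+1\right)$ via Proposition \ref{pro:MaxOrlicz} and Lemma \ref{lem:RademacherSubExp2} with the decay tracked through $\left\Vert \mb W_{p-1}\right\Vert _{F}\leq2^{-p+1}$ and $\mu_{p-1}\leq2^{-2\left(p-1\right)}\mu$, the same variance proxy $\sigma^{2}\lesssim2^{-2p+2}\left|\msf K_{p}\right|\mu$ via Lemma \ref{lem:RademacherFoutrhMoment} (including the delicate $\mb z_{k}\mb z_{k}^{*}$ term you flag, which the paper handles with Cauchy--Schwarz and H\"{o}lder), and the same invocation of Proposition \ref{pro:MatrixBernstein's} with $t=\beta\log L+\log N$ and ratio $\sqrt{\left|\msf K_{p}\right|}B/\sigma\lesssim\log\left(N+1\right)$. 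Two harmless slips worth noting: $\mb z_{k}\in\mbb C^{L}$, so the Bernstein dimensions are $L\times1$ (dimension factor $\log\left(L+1\right)$, absorbed by your choice of $t$), not $N\times1$; and in the $\psi_{1}$-bound the correct H\"{o}lder split is $\left\Vert \mb v\odot\widehat{\mb h}\right\Vert _{2}\leq\left\Vert \mb v\right\Vert _{\infty}\left\Vert \widehat{\mb h}\right\Vert _{2}$ with $\left\Vert \widehat{\mb h}\right\Vert _{2}=1$ --- pulling out $\left\Vert \widehat{\mb h}\right\Vert _{\infty}$ as you describe would force a $\sqrt{N}$ sup-to-$\ell_{2}$ conversion costing an extra $\sqrt{\mu N/L}$; the coherence enters this term only through $\mu_{p-1}$, and since you state the correct $B$ the final bound is unaffected.
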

\begin{proof}
Applying the triangle inequality to (\ref{eq:z_k}) yields 
\begin{align}
\left\Vert \mb z_{k}\right\Vert _{\psi_{1}} & \leq B:=\left\Vert \mb W_{p-1}^{*}\widehat{\mb h}\right\Vert _{\psi_{1}}+L\left\Vert \mb D_{\mb{\phi}_{k}}\overline{\mb F}_{N}^{*}\left(\overline{\left(\mb F_{N}\odot\mb W_{p-1}\right)\mb{\phi}_{k}}\odot\widehat{\mb h}\right)\right\Vert _{\psi_{1}}.\label{eq:z_k-Orlicz}
\end{align}
 The first term on the right-hand side of (\ref{eq:z_k-Orlicz}) can
be bounded as 
\begin{align*}
\left\Vert \mb W_{p-1}^{*}\widehat{\mb h}\right\Vert _{\psi_{1}} & \leq\frac{\left\Vert \mb W_{p-1}^{*}\widehat{\mb h}\right\Vert _{2}}{\log2}\leq\frac{\left\Vert \mb W_{p-1}^{*}\right\Vert \left\Vert \widehat{\mb h}\right\Vert _{2}}{\log2}\leq\frac{2^{-p+1}}{\log2},
\end{align*}
 where the last inequality follows from the bound $\left\Vert \mb W_{p-1}\right\Vert _{F}\leq2^{-p+1}$
that is established using Lemma \ref{lem:ANearIsometry}. Let $\mb w_{l,p-1}$
denote the $l$-th column of $\mb W_{p-1}^{*}$. The second term on
the right hand side of (\ref{eq:z_k-Orlicz}) can also be bounded
as  
\begin{align*}
L\left\Vert \mb D_{\mb{\phi}_{k}}\overline{\mb F}_{N}^{*}\left(\overline{\left(\mb F_{N}\odot\mb W_{p-1}\right)\mb{\phi}_{k}}\odot\widehat{\mb h}\right)\right\Vert _{\psi_{1}} & =L\left\Vert \overline{\left(\mb F_{N}\odot\mb W_{p-1}\right)\mb{\phi}_{k}}\odot\widehat{\mb h}\right\Vert _{\psi_{1}}=L\left\Vert \mb D_{\left(\mb F_{N}\odot\mb W_{p-1}\right)\mb{\phi}_{k}}^{*}\widehat{\mb h}\right\Vert _{\psi_{1}}\\
 & \leq L\left\Vert \max_{l=1,2,\dotsc,N}\left|\left\langle \mb f_{l}\odot\mb w_{l,p-1},\mb{\phi}_{k}\right\rangle \right|\left\Vert \widehat{\mb h}\right\Vert _{2}\right\Vert _{\psi_{1}}\\
 & \lesssim L\log\left(N+1\right)\max_{l=1,2,\dotsc,N}\left\Vert \left|\left\langle \mb f_{l}\odot\mb w_{l,p-1},\mb{\phi}_{k}\right\rangle \right|\right\Vert _{\psi_{1}}\\
 & \lesssim L\log\left(N+1\right)\max_{l=1,2,\dotsc,N}\left\Vert \mb f_{l}\odot\mb w_{l,p-1}\right\Vert _{2}\\
 & =\sqrt{L}\log\left(N+1\right)\max_{l=1,2,\dotsc,N}\left\Vert \mb w_{l,p-1}\right\Vert _{2}=\sqrt{\mu_{p-1}}\log\left(N+1\right)\\
 & \leq\sqrt{\mu}\log\left(N+1\right)2^{-p+1},
\end{align*}
where the first through sixth lines respectively hold because of the
fact that $\left\Vert \mb D_{\phi_{k}}\overline{\mb F}_{N}^{*}\mb a\right\Vert _{2}=\left\Vert \mb a\right\Vert _{2}$
for any $\mb a\in\mbb C^{N}$, an elementary property of the spectral
norm, Proposition \ref{pro:MaxOrlicz}, Lemma \ref{lem:RademacherSubExp2},
the definition of $\mu_{p-1}$, and the assumed bound on $\mu_{p-1}$.
Therefore, we deduce that $B$ in (\ref{eq:z_k-Orlicz}) obeys 
\begin{align*}
B & \lesssim\sqrt{\mu}\log\left(N+1\right)2^{-p+1}.
\end{align*}
 To apply the matrix Bernstein's inequality we also need to upperbound
the spectral norm of the expectation of the sum of the terms $\mb z_{k}^{*}\mb z_{k}$,
and the sum of the terms $\mb z_{k}\mb z_{k}^{*}$. To bound the first
spectral norm we can write
\begin{align*}
\left\Vert \mbb E\left[\sum_{k\in\msf K_{p}}\mb z_{k}^{*}\mb z_{k}\right]\right\Vert  & =\left|\msf K_{p}\right|\mbb E\left[\left\Vert \mb z_{k}\right\Vert _{2}^{2}\right]\\
 & =\left|\msf K_{p}\right|L^{2}\mbb E\left[\left\Vert \overline{\left(\mb F_{N}\odot\mb W_{p-1}\right)\mb{\phi}_{k}}\odot\widehat{\mb h}\right\Vert _{2}^{2}\right]-\left|\msf K_{p}\right|\left\Vert \mb W_{p-1}^{*}\widehat{\mb h}\right\Vert _{2}^{2}\\
 & =\sum_{l=1}^{N}\left|\msf K_{p}\right|L\left\Vert \mb w_{l,p-1}\right\Vert _{2}^{2}\left|\widehat{h}_{l}\right|^{2}-\left|\msf K_{p}\right|\left\Vert \mb W_{p-1}^{*}\widehat{\mb h}\right\Vert _{2}^{2}\\
 & \leq\left|\msf K_{p}\right|L\left\Vert \widehat{\mb h}\right\Vert _{\infty}^{2}\sum_{l=1}^{N}\left\Vert \mb w_{l,p-1}\right\Vert _{2}^{2}\\
 & =\left|\msf K_{p}\right|\mu\left\Vert \mb W_{p-1}\right\Vert _{F}^{2}\\
 & \leq2^{-2p+2}\left|\msf K_{p}\right|\mu.
\end{align*}
Furthermore, the second spectral norm can be bounded as 
\begin{align*}
\left\Vert \mbb E\left[\sum_{k\in\msf K_{p}}\mb z_{k}\mb z_{k}^{*}\right]\right\Vert  & =\left|\msf K_{p}\right|\left\Vert \mbb E\left[\mb z_{k}\mb z_{k}^{*}\right]\right\Vert \\
 & =\left|\msf K_{p}\right|\max_{\mb u:\left\Vert \mb u\right\Vert _{2}=1}L^{2}\mbb E\left[\left|\left\langle \mb u,\mb z_{k}\right\rangle \right|^{2}\right]\\
 & =\left|\msf K_{p}\right|\max_{\mb u:\left\Vert \mb u\right\Vert _{2}=1}L^{2}\mbb E\left[\left|\sum_{l=1}^{N}\widehat{h}_{l}\left\langle \mb{\phi}_{k},\mb f_{l}\odot\mb u\right\rangle \left\langle \mb f_{l}\odot\mb w_{l,p-1},\mb{\phi}_{k}\right\rangle \right|^{2}\right]-\left|\widehat{\mb h}^{*}\mb W_{p-1}\mb u\right|^{2}\\
 & \leq\left|\msf K_{p}\right|\max_{\mb u:\left\Vert \mb u\right\Vert _{2}=1}L^{2}\mbb E\left[N\sum_{l=1}^{N}\left|\widehat{h}_{l}\right|^{2}\left|\left\langle \mb{\phi}_{k},\mb f_{l}\odot\mb u\right\rangle \right|^{2}\left|\left\langle \mb f_{l}\odot\mb w_{l,p-1},\mb{\phi}_{k}\right\rangle \right|^{2}\right]\\
 & \leq\left|\msf K_{p}\right|\max_{\mb u:\left\Vert \mb u\right\Vert _{2}=1}L^{2}\mbb E\left[N\left\Vert \widehat{\mb h}\right\Vert _{\infty}^{2}\sum_{l=1}^{N}\left|\left\langle \mb{\phi}_{k},\mb f_{l}\odot\mb u\right\rangle \right|^{2}\left|\left\langle \mb f_{l}\odot\mb w_{l,p-1},\mb{\phi}_{k}\right\rangle \right|^{2}\right]\\
 & \leq\left|\msf K_{p}\right|LN\mu\sum_{l=1}^{N}\frac{3\left\Vert \mb w_{l,p-1}\right\Vert _{2}^{2}}{L^{2}}\\
 & =3\left|\msf K_{p}\right|\frac{N}{L}\mu\left\Vert \mb W_{p-1}\right\Vert _{F}^{2}\\
 & \leq3\cdot2^{-2p+2}\left|\msf K_{p}\right|\frac{N}{L}\mu,
\end{align*}
where the first inequality follows from the Cauchy-Schwarz inequality
and the fact that $\left|\widehat{\mb h}^{*}\mb W_{p-1}\mb u\right|^{2}\geq0$,
the second inequality follows from the H\"{o}lder's inequality applied
to the sum inside the expectation, and the third inequality follows
from Lemma \ref{lem:RademacherFoutrhMoment} below. Therefore, we have 
\begin{align*}
\sigma^{2}:=\max\left\{ \left\Vert \mbb E\left[\sum_{k\in\msf K_{p}}\mb z_{k}^{*}\mb z_{k}\right]\right\Vert ,\left\Vert \mbb E\left[\sum_{k\in\msf K_{p}}\mb z_{k}\mb z_{k}^{*}\right]\right\Vert \right\}  & \lesssim\left|\msf K_{p}\right|\mu2^{-2p+2}.
\end{align*}
 Then for any $t>0$ the matrix Bernstein's inequality yields 
\begin{align*}
\left\Vert \sum_{k\in\msf K_{p}}\mb z_{k}\right\Vert _{2} & \leq C\max\left\{ \sigma\sqrt{t+\log\left(L+1\right)},B\log\left(\frac{\sqrt{\left|\msf K_{p}\right|}B}{\sigma}\right)\left(t+\log\left(L+1\right)\right)\right\} \\
 & \lesssim2^{-p+1}\max\left\{ \sqrt{\left|\msf K_{p}\right|\mu\left(t+\log\left(L+1\right)\right)},\right.\\
 & \left.\sqrt{\mu}\log\left(N+1\right)\log\log\left(N+1\right)\,\left(t+\log\left(L+1\right)\right)\right\} .
\end{align*}
with probability at least $1-e^{-t}$. Setting $t=\beta\log L+\log N$
we obtain 
\begin{align*}
\left\Vert \sum_{k\in\msf K_{p}}\mb z_{k}\right\Vert _{2}\overset{\beta}{\lesssim} & 2^{-p+1}\max\left\{ \sqrt{\left|\msf K_{p}\right|\mu\log L},\sqrt{\mu}\log\left(N+1\right)\log\log\left(N+1\right)\,\log L\right\} 
\end{align*}
with probability exceeding $1-N^{-1}L^{-\beta}$.\end{proof}
\begin{lem}
\label{lem:RowsOnT}For any matrix $\mb Z\in\mbb C^{N\times L}$,
the $l$-th row of $\mb Q=\mc P_{\msf T}\left(\mb Z\right)$ denoted
by $\mb q_{l}^{*}$ obeys 
\begin{align*}
\left\Vert \mb q_{l}\right\Vert _{2}^{2} & \leq\left\Vert \widehat{\mb h}\right\Vert _{\infty}^{2}\left\Vert \mb Z^{*}\widehat{\mb h}\right\Vert _{2}^{2}+\left|\left\langle \overline{\mb x},\mb z_{l}\right\rangle \right|^{2},
\end{align*}
 where $\mb z_{l}$ is the $l$-th column of $\mb Z^{*}$.\end{lem}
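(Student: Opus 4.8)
The plan is to compute the $l$-th row of $\mb Q=\mc P_{\msf T}\left(\mb Z\right)$ explicitly from the closed form of the projector and then exploit an orthogonality between two of the resulting pieces. Since the $l$-th row of $\mb Q$, conjugate-transposed, is exactly $\mb q_l=\mb Q^*\mb e_l$ (with $\mb e_l$ the $l$-th canonical vector in $\mbb C^N$), I would first take adjoints in the defining relation
\begin{align*}
\mc P_{\msf T}:\ \mb S\mapsto\widehat{\mb h}\widehat{\mb h}^*\mb S+\mb S\overline{\mb x}\overline{\mb x}^*-\widehat{\mb h}\widehat{\mb h}^*\mb S\overline{\mb x}\overline{\mb x}^*,
\end{align*}
which gives $\mb Q^*=\mb Z^*\widehat{\mb h}\widehat{\mb h}^*+\overline{\mb x}\overline{\mb x}^*\mb Z^*-\overline{\mb x}\overline{\mb x}^*\mb Z^*\widehat{\mb h}\widehat{\mb h}^*$. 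Using $\widehat{\mb h}^*\mb e_l=\overline{\widehat h_l}$ and $\mb Z^*\mb e_l=\mb z_l$, and abbreviating $\mb a:=\mb Z^*\widehat{\mb h}$, this yields
\begin{align*}
\mb q_l=\overline{\widehat h_l}\,\mb a+\left\langle\overline{\mb x},\mb z_l\right\rangle\overline{\mb x}-\overline{\widehat h_l}\left\langle\overline{\mb x},\mb a\right\rangle\overline{\mb x}.
\end{align*}

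The key step is to regroup the three terms as
\begin{align*}
\mb q_l=\overline{\widehat h_l}\left(\mb I-\overline{\mb x}\,\overline{\mb x}^*\right)\mb a+\left\langle\overline{\mb x},\mb z_l\right\rangle\overline{\mb x}.
\end{align*}
Because $\overline{\mb x}$ is a unit vector, $\overline{\mb x}\,\overline{\mb x}^*$ is the orthogonal projector onto $\mr{span}\left(\overline{\mb x}\right)$, so the first summand lies in $\overline{\mb x}^{\perp}$ while the second lies in $\mr{span}\left(\overline{\mb x}\right)$. These two vectors are orthogonal, and the Pythagorean identity (together with $\left\Vert\overline{\mb x}\right\Vert_2=1$) gives
\begin{align*}
\left\Vert\mb q_l\right\Vert_2^2=\left|\widehat h_l\right|^2\left\Vert\left(\mb I-\overline{\mb x}\,\overline{\mb x}^*\right)\mb a\right\Vert_2^2+\left|\left\langle\overline{\mb x},\mb z_l\right\rangle\right|^2.
\end{align*}

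I would finish by bounding each factor crudely: an orthogonal projection is a contraction, so $\left\Vert\left(\mb I-\overline{\mb x}\,\overline{\mb x}^*\right)\mb a\right\Vert_2^2\leq\left\Vert\mb a\right\Vert_2^2=\left\Vert\mb Z^*\widehat{\mb h}\right\Vert_2^2$, and $\left|\widehat h_l\right|^2\leq\left\Vert\widehat{\mb h}\right\Vert_\infty^2$; substituting these into the identity produces exactly the claimed inequality. I do not anticipate a genuine obstacle here: the entire content is the algebraic identity for $\mb q_l$, and the only point requiring care is spotting the orthogonal splitting into an $\overline{\mb x}^{\perp}$ part (carrying the factor $\widehat h_l$) and an $\overline{\mb x}$-parallel part (carrying $\left\langle\overline{\mb x},\mb z_l\right\rangle$); once this is seen, Pythagoras and two contraction bounds close the argument. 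This clean separation is precisely what later lets Lemma \ref{lem:maxRowWp} control $\left\Vert\mb R_p^*\widehat{\mb h}\right\Vert_2$ and $\left\langle\overline{\mb x},\mb r_{l,p}\right\rangle$ by two independent concentration estimates.
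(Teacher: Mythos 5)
Your proposal is correct and follows essentially the same route as the paper's proof: both rewrite the projection as $\mc P_{\msf T}\left(\mb Z\right)=\widehat{\mb h}\widehat{\mb h}^{*}\mb Z\left(\mb I-\overline{\mb x}\overline{\mb x}^{*}\right)+\mb Z\overline{\mb x}\overline{\mb x}^{*}$, extract the $l$-th row, use the orthogonality of the $\overline{\mb x}^{\perp}$ and $\overline{\mb x}$-parallel pieces (Pythagoras, with $\left\Vert \mb x\right\Vert _{2}=1$), and then bound $\left|\widehat{h}_{l}\right|\leq\left\Vert \widehat{\mb h}\right\Vert _{\infty}$ and drop the contraction $\mb I-\overline{\mb x}\overline{\mb x}^{*}$. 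Working with $\mb q_{l}=\mb Q^{*}\mb e_{l}$ instead of the row $\mb q_{l}^{*}$ is a purely notational difference.
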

\begin{proof}
Since $\mb Q$ is the projection of $\mb Z$ onto $\msf T$ we can
write 
\begin{align*}
\mb Q & =\widehat{\mb h}\widehat{\mb h}^{*}\mb Z\left(\mb I-\overline{\mb x}\overline{\mb x}^{*}\right)+\mb Z\overline{\mb x}\overline{\mb x}^{*}.
\end{align*}
 Therefore, $\mb q_{l}^{*}$ (i.e., the $l$-th row of $\mb Q$ )
can be written as 
\begin{align*}
\mb q_{l}^{*} & =\widehat{h}_{l}\widehat{\mb h}^{*}\mb Z\left(\mb I-\overline{\mb x}\overline{\mb x}^{*}\right)+\mb z_{l}^{*}\overline{\mb x}\overline{\mb x}^{*},
\end{align*}
which implies that 
\begin{align*}
\left\Vert \mb q_{l}\right\Vert _{2}^{2} & =\left|\widehat{h}_{l}\right|^{2}\left\Vert \left(\mb I-\overline{\mb x}\overline{\mb x}^{*}\right)\mb Z^{*}\widehat{\mb h}\right\Vert _{2}^{2}+\left|\left\langle \overline{\mb x},\mb z_{l}\right\rangle \right|^{2}\\
 & \leq\left\Vert \widehat{\mb h}\right\Vert _{\infty}^{2}\left\Vert \mb Z^{*}\widehat{\mb h}\right\Vert _{2}^{2}+\left|\left\langle \overline{\mb x},\mb z_{l}\right\rangle \right|^{2}.
\end{align*}
\end{proof}
\begin{lem}
\label{lem:RademacherFoutrhMoment}For any pair of vectors $\mb a$
and $\mb b$, and a Rademacher vector $\mb{\phi}$ we have

\begin{align*}
\mbb E\left[\left|\left\langle \mb a,\mb{\phi}\right\rangle \right|^{2}\left|\left\langle \mb b,\mb{\phi}\right\rangle \right|^{2}\right] & \leq3\left\Vert \mb a\right\Vert _{2}^{2}\left\Vert \mb b\right\Vert _{2}^{2}-2\left\langle \left|\mb a\right|^{2},\left|\mb b\right|^{2}\right\rangle .
\end{align*}
In particular, for $\mb a=\mb b$ we have 
\begin{align*}
\mbb E\left|\left\langle \mb a,\mb{\phi}\right\rangle \right|^{4} & \leq3\left\Vert \mb a\right\Vert _{2}^{4}-2\left\Vert \mb a\right\Vert _{4}^{4}.
\end{align*}
\end{lem}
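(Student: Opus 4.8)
The plan is to compute this fourth moment exactly and then discard two nonnegative cross terms. Writing $\left\langle \mb a,\mb{\phi}\right\rangle=\sum_{i}\overline{a_{i}}\phi_{i}$ (and likewise for $\mb b$), I would first expand both magnitudes into double sums and multiply them to obtain
\begin{align*}
\left|\left\langle \mb a,\mb{\phi}\right\rangle\right|^{2}\left|\left\langle \mb b,\mb{\phi}\right\rangle\right|^{2} & =\sum_{i,j,k,l}\overline{a_{i}}a_{j}\overline{b_{k}}b_{l}\,\phi_{i}\phi_{j}\phi_{k}\phi_{l},
\end{align*}
and then take expectations term by term. Because the entries of $\mb{\phi}$ are independent, mean-zero, and $\pm1$-valued, the expectation $\mbb E\left[\phi_{i}\phi_{j}\phi_{k}\phi_{l}\right]$ equals $1$ exactly when every index value appears an even number of times and vanishes otherwise.

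The combinatorial heart of the argument is to account for which quadruples $(i,j,k,l)$ survive. The surviving configurations are those in which the four positions split into two matching pairs, and there are precisely three such pairings; summed over all index pairs they contribute $\left\Vert\mb a\right\Vert_{2}^{2}\left\Vert\mb b\right\Vert_{2}^{2}$, $\bigl|\sum_{i}\overline{a_{i}}\,\overline{b_{i}}\bigr|^{2}$, and $\bigl|\left\langle\mb a,\mb b\right\rangle\bigr|^{2}$ respectively. The one point requiring care is that the fully diagonal term $i=j=k=l$ is counted once inside each of the three pairings, whereas it should appear only once; subtracting it the two extra times gives the exact identity
\begin{align*}
\mbb E\left[\left|\left\langle \mb a,\mb{\phi}\right\rangle\right|^{2}\left|\left\langle \mb b,\mb{\phi}\right\rangle\right|^{2}\right] & =\left\Vert\mb a\right\Vert_{2}^{2}\left\Vert\mb b\right\Vert_{2}^{2}+\left|\sum_{i}\overline{a_{i}}\,\overline{b_{i}}\right|^{2}+\left|\left\langle\mb a,\mb b\right\rangle\right|^{2}-2\left\langle\left|\mb a\right|^{2},\left|\mb b\right|^{2}\right\rangle.
\end{align*}

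To conclude, I would bound the two middle terms from above by $\left\Vert\mb a\right\Vert_{2}^{2}\left\Vert\mb b\right\Vert_{2}^{2}$ using the Cauchy--Schwarz inequality, since both $\bigl|\sum_{i}\overline{a_{i}}\,\overline{b_{i}}\bigr|\le\sum_{i}\left|a_{i}\right|\left|b_{i}\right|\le\left\Vert\mb a\right\Vert_{2}\left\Vert\mb b\right\Vert_{2}$ and $\bigl|\left\langle\mb a,\mb b\right\rangle\bigr|\le\left\Vert\mb a\right\Vert_{2}\left\Vert\mb b\right\Vert_{2}$. This immediately yields the stated bound $3\left\Vert\mb a\right\Vert_{2}^{2}\left\Vert\mb b\right\Vert_{2}^{2}-2\left\langle\left|\mb a\right|^{2},\left|\mb b\right|^{2}\right\rangle$, and the special case follows by setting $\mb a=\mb b$ and observing $\left\langle\left|\mb a\right|^{2},\left|\mb a\right|^{2}\right\rangle=\left\Vert\mb a\right\Vert_{4}^{4}$. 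The only genuine subtlety anywhere is the diagonal overcount in the second step; the expansion and the final Cauchy--Schwarz bounds are routine.
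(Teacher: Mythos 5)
Your proof is correct and takes essentially the same approach as the paper's: both expand $\mbb E\bigl[\lvert\langle\mb a,\mb{\phi}\rangle\rvert^{2}\lvert\langle\mb b,\mb{\phi}\rangle\rvert^{2}\bigr]$ as a quadruple sum, keep exactly the even-multiplicity index configurations to reach the same exact identity $\left\Vert \mb a\right\Vert _{2}^{2}\left\Vert \mb b\right\Vert _{2}^{2}+\bigl|\sum_{i}a_{i}b_{i}\bigr|^{2}+\bigl|\langle\mb a,\mb b\rangle\bigr|^{2}-2\bigl\langle\lvert\mb a\rvert^{2},\lvert\mb b\rvert^{2}\bigr\rangle$, and conclude with Cauchy--Schwarz. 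The only difference is bookkeeping: you sum the three pairings without restriction and remove the triple-counted diagonal $i=j=k=l$ by inclusion--exclusion, while the paper restricts the two off-diagonal pairings to $i\neq j$ from the outset; the two accountings are equivalent.
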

\begin{proof}
By expanding $\left|\left\langle \mb a,\mb{\phi}\right\rangle \right|^{2}\left|\left\langle \mb b,\mb{\phi}\right\rangle \right|^{2}$
we obtain
\begin{align*}
\mbb E\left[\left|\left\langle \mb a,\mb{\phi}\right\rangle \right|^{2}\left|\left\langle \mb b,\mb{\phi}\right\rangle \right|^{2}\right] & =\mbb E\left[\sum_{i,j,k,l}\phi_{i}\phi_{j}\phi_{k}\phi_{l}a_{i}a_{j}^{*}b_{k}b_{l}^{*}\right]\\
 & =\sum_{i=j,k=l}\left|a_{i}\right|^{2}\left|b_{k}\right|^{2}+\sum_{\substack{\left\{ i,j\right\} =\left\{ k,l\right\} \\
i\neq j
}
}a_{i}a_{j}^{*}b_{k}b_{l}^{*}\\
 & =\left\Vert \mb a\right\Vert _{2}^{2}\left\Vert \mb b\right\Vert _{2}^{2}+\sum_{i\neq j}\left(a_{i}b_{i}\left(a_{j}b_{j}\right)^{*}+a_{i}b_{i}^{*}\left(a_{j}b_{j}^{*}\right)^{*}\right)\\
 & =\left\Vert \mb a\right\Vert _{2}^{2}\left\Vert \mb b\right\Vert _{2}^{2}+\left|\sum_{i}a_{i}b_{i}\right|^{2}+\left|\sum_{i}a_{i}b_{i}^{*}\right|^{2}-2\sum_{i}\left|a_{i}\right|^{2}\left|b_{i}\right|^{2}\\
 & \leq3\left\Vert \mb a\right\Vert _{2}^{2}\left\Vert \mb b\right\Vert _{2}^{2}-2\left\langle \left|\mb a\right|^{2},\left|\mb b\right|^{2}\right\rangle ,
\end{align*}
which is the desired bound.\end{proof}
\begin{lem}
\label{lem:RademacherSubExp1}Let $\mb a$ and $\mb b$ be arbitrary
complex $L$-vectors, and $\mb{\phi}\in\left\{ \pm1\right\} ^{L}$
be a Rademacher vector with independent entries. Then the random variable
$\left\langle \mb a,\mb{\phi}\right\rangle \left\langle \mb{\phi},\mb b\right\rangle -\left\langle \mb a,\mb b\right\rangle $
is subexponential and its Orlicz 1-norm obeys
\begin{align*}
\left\Vert \left\langle \mb a,\mb{\phi}\right\rangle \left\langle \mb{\phi},\mb b\right\rangle -\left\langle \mb a,\mb b\right\rangle \right\Vert _{\psi_{1}} & \lesssim\left\Vert \mb a\right\Vert _{2}\left\Vert \mb b\right\Vert _{2}.
\end{align*}
Specifically, for $\mb a=\mb b$ we have 
\begin{align*}
\left\Vert \left|\left\langle \mb a,\mb{\phi}\right\rangle \right|^{2}-\left\Vert \mb a\right\Vert _{2}^{2}\right\Vert _{\psi_{1}} & \leq8\left\Vert \mb a\right\Vert _{2}^{2}.
\end{align*}
 \end{lem}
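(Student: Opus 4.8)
The plan is to recognize the centered quantity as a mean-zero second-order Rademacher chaos and to bound its Orlicz $1$-norm in two different ways: the general $\lesssim$ estimate through the Hanson--Wright inequality, and the sharp constant in the special case $\mb a=\mb b$ through a direct moment-generating-function computation. First I would rewrite the bilinear form as a quadratic form. Using the paper's convention $\langle\mb a,\mb\phi\rangle=\sum_i a_i\phi_i$, one has
\[
\langle\mb a,\mb\phi\rangle\langle\mb\phi,\mb b\rangle=\sum_{i,j}a_i\overline{b_j}\,\phi_i\phi_j=\mb\phi^{\mr T}\mb A\mb\phi,\qquad\mb A:=\mb a\mb b^{*}.
\]
Since $\mbb E[\phi_i\phi_j]=\delta_{ij}$, we get $\mbb E[\mb\phi^{\mr T}\mb A\mb\phi]=\mr{tr}\,\mb A=\langle\mb a,\mb b\rangle$, so the random variable in question is exactly $\mb\phi^{\mr T}\mb A\mb\phi-\mbb E[\mb\phi^{\mr T}\mb A\mb\phi]$ for a rank-one $\mb A$ whose Frobenius and spectral norms both equal $\|\mb a\|_2\|\mb b\|_2$.

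For the general bound I would split $\mb A=\Re\mb A+\mr i\,\Im\mb A$ into its (real, possibly non-symmetric) real and imaginary parts and apply the Hanson--Wright inequality (Proposition \ref{pro:Hanson-Wright}) separately to $\mb\phi^{\mr T}(\Re\mb A)\mb\phi$ and $\mb\phi^{\mr T}(\Im\mb A)\mb\phi$; the Rademacher entries are centered and sub-Gaussian with an absolute $\psi_2$-norm. Because $\|\Re\mb A\|,\|\Im\mb A\|\le\|\mb A\|$ and $\|\Re\mb A\|_F,\|\Im\mb A\|_F\le\|\mb A\|_F$, both the real and imaginary parts of the centered form obey a subexponential tail with parameter $\lesssim\|\mb a\|_2\|\mb b\|_2$. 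Converting such a tail into an Orlicz $1$-norm bound is routine (integrate the tail against $\psi_1$), so each part has $\psi_1$-norm $\lesssim\|\mb a\|_2\|\mb b\|_2$, and the triangle inequality for the Orlicz norm delivers the claimed bound.

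For the sharp constant when $\mb a=\mb b$ I would argue directly. Writing $g=\langle\mb a,\mb\phi\rangle=g_R+\mr i\,g_I$ with $g_R=\sum_i\Re(a_i)\phi_i$ and $g_I=\sum_i\Im(a_i)\phi_i$, we have $|g|^2=g_R^2+g_I^2$ and $\mbb E|g|^2=\|\mb a\|_2^2$. Each of $g_R,g_I$ is a real Rademacher sum $S=\sum_i c_i\phi_i$, so $\mbb E\,e^{\lambda S}=\prod_i\cosh(\lambda c_i)\le e^{\lambda^2\|\mb c\|_2^2/2}$; combining this with the identity $e^{\theta S^2}=\mbb E_{Z}\,e^{\sqrt{2\theta}\,SZ}$ for $Z\sim N(0,1)$ and Fubini yields $\mbb E\,e^{\theta S^2}\le(1-2\theta\|\mb c\|_2^2)^{-1/2}$ whenever $2\theta\|\mb c\|_2^2<1$. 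Setting the right-hand side equal to $2$ gives $\|S^2\|_{\psi_1}\le\tfrac{8}{3}\|\mb c\|_2^2$, whence $\bigl\||g|^2\bigr\|_{\psi_1}\le\tfrac{8}{3}\|\mb a\|_2^2$ by subadditivity of the Orlicz norm (which needs no independence between the $g_R$ and $g_I$ terms). Subtracting the constant $\|\mb a\|_2^2$, whose $\psi_1$-norm is $\|\mb a\|_2^2/\ln 2$, I obtain
\[
\left\Vert|g|^2-\left\Vert\mb a\right\Vert_2^2\right\Vert_{\psi_1}\le\Bigl(\tfrac{8}{3}+\tfrac{1}{\ln 2}\Bigr)\left\Vert\mb a\right\Vert_2^2<8\left\Vert\mb a\right\Vert_2^2 .
\]

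The genuinely routine parts are the algebraic identification of the quadratic form, the two norm comparisons, and the tail-to-Orlicz conversion. The only delicate points are that $\mb A$ is complex and generally non-Hermitian, which forces the real/imaginary splitting before one can invoke Hanson--Wright, and the explicit constant in the $\mb a=\mb b$ case, which hinges on the exact $(1-2\theta\|\mb c\|_2^2)^{-1/2}$ control of the moment generating function of a squared Rademacher sum. I expect pinning down that last estimate to be the main thing to get right, though the comfortable margin (about $4.1$ against the claimed $8$) shows the conclusion is not tight.
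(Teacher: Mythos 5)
Your proposal is correct. For the general bound your route coincides with the paper's in substance: the paper also reduces to the real case via a real/imaginary decomposition (it splits the \emph{vectors} $\mb a,\mb b$ into real and imaginary parts, producing four real bilinear chaoses, whereas you split the rank-one \emph{matrix} $\mb A=\mb a\mb b^{*}$ --- the same computation dressed differently), invokes the Hanson--Wright inequality of Proposition \ref{pro:Hanson-Wright}, and performs exactly the tail-to-Orlicz conversion you call routine, via $\mbb E\left[e^{\left|X\right|/u}\right]\leq1+\int_{0}^{\infty}\mbb P\left(\left|X\right|>tu\right)e^{t}\,\mr dt$ with $u$ a large multiple of $\left\Vert \mb a\right\Vert _{2}\left\Vert \mb b\right\Vert _{2}$. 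Where you genuinely diverge is the case $\mb a=\mb b$: the paper deliberately avoids Hanson--Wright there, instead integrating the Hoeffding tail $\mbb P\left(\left|\left\langle \mb a,\mb{\phi}\right\rangle \right|>t\right)\leq2e^{-t^{2}/2\left\Vert \mb a\right\Vert _{2}^{2}}$ to get $\left\Vert \left|\left\langle \mb a,\mb{\phi}\right\rangle \right|^{2}\right\Vert _{\psi_{1}}\leq6\left\Vert \mb a\right\Vert _{2}^{2}$ per real part, then $6+1/\log2\leq8$ and the complex split giving $8\left(\left\Vert \Re\mb a\right\Vert _{2}^{2}+\left\Vert \Im\mb a\right\Vert _{2}^{2}\right)=8\left\Vert \mb a\right\Vert _{2}^{2}$. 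Your Gaussian-linearization computation is exact where tail integration is lossy, and it checks out under the paper's convention $\psi_{1}\left(t\right)=e^{t}-1$: with $\theta=3/\left(8\left\Vert \mb c\right\Vert _{2}^{2}\right)$ one has $2\theta\left\Vert \mb c\right\Vert _{2}^{2}=3/4$ and $\left(1-2\theta\left\Vert \mb c\right\Vert _{2}^{2}\right)^{-1/2}=2$, hence $\left\Vert S^{2}\right\Vert _{\psi_{1}}\leq\frac{8}{3}\left\Vert \mb c\right\Vert _{2}^{2}$; the triangle inequality handles $g_{R}^{2}+g_{I}^{2}$ with no independence needed, and $\left(\frac{8}{3}+\frac{1}{\log2}\right)\approx4.11<8$, so you prove the stated inequality with room to spare and, unlike the paper, a single MGF computation rather than two tail integrations. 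What each approach buys: yours gives the sharper constant and a cleaner mechanism (exact control of $\mbb E\,e^{\theta S^{2}}$); the paper's keeps the special case elementary and explicitly independent of the Hanson--Wright machinery, which it flags as a feature. Both are complete proofs of the lemma.
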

\begin{proof}
We begin by proving similar bounds for real vectors $\mb a$. 
\begin{align*}
\mbb E\left[e^{\left|\left\langle \mb a,\mb{\phi}\right\rangle \left\langle \mb{\phi},\mb b\right\rangle -\left\langle \mb a,\mb b\right\rangle \right|/u}\right] & =\int_{0}^{\infty}\mbb P\left(\left|\left\langle \mb a,\mb{\phi}\right\rangle \left\langle \mb{\phi},\mb b\right\rangle -\left\langle \mb a,\mb b\right\rangle \right|>tu\right)e^{t}\mr dt\\
 & \leq2\int_{0}^{\infty}e^{t-c\min\left\{ \left(\frac{tu}{\left\Vert \mb a\right\Vert _{2}\left\Vert \mb b\right\Vert _{2}}\right)^{2},\frac{tu}{\left\Vert \mb a\right\Vert _{2}\left\Vert \mb b\right\Vert _{2}}\right\} }\mr dt
\end{align*}
where the inequality follows from a variant of the Hanson-Wright inequality stated in Proposition \ref{pro:Hanson-Wright}. The latter integral becomes
smaller than one, by choosing $u=C\left\Vert \mb a\right\Vert _{2}\left\Vert \mb b\right\Vert _{2}$
for a sufficiently large absolute constant $C$. Therefore, we can
deduce that 
\begin{align*}
\left\Vert \left\langle \mb a,\mb{\phi}\right\rangle \left\langle \mb{\phi},\mb b\right\rangle -\left\langle \mb a,\mb b\right\rangle \right\Vert _{\psi_{1}} & \leq C\left\Vert \mb a\right\Vert _{2}\left\Vert \mb b\right\Vert _{2}.
\end{align*}

While the above result can be readily used for the special case of
$\mb a=\mb b$, we provide a different proof for this case that does
not rely on the Hanson-Wright inequality. The Hoeffding's inequality
guarantees that 
\begin{align*}
\mbb P\left(\left|\left\langle \mb a,\mb{\phi}\right\rangle \right|>t\right) & \leq2e^{-\frac{t^{2}}{2\left\Vert \mb a\right\Vert _{2}^{2}}},
\end{align*}
 holds for all $t>0$. Therefore, for any $u>0$ we have 
\begin{align*}
\mbb E\left[e^{\frac{\left|\left\langle \mb a,\mb{\phi}\right\rangle \right|^{2}}{u}}\right] & =1+\int_{0}^{\infty}\mbb P\left(e^{\frac{\left|\left\langle \mb a,\mb{\phi}\right\rangle \right|^{2}}{u}}>e^{t^{2}}\right)2te^{t^{2}}\mr dt\\
 & =1+\int_{0}^{\infty}\mbb P\left(\left|\left\langle \mb a,\mb{\phi}\right\rangle \right|>t\sqrt{u}\right)2te^{t^{2}}\mr dt\\
 & \leq1+\int_{0}^{\infty}2e^{t^{2}-\frac{ut^{2}}{2\left\Vert \mb a\right\Vert _{2}^{2}}}\mr dt^{2}\\
 & \leq1+2\int_{0}^{\infty}e^{\left(1-\frac{u}{2\left\Vert \mb a\right\Vert _{2}^{2}}\right)\tau}\mr d\tau.
\end{align*}
In particular, for $u=6\left\Vert \mb a\right\Vert _{2}^{2}$ we obtain
\begin{align*}
\mbb E\left[e^{\frac{\left|\left\langle \mb a,\mb{\phi}\right\rangle \right|^{2}}{4\left\Vert \mb a\right\Vert _{2}^{2}}}\right] & \leq1+\frac{2}{6/2-1}=2,
\end{align*}
which implies $\left\Vert \left|\left\langle \mb a,\mb{\phi}\right\rangle \right|^{2}\right\Vert _{\psi_{1}}\leq6\left\Vert \mb a\right\Vert _{2}^{2}$.
Therefore, using triangle inequality we can deduce that 
\begin{align*}
\left\Vert \left|\left\langle \mb a,\mb{\phi}\right\rangle \right|^{2}-\left\Vert \mb a\right\Vert _{2}^{2}\right\Vert _{\psi_{1}} & \leq\left\Vert \left|\left\langle \mb a,\mb{\phi}\right\rangle \right|^{2}\right\Vert _{\psi_{1}}+\left\Vert \left\Vert \mb a\right\Vert _{2}^{2}\right\Vert _{\psi_{1}}\\
 & \leq6\left\Vert \mb a\right\Vert _{2}^{2}+\frac{1}{\log2}\left\Vert \mb a\right\Vert _{2}^{2}\\
 & \leq8\left\Vert \mb a\right\Vert _{2}^{2}.
\end{align*}

To obtain similar inequalities for complex values of $\mb a$ and
$\mb b$ we can simply decompose the vectors into their real and imaginary
part and apply the triangle inequality. Therefore, we obtain 
\begin{align*}
\left\Vert \left\langle \mb a,\mb{\phi}\right\rangle \left\langle \mb{\phi},\mb b\right\rangle -\left\langle \mb a,\mb b\right\rangle \right\Vert _{\psi_{1}} & \leq\left\Vert \left\langle \Re\mb a,\mb{\phi}\right\rangle \left\langle \mb{\phi},\Re\mb b\right\rangle -\left\langle \Re\mb a,\Re\mb b\right\rangle \right\Vert _{\psi_{1}}\\
 & +\left\Vert \left\langle \Re\mb a,\mb{\phi}\right\rangle \left\langle \mb{\phi},\Im\mb b\right\rangle -\left\langle \Re\mb a,\Im\mb b\right\rangle \right\Vert _{\psi_{1}}\\
 & +\left\Vert \left\langle \Im\mb a,\mb{\phi}\right\rangle \left\langle \mb{\phi},\Re\mb b\right\rangle -\left\langle \Im\mb a,\Re\mb b\right\rangle \right\Vert _{\psi_{1}}\\
 & +\left\Vert \left\langle \Im\mb a,\mb{\phi}\right\rangle \left\langle \mb{\phi},\Im\mb b\right\rangle -\left\langle \Im\mb a,\Im\mb b\right\rangle \right\Vert _{\psi_{1}}\\
 & \leq C\left(\left\Vert \Re\mb a\right\Vert _{2}+\left\Vert \Im\mb a\right\Vert _{2}\right)\left(\left\Vert \Re\mb b\right\Vert _{2}+\left\Vert \Im\mb b\right\Vert _{2}\right)\\
 & \leq2C\left\Vert \mb a\right\Vert _{2}\left\Vert \mb b\right\Vert _{2},
\end{align*}
 and 

\begin{align*}
\left\Vert \left|\left\langle \mb a,\mb{\phi}\right\rangle \right|^{2}-\left\Vert \mb a\right\Vert _{2}^{2}\right\Vert _{\psi_{1}} & \leq\left\Vert \left|\left\langle \Re\mb a,\mb{\phi}\right\rangle \right|^{2}-\left\Vert \Re\mb a\right\Vert _{2}^{2}\right\Vert _{\psi_{1}}+\left\Vert \left|\left\langle \Im\mb a,\mb{\phi}\right\rangle \right|^{2}-\left\Vert \Im\mb a\right\Vert _{2}^{2}\right\Vert _{\psi_{1}}\\
 & \leq8\left(\left\Vert \Re\mb a\right\Vert _{2}^{2}+\left\Vert \Im\mb a\right\Vert _{2}^{2}\right)=8\left\Vert \mb a\right\Vert _{2}^{2},
\end{align*}
which completes the proof.\end{proof}
\begin{lem}
\label{lem:RademacherSubExp2}For a Rademacher vector $\mb{\phi}$
with iid entries and any given vector $\mb a$, we have 
\begin{align*}
\left\Vert \left|\left\langle \mb a,\mb{\phi}\right\rangle \right|\right\Vert _{\psi_{1}} & \leq8\left\Vert \mb a\right\Vert _{2}.
\end{align*}
\end{lem}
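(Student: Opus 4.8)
The plan is to follow the same template as the second half of the proof of Lemma~\ref{lem:RademacherSubExp1}: first prove the bound for a \emph{real} vector $\mb a$ directly from the definition of the Orlicz $1$-norm, and then reduce the complex case to the real case by splitting into real and imaginary parts. Throughout, recall that $\left\Vert X\right\Vert_{\psi_1}=\inf\{u>0\mid\mbb E[e^{|X|/u}]\leq 2\}$.

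For real $\mb a$ the quantity $\langle\mb a,\mb{\phi}\rangle=\sum_i a_i\phi_i$ is a Rademacher sum, so Hoeffding's inequality yields the sub-Gaussian tail $\mbb P\bigl(|\langle\mb a,\mb{\phi}\rangle|>t\bigr)\leq 2e^{-t^2/(2\left\Vert\mb a\right\Vert_2^2)}$ for all $t>0$, exactly the estimate already used in the proof of Lemma~\ref{lem:RademacherSubExp1}. I would then compute, via the layer-cake formula applied to $e^{|\langle\mb a,\mb{\phi}\rangle|/u}\geq 1$, that $\mbb E\bigl[e^{|\langle\mb a,\mb{\phi}\rangle|/u}\bigr]=1+\int_0^\infty\mbb P\bigl(|\langle\mb a,\mb{\phi}\rangle|>su\bigr)e^s\,\mr ds\leq 1+2\int_0^\infty e^{s-s^2u^2/(2\left\Vert\mb a\right\Vert_2^2)}\,\mr ds$. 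Choosing $u=c\left\Vert\mb a\right\Vert_2$ makes the exponent $s-c^2s^2/2$, and completing the square reduces the integral to a truncated Gaussian integral which is $\leq 1/2$ once $c$ is a moderate absolute constant; this gives $\mbb E[e^{|\langle\mb a,\mb{\phi}\rangle|/u}]\leq 2$ and hence $\left\Vert|\langle\mb a,\mb{\phi}\rangle|\right\Vert_{\psi_1}\leq c\left\Vert\mb a\right\Vert_2$. This step is essentially the computation in Lemma~\ref{lem:RademacherSubExp1}, but with the first power $|\langle\mb a,\mb{\phi}\rangle|$ in the exponent instead of its square.

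To pass to complex $\mb a$, I would use that $\mb{\phi}$ is real, so $\langle\mb a,\mb{\phi}\rangle=\langle\Re\mb a,\mb{\phi}\rangle-\mr i\langle\Im\mb a,\mb{\phi}\rangle$ and therefore $|\langle\mb a,\mb{\phi}\rangle|\leq|\langle\Re\mb a,\mb{\phi}\rangle|+|\langle\Im\mb a,\mb{\phi}\rangle|$. Applying the triangle inequality for $\left\Vert\cdot\right\Vert_{\psi_1}$ together with the real-case bound gives $\left\Vert|\langle\mb a,\mb{\phi}\rangle|\right\Vert_{\psi_1}\leq c\bigl(\left\Vert\Re\mb a\right\Vert_2+\left\Vert\Im\mb a\right\Vert_2\bigr)\leq c\sqrt2\,\left\Vert\mb a\right\Vert_2$. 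I expect the only delicate point to be the bookkeeping of constants: $c$ must be small enough that $c\sqrt2\leq 8$ yet large enough to drive the Gaussian integral below $1/2$. The minor obstacle here is that the crude estimate of extending the integral to the whole real line is slightly too lossy; keeping the truncation to $[0,\infty)$ (equivalently, retaining the factor $\Phi(1/c)\approx\tfrac12$) shows that $c=4$ already works, so that $c\sqrt2=4\sqrt2<8$ and the claimed bound $\left\Vert|\langle\mb a,\mb{\phi}\rangle|\right\Vert_{\psi_1}\leq 8\left\Vert\mb a\right\Vert_2$ follows.
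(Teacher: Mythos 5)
Your proposal is correct and takes essentially the same route as the paper: Hoeffding's tail bound combined with the layer-cake formula and completing the square for real $\mb a$, followed by the real/imaginary split, the triangle inequality for $\left\Vert \cdot\right\Vert _{\psi_{1}}$, and Cauchy--Schwarz for complex $\mb a$. The only divergence is constant bookkeeping: the paper bounds the truncated Gaussian integral by the full-line one and takes $u=4\sqrt{2}\left\Vert \mb a\right\Vert _{2}$ in the real case, whereas your retention of the truncation indeed makes $c=4$ admissible (one checks $1+2\cdot\tfrac{e^{1/32}}{4}\int_{-1/4}^{\infty}e^{-\tau^{2}/2}\,\mr d\tau\approx1.77<2$), so your final constant $4\sqrt{2}$ even slightly improves on the stated $8$.
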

\begin{proof}
We first treat the case of real vector $\mb a$ and then obtain the
general case from the real case. Using the Hoeffding's inequality
we can write 
\begin{align*}
\mbb E\left[e^{\frac{\left|\left\langle \mb a,\mb{\phi}\right\rangle \right|}{u}}\right] & =1+\int_{0}^{\infty}\mbb P\left(e^{\frac{\left|\left\langle \mb a,\mb{\phi}\right\rangle \right|}{u}}>e^{t}\right)e^{t}\mr dt\\
 & =1+\int_{0}^{\infty}\mbb P\left(\left|\left\langle \mb a,\mb{\phi}\right\rangle \right|>tu\right)e^{t}\mr dt\\
 & \leq1+2\int_{0}^{\infty}e^{t-\frac{t^{2}u^{2}}{2\left\Vert \mb a\right\Vert _{2}^{2}}}\mr dt\\
 & =1+2e^{\frac{\left\Vert \mb a\right\Vert _{2}^{2}}{2u^{2}}}\int_{0}^{\infty}e^{-\frac{1}{2}\left(\frac{tu}{\left\Vert \mb a\right\Vert _{2}}-\frac{\left\Vert \mb a\right\Vert _{2}}{u}\right)^{2}}\mr dt\\
 & \leq1+2\sqrt{2\pi}\frac{\left\Vert \mb a\right\Vert _{2}}{u}e^{\frac{\left\Vert \mb a\right\Vert _{2}^{2}}{2u^{2}}}.
\end{align*}
 In particular, at $u=4\sqrt{2}\left\Vert \mb a\right\Vert _{2}$
we have 
\begin{align*}
\mbb E\left[e^{\frac{\left|\left\langle \mb a,\mb{\phi}\right\rangle \right|}{4\sqrt{2}\left\Vert \mb a\right\Vert _{2}}}\right] & \leq1+\frac{\sqrt{\pi}}{2}e^{\frac{1}{64}}<2,
\end{align*}
which implies that $\left\Vert \left|\left\langle \mb a,\mb{\phi}\right\rangle \right|\right\Vert _{\psi_{1}}\leq4\sqrt{2}\left\Vert \mb a\right\Vert _{2}$. 

To obtain the complex version of the inequalities we can simply apply
the latter inequality to the real and imaginary parts of $\mb a$.
Then, we can write 
\begin{align*}
\left\Vert \left|\left\langle \mb a,\mb{\phi}\right\rangle \right|\right\Vert _{\psi_{1}} & \leq\left\Vert \left|\left\langle \Re\mb a,\mb{\phi}\right\rangle \right|\right\Vert _{\psi_{1}}+\left\Vert \left|\left\langle \Im\mb a,\mb{\phi}\right\rangle \right|\right\Vert _{\psi_{1}}\\
 & \leq4\sqrt{2}\left(\left\Vert \Re\mb a\right\Vert _{2}+\left\Vert \Im\mb a\right\Vert _{2}\right)\\
 & \leq8\left\Vert \mb a\right\Vert _{2},
\end{align*}
 where the first inequality is the triangle inequality, the second
inequality follows from the real version shown above, and the third
inequality is a simple application of the Cauchy-Schwarz inequality.
\end{proof}

\bibliographystyle{siam}
\bibliography{references}

\begin{thebibliography}{10}

\bibitem{ahmed_blind_2014}
{\sc A.~Ahmed, B.~Recht, and J.~Romberg}, {\em Blind deconvolution using convex
  programming}, Information Theory, IEEE Transactions on, 60 (2014),
  pp.~1711--1732.

\bibitem{burer_nonlinear_2003}
{\sc S.~Burer and R.~D. Monteiro}, {\em A nonlinear programming algorithm for
  solving semidefinite programs via low-rank factorization}, Mathematical
  Programming, 95 (2003), pp.~329--357.

\bibitem{burer_local_2005}
\leavevmode\vrule height 2pt depth -1.6pt width 23pt, {\em Local minima and
  convergence in low-rank semidefinite programming}, Mathematical Programming,
  103 (2005), pp.~427--444.

\bibitem{campisi_blind_2007}
{\sc P.~Campisi and K.~Egiazarian}, eds., {\em Blind Image Deconvolution}, CRC
  Press, 2007.

\bibitem{candes_phase_2013}
{\sc E.~Cand\`{e}s, X.~Li, and M.~Soltanolkotabi}, {\em Phase retrieval from
  coded diffraction patterns}.
\newblock {\tt arXiv}: 1310.3240, Oct. 2013.

\bibitem{candes-matrix-2010}
{\sc E.~Cand\`{e}s and Y.~Plan}, {\em Matrix completion with noise},
  Proceedings of the {IEEE}, 98 (2010), pp.~925--936.

\bibitem{candes-phaselift-2013}
{\sc E.~J. Cand\`{e}s, T.~Strohmer, and V.~Voroninski}, {\em Phaselift: Exact
  and stable signal recovery from magnitude measurements via convex
  programming}, Communications on Pure and Applied Mathematics, 66 (2013),
  pp.~1241--1274.

\bibitem{duarte_single-pixel_2008}
{\sc M.~F. Duarte, M.~A. Davenport, D.~Takhar, J.~N. Laska, T.~Sun, K.~F.
  Kelly, and R.~G. Baraniuk}, {\em Single-pixel imaging via compressive
  sampling}, {IEEE} Signal Processing Magazine, 25 (2008), pp.~83--91.

\bibitem{gross-recovering-2011}
{\sc D.~Gross}, {\em Recovering low-rank matrices from few coefficients in any
  basis}, Information Theory, {IEEE} Transactions on, 57 (2011),
  pp.~1548--1566.

\bibitem{gross-quantum-2010}
{\sc D.~Gross, Y.-K. Liu, S.~T. Flammia, S.~Becker, and J.~Eisert}, {\em
  Quantum state tomography via compressed sensing}, Phys. Rev. Lett., 105
  (2010), p.~150401.

\bibitem{kiraly_combinatorial_2012}
{\sc F.~Kir\'{a}ly and R.~Tomioka}, {\em A combinatorial algebraic approach for
  the identifiability of low-rank matrix completion}, in Proceedings of the
  29th International Conference on Machine Learning (ICML-12), J.~Langford and
  J.~Pineau, eds., ICML '12, New York, NY, USA, July 2012, Omnipress,
  pp.~967--974.

\bibitem{kirshner_psf_2013}
{\sc H.~Kirshner, F.~Aguet, D.~Sage, and M.~Unser}, {\em {3-D} {PSF} fitting
  for fluorescence microscopy: {I}mplementation and localization application},
  Journal of Microscopy, 249 (2013), pp.~13--25.
\newblock software available online at:
  \url{http://bigwww.epfl.ch/algorithms/psfgenerator/}.

\bibitem{kittle_multiframe_2010}
{\sc D.~Kittle, K.~Choi, A.~Wagadarikar, and D.~J. Brady}, {\em Multiframe
  image estimation for coded aperture snapshot spectral imagers}, Applied
  Optics, 49 (2010), pp.~6824--6833.

\bibitem{koltchinskii_2011_nuclear}
{\sc V.~Koltchinskii, K.~Lounici, A.~B. Tsybakov, et~al.}, {\em Nuclear-norm
  penalization and optimal rates for noisy low-rank matrix completion}, The
  Annals of Statistics, 39 (2011), pp.~2302--2329.

\bibitem{li_compresive_2012}
{\sc C.~Li, T.~Sun, K.~Kelly, and Y.~Zhang}, {\em A compressive sensing and
  unmixing scheme for hyperspectral data processing}, Image Processing, {IEEE}
  Transactions on, 21 (2012), pp.~1200--1210.

\bibitem{oymak_simultaneously_2015}
{\sc S.~Oymak, A.~Jalali, M.~Fazel, Y.~Eldar, and B.~Hassibi}, {\em
  Simultaneously structured models with application to sparse and low-rank
  matrices}, Information Theory, {IEEE} Transactions on, PP (2015).

\bibitem{rajwade_coded_2013}
{\sc A.~Rajwade, D.~Kittle, T.-H. Tsai, D.~Brady, and L.~Carin}, {\em Coded
  hyperspectral imaging and blind compressive sensing}, {SIAM} Journal on
  Imaging Sciences, 6 (2013), pp.~782--812.

\bibitem{raskar_coded_2006}
{\sc R.~Raskar, A.~Agrawal, and J.~Tumblin}, {\em Coded exposure photography:
  Motion deblurring using fluttered shutter}, in {ACM} {SIGGRAPH} 2006 Papers,
  {SIGGRAPH} '06, New York, {NY}, {USA}, 2006, {ACM}, pp.~795--804.

\bibitem{recht_2011_simpler}
{\sc B.~Recht}, {\em A simpler approach to matrix completion}, Journal of
  Machine Learning Research, 12 (2011), pp.~3413--3430.

\bibitem{rudelson_non-asymptotic-2010}
{\sc M.~Rudelson and R.~Vershynin}, {\em Non-asymptotic theory of random
  matrices: extreme singular values}, in Proceedings of the {International
  Congress of Mathematicians}, vol.~III, 2010, pp.~1576--1602.

\bibitem{rudelson_hanson-wright_2013}
\leavevmode\vrule height 2pt depth -1.6pt width 23pt, {\em {Hanson}-{Wright}
  inequality and sub-gaussian concentration}, Electronic Communications in
  Probability, 18 (2013), pp.~1--9.

\bibitem{sun_compressive_2009}
{\sc T.~Sun and K.~Kelly}, {\em Compressive sensing hyperspectral imager}, in
  Frontiers in Optics 2009/Laser Science XXV/Fall 2009 OSA Optics \& Photonics
  Technical Digest, Optical Society of America, 2009, p.~CTuA5.

\bibitem{tang_convex_2014}
{\sc G.~Tang and B.~Recht}, {\em Convex blind deconvolution with random masks},
  in Classical Optics 2014, {OSA} Technical Digest (online), Optical Society of
  America, June 2014, p.~CW4C.1.

\bibitem{tong_multichannel_1998}
{\sc L.~Tong and S.~Perreau}, {\em Multichannel blind identification: from
  subspace to maximum likelihood methods}, Proceedings of the {IEEE}, 86
  (1998), pp.~1951--1968.

\bibitem{vandervaart_weak_1996}
{\sc A.~W. {van der Vaart} and J.~A. Wellner}, {\em Weak Convergence and
  Empirical Processes: With Applications to Statistics}, Springer Series in
  Statistics, Springer-Verlag New York, 1996.

\end{thebibliography}

\end{document}